\newcommand{\version}{November 18, 2011}     
\swapnumbers \pagestyle{myheadings}
\theoremstyle{plain}
\newtheorem{thm}{THEOREM}
\newtheorem{cor}[thm]{COROLLARY}
\newtheorem{lem}[thm]{LEMMA}
\newtheorem{define}[thm]{DEFINITION}
\newtheorem{proposition}[thm]{PROPOSITION}
\newcommand{\beq}{\begin{equation}}
\newcommand{\eeq}{\end{equation}}
\def\beqa{\begin{eqnarray}}
\def\eeqa{\end{eqnarray}}
\newcommand{\non}{\nonumber}
\newcommand{\C}{{\mathbb C}}
\newcommand{\N}{{\mathbb N}}
\newcommand{\R}{{\mathbb R}}
\newcommand{\Z}{{\mathbb Z}}
\newcommand{\one}{{\mathbbm 1}}
\newcommand{\Tr}{{\rm Tr}}
\newcommand{\range}{{\rm range}}
\newcommand{\lspan}{{\rm lin. span}}
\newcommand{\Hh}{{\mathcal H}}
\newcommand{\Rr}{{\mathcal R}}
\newcommand{\A}{{\mathcal A}}
\newcommand{\M}{{\mathcal M}}
\newcommand{\U}{{\mathcal U}}
\newcommand{\D}{{\mathcal D}}
\newcommand{\TT}{{\mathcal T}}
\newcommand{\V}{{\mathcal V}}
\newcommand{\W}{{\mathcal W}}
\newcommand{\X}{{\mathcal X}}
\newcommand{\eps}{\varepsilon}
\renewcommand{\S}{{\mathcal S}}
\newcommand{\tdt}{\!\cdot\!}
\newcommand{\fe}{${\rm f.e. \,}$}
\def\bce{\begin{center}}
\def\ece{\end{center}}
\def\bit{\begin{itemize}}
\def\eit{\end{itemize}}
\date{\small\version}
\begin{document}
\markboth{\scriptsize{State Space Structures and Quantum Dynamical Semigroups \qquad  \version}}
         {\scriptsize{State Space Structures and Quantum Dynamical Semigroups \qquad  \version}}

\title{
\vspace{-80pt}
\bf{The structures of state space concerning Quantum Dynamical Semigroups}}
\author{\vspace{30pt} Bernhard Baumgartner$^1$, Heide Narnhofer$^2$
\\
\vspace{-4pt}
\small{Fakult\"at f\"ur Physik, Universit\"at Wien}\\
\small{Boltzmanngasse 5, A-1090 Vienna, Austria}}

\maketitle


\begin{abstract}
Each semigroup describing time evolution of an open quantum system
on a finite dimensional Hilbert space is related to a special structure of this space.
It is shown how the space can be decomposed into orthogonal subspaces: One part is related to decay,
some subspaces of the other subspace are ranges of the stationary states.
Specialities are highlighted where the complete positivity of evolutions is actually needed for analysis,
mainly for evolution of coherence.
Decompositions are done the same way for discrete as for continuous time evolutions,
but they may show differences:
Only for discrete semigroups there may appear cases of sudden decay and of perpetual oscillation.
Concluding the analysis we identify the relation of the state space structure to the processes
of Decay, Decoherence, Dissipation and Dephasing.
\\[20ex]
Keywords: open system, time evolution, Lindblad generator, semigroup, Schr\"{o}dingers cat
\\[3ex]
PACS numbers: \qquad  03.65.Yz , \quad 05.40.-a , \quad 42.50.Dv ,
\quad 03.65.Fd

\end{abstract}

\footnotetext[1]{\texttt{Bernhard.Baumgartner@univie.ac.at}}
\footnotetext[2]{\texttt{Heide.Narnhofer@univie.ac.at}}


\newpage

\section{Introduction}\label{intro}

Time evolutions of open quantum systems are not appearing as unitary maps,
in general they are irreversible.
The background of their emergence from the fundamental laws of quantum mechanics
is reviewed in Section \ref{setting}.
There is no relation to a generating Hamiltonian acting in the open system alone, so
algebraic aspects have to be studied in a new way.
The Hilbert space of pure state vectors is now insufficient, it is not the stage for time evolution in the Schr\"{o}dinger picture,
when the environment is not included in the description.
One has to consider the set of all states instead, both pure and mixed ones,
since mixing properties of states are in general changing during ``dissipative'' evolution.
This new aspect makes a fundamental difference to the evolution of closed systems.
In the Heisenberg picture such a time evolution does not preserve the complete algebraic structure
of the set of observables. Linear relations are preserved in time, but
commutation relations, functional expressions and product formulas may change.

In spite of loosing the basic properties of unitary time evolution
there remains a coarse grained relation of existing stationary states and of invariant operators
to a decomposition of the Hilbert space into mutually orthogonal subspaces.
For continuous time -- for which
the generators of completely positive norm continuous semigroups,
representing such time evolutions of open systems with a finite dimensional Hilbert space
have been characterized in  \cite{L76, GKS76} --
we have already found such relations:
In \cite{BNT08, BN08} we presented a general mathematical discussion of
these GKS-Lindblad equations. There we gave also references to papers with related results.
This analysis included studies of continuity, of the geometry of paths, details of decay;
all of this related to the decomposition of the Hilbert space into orthogonal subspaces.
Importance of this orthogonal decomposition has been noted, f.e. in \cite{OC10}.

Here we drop the assumption of a continuous time, objects of our study
are also discrete-time semigroups. They are gaining interest as engineering processes
to be used in handling of Quantum Information,
see \cite{KLV00, BG07, SW10} and references therein.
The main results of structuring the Hilbert space are the same for discrete and continuous time;
but here we find special cases of perpetual oscillations and of sudden decay
which may appear only in discrete, not in continuous time.
We offer a new look onto the algebraic essentials
concerning stationary states and invariant operators,
and we highlight the specialities where complete positivity enters.

Our studies concern the interplay, in the course of time evolution, between the sets of states and
the sets of observables with the underlying Hilbert space.
A course grained structure of the Hilbert space emerges, a decomposition into
mutually orthogonal subspaces. There is a precise relation of the state space structure to the processes
of Decay, Decoherence, Dissipation and Dephasing.
Decay transports states from one special subspace to other, remaining subspaces.
Inside each of those there happens Dissipation and Dephasing;
``between them'', appearing in off-diagonal Matrix-blocks of density matrices,
Decoherence takes place.

Each of these processes has often been studied in special models.
What is new, as far as we know, is the analysis of the general case,
where all these processes may happen simultaneously.
Also the process of Oscillation, emerging on another level then in unitary evolution,
seems not to have been discussed earlier in all generality.
The theorems presented in this paper are model-independent and general, do not depend on special assumptions,
except finiteness of the dimension of the Hilbert space.
We get moreover a new result on the general presence of Decoherence,
a new aspect of its nowadays widely accepted universality, \cite{BHS01}.
It sheds a new light on Schr\"{o}dinger's cat-paradox:
In the framework of a quantum-dynamical semigroup, each coherence between dead and alive cat has to decay.
This is discussed at the end under ``Conclusions''.

\section{The setting of quantum dynamical semigroups in physics}\label{setting}

There is a deep conceptual problem in physics: How can one reconcile reversibility of microscopic mechanics
with irreversibility of macroscopic behavior?
Concerning the mathematical aspect, relevant for the present review, is this special part of the question:
How can phenomenological evolution equations, as, \fe for exponential Decay or for Dissipation in form of diffusion,
giving only semigroups, emerge from the
fundamental laws, which form, when integrated, unitary groups?

W. Pauli was the first one who addressed this problem in quantum mechanics and he introduced the master equation \cite{P28}.
He replaced Boltzmann's ``Sto{\ss}zahlansatz'' by a random-phase-approximation,
and he emphasized the necessity for such extra assumptions to pave the way from reversibility to irreversibility.
In the following there were attempts to avoid the assumptions of a repeatedly appearance of random phases.
From those lines of thought two are of relevance for this short review:
Van Hove, \cite{vH55}, demonstrated the importance of considering different timescales.
One for the ``microscopic'' laws, another one for the ``macroscopic'' evolution,
\fe of cells, in a coarse-grained view of the system.
The weakness of the mutual couplings of the cells, indicated by a parameter $\lambda$, leads to a difference
in the two timescales, appearing there as a factor $\lambda^2$.
Another attack on this problem came from S. Nakajima and R. Zwanzig, \cite{N58,Z60},
using a projection operator to part ``relevant'' from ``irrelevant'' parameters,
combined with an assumption of special initial conditions.
In the case when the total system consists of a part which is to be described and an environment
to which this ``open'' part is coupled,
and when the initial state is a product state, this gives a generalized master equation for open systems.
It contains a memory kernel, not present in phenomenological equations. This is often seen as a drawback.
It is overcome by combining the Nakajima-Zwanzig method with a van Hove weak-coupling
limit, or with other limiting procedures.

A large amount of studies on non-equilibrium physics has been done on this basis,
\footnote{
But there are recent developments, allowing for more general frameworks,
as considering non-Markovian dynamics, \cite{S99},
examining couplings of system with environment beyond the weak-coupling limit, \cite{BF01},
and investigating evolutions with correlated initial conditions, \cite{T07}.}
see the reviews \cite{P79,LS78,S80}.
Mathematically self-contained and rigorous studies on the now ``classical''
weak-coupling limit have been performed by E.B. Davies, \cite{D74,D76,D76b}.

Going the route from unitary dynamics for the total system to irreversible ``dissipative'' dynamics
of its open subsystem is still a difficult task.
Often one wants to formulate evolution equations for open systems directly, without explicit
reference to an environment. Here one has to pose the principle question
on the conditions which have to be fulfilled by such equations, in order to be compatible with
a hidden or unknown unitary evolution in the background.
In the case of linear equations - we are not discussing nonlinear equations as appearing \fe in Thomas-Fermi,
Vlasov or Gross-Pitaevskii theory - the property of complete positivity
appears in addition to the necessity of preserving positivity.
This has been pointed out by K. Kraus, \cite{K71}.
It is not only a consequence of the mapping's background in unitary evolution of open system plus environment,
also the possible case when the open system is entangled with a third system,
independent of the environment, demands it, \cite{BF05}.
The general form for a generator of a completely positive semigroup has been
characterized for finite dimensions of the Hilbert space by Gorini, Kossakowski and Sudarshan, \cite{GKS76},
for norm-continuous semigroups in infinite dimensions by G. Lindblad, \cite{L76}.
\footnote{
The problem of characterizing a general form of strongly continuous,
not norm continuous semigroups \cite{D77,A07}, is still not completely solved.}
Working with the ``Lindblad Equation'' has become a standard tool in the theory of open systems,
and many investigations have been done on this basis,
see \cite{AF01,BP02,BF03,AL07} for surveys.

Probably the simplest of the phenomenological laws containing an ``arrow of time'' is the law of
exponential decay of excited states.
Often it is presented as the most simple Lindblad-equation, treating a two-level system.
It has been ``explained'' early in the development of
quantum mechanics in the theory of resonances and using Fermi's Golden Rule.
Revisiting these theories with mathematical rigor
\footnote{
See \cite{JP95,SW98,MS99,BFS99,DJ04,CGH06,LL09,FGS11,DJN11} and references therein,
 - see also the three volumes \cite{AJP06} for surveys.}
brings examples of deriving equations of decay from Hamiltonian dynamics, \cite{JP97},
but it also demonstrates shortcomings of the old theories:
\emph{Exponential decay} is a simplification. Without approximations and limits it holds neither at very short nor at
very long times, unless the spectrum of the Hamiltonian extends from minus to plus infinity.
This has been ``known'' early, has been proven by D.N. Williams, \cite{W71}, and has then been further studied with details,
see \cite{MS77} and references therein.
\emph{Fermi's Golden Rule} may hold in the framework of perturbation theory,
but in some cases it has to be modified, as has been shown in a simple example in \cite{B96}, in generality in \cite{DJN11}.

Aside from ``Decay'' there are two other themes of general interest:
Approach to equilibrium and decoherence.

``Approach to equilibrium'', ``return to equilibrium'' and ``entropy production'' have now replaced the ``H-theorem''
as projects for investigation.
Recent papers in this area are  \cite{BFS00,DJ03,F07}.
In the framework of quantum dynamical semigroups it has been studied in \cite{F78,FR06}.
Regarding the mathematical aspect of approach to a uniquely defined state of the open system,
it can be considered together with the existence of a non-equilibrium steady state,
which has been proven in a model in \cite{JP02}.

``Decoherence'' is a hot theme. It has to do with the transition from the quantum to the classical,
and it is extensively studied, see \fe \cite{Z03,S04}. A satisfactory theory
seems not to have been developed up to now, according to the meaning stated f.e. in \cite{CSU11}.
Early mathematical studies on this theme date back to the seventies.
Especially H.Primas noted the necessity to regard already large molecules as open systems
in interaction with and entangled with the environment, \cite{P75}.
Modern rigorous works, relevant in the context of open-system dynamics are
\cite{BO03}, the chapter in \cite{AJP06} vol.III by the same authors, \cite{MSB07,O08} and the above mentioned \cite{CSU11}.

Each of these three themes appears in this paper, which is a study on processes in the framework
of dissipative semigroups of evolution, without asking any more on their background.
Approach to a Stationary State is given as Decay, Dissipation and Dephasing.
Together with Decoherence these processes emerge generally as special mathematical aspects of the structuring
of the Hilbert space, related to the quantum dynamical semigroup.
In the framework of general open-system dynamics no reference to a perturbed Hamiltonian is necessary.
Decaying states may be either mixed or pure. Stationary states may be present in multitude.
Their existence is related to a splitting of the Hilbert space into orthogonal subspaces.
Decoherence between subspaces turns out to be a necessity,
unless there is a unitary dynamical equivalence between them.
A new aspect appears, which, to our knowledge, has not yet been studied in this framework:
It is Oscillation. Here it makes a difference whether the evolution is continuous
in time, or a discrete ``bang-bang'' process.
Both cases are completely classified.

\section{Setup and results}\label{setup}

Consider an $N$-dimensional Hilbert space $\Hh$, equipped with an algebra $\A$
of observables -- representable as self-adjoint $N\times N$ matrices --, and states --
representable as positive  $N\times N$ matrices with trace equal to one.

Time evolution of states appears in the form of
maps $\TT^t$, $t\geq 0$, either $t\in \R_+$ or $t\in \Z_+$,
which have the properties of forming a semigroup
and being compatible with mixing and  with convex decomposition:
\beqa
\ \TT^{t+s}(\rho)&=&\TT^t(\TT^s(\rho)),\label{semigroup}
\\ \TT^t(\lambda\rho+(1-\lambda)\sigma)&=&\lambda\TT^t(\rho)+(1-\lambda)\TT^t(\sigma).\label{mixing}
\eeqa
We define extended operators $\TT^t$ which act
on the whole vector-space of $N\times N$ matrices.
Condition (\ref{mixing}) is extended to give linearity;
the original definition of mapping states onto states appears for the extended operators
as the third and the fourth item in the following:
\begin{define}\label{defTT}
{\bf Properties of the time evolution in the Schr\"{o}dinger picture:}
The maps $\TT^t$ have the properties that they
\begin{enumerate}
  \item  form a semigroup, $\TT^{t+s}(\rho)=\TT^t(\TT^s(\rho)),$
  \item  can be extended as linear super-operators, acting on $N\times N$ matrices,
  \item  preserve positivity, $\rho\geq 0 \rightarrow \TT^t(\rho)\geq 0,$
  \item  preserve normalization, $\Tr(\TT^t(\rho))=\Tr(\rho),$
  \item  are completely positive.
\end{enumerate}

\end{define}
\medskip
The extension is, using $r_j\in\R_+$:
$$\TT^t (r_1\rho_1 + i\,r_2\rho_2 - r_3\rho_3 - i\,r_4\rho_4) :=
\TT^t (r_1\rho_1) + i\,r_2\TT^t (\rho_2) - r_3\TT^t (\rho_3) - i\,r_4\TT^t (\rho_4),$$

In the Heisenberg picture there are the dual maps $\TT^{t\dag}$ acting on observables.
They too can be extended as maps acting on the vector-space of matrices.
The properties are almost the same as for the $\TT^t$, only the preservation of the trace has to be replaced by

\medskip
 \textit{4$^\dag .$}\,\, \textit{ the maps }  $\TT^{t\dag}$ \textit{ preserve the unity, they are unital,} $\TT^{t\dag} (\one)=\one.$

\medskip

The properties (2), (3), (4), (5) of  $\TT^t$ are necessary and sufficient for its dilation,
i.e. extension to a unitary map of a larger system, \cite{S55,C75}.
The condition of complete positivity is therefore essential for maps representing quantum dynamics
(see \cite{K71}). Moreover, it makes a fundamental difference to merely positive maps when the system under study is
entangled with other systems, (see \cite{BF05}).
Also inside the system  it enters in special details concerning the entanglement of subsystems.
To highlight these specialities we treat  complete positivity separately in  Section \ref{Sectioncohcp},
namely in Proposition \ref{cohsplit}, in the following Proposition Lemma and Theorem,
in proving Theorem \ref{coherence}, and then Theorem \ref{splittingxc},
which gives a most detailed characterization of the splitting properties of the Hilbert space.
It will not be used before that, in the derivation of
Theorems (\ref{splitting}), (\ref{splitting2}), (\ref{splitting3}):

\begin{thm} \textbf{Splitting of the Hilbert space into two subspaces:}\label{splitting}                        

\begin{enumerate}
  \item The set \beq \label{Dspace1}
  \D:=\{\psi: \quad \forall \,\rho\quad
  \langle\psi|\TT^t (\rho)|\psi\rangle\rightarrow 0 \quad\rm{as}\quad t\rightarrow\infty\}
  \eeq
  is a subspace of $\Hh$. It is the largest subspace
  such that for each normalized density matrix $\rho$
\beq\label{characterization}
\lim_{t\rightarrow\infty}\Tr(P_\D\TT^t(\rho))=0,\qquad {\rm and}\qquad
\lim_{t\rightarrow\infty}\Tr(P_\Rr\TT^t(\rho))=1,
\eeq
where $P_\D$ denotes the projector onto $\D$, and  $P_\Rr$ the projector onto $\Rr =\D^\perp$.
$\D$ may be trivial, containing nothing but the vector zero, but it is in any case not the whole Hilbert space $\Hh$.
  \item The space $\Rr$, the orthogonal complement to $\D$, has the property
  \beq
  \forall \psi\in\Rr\quad\exists\,\rho:\quad \langle\psi|\TT^t (\rho)|\psi\rangle=const.>0.
  \eeq
  \item The Decay in Equation (\ref{characterization}) is monotonous. \label{Dspace3}
\end{enumerate}
\end{thm}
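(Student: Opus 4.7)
The plan is to prove the three parts in order, with Part 2 as the main analytical step and Parts 1 and 3 as mostly operator-theoretic consequences.

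For Part 1, I would first show $\D$ is a linear subspace: scalar closure is trivial, and additive closure follows from the Cauchy--Schwarz inequality for the positive operator $\TT^t(\rho)$, which dominates the cross term in $\langle\psi_1+\psi_2|\TT^t(\rho)|\psi_1+\psi_2\rangle$ by the geometric mean of the diagonals. The trace characterization is then immediate: expanding $\Tr(P_\D\TT^t(\rho))=\sum_i\langle e_i|\TT^t(\rho)|e_i\rangle$ on an orthonormal basis of the finite-dimensional $\D$ gives a finite sum of non-negative decaying terms. For the maximality claim I would use the observation that any subspace $\D'$ with the trace-decay property contains only vectors $\psi$ for which $|\psi\rangle\langle\psi|\leq P_{\D'}$, forcing $\langle\psi|\TT^t(\rho)|\psi\rangle\to 0$, hence $\D'\subseteq\D$. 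Trace preservation then gives both $\Tr(P_\Rr\TT^t(\rho))\to 1$ and $\D\neq\Hh$.

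For Part 2, I would fix $\psi\in\Rr$ and aim to produce a stationary state $\rho_*$ with $\langle\psi|\rho_*|\psi\rangle>0$. Since $\psi\notin\D$, some $\rho_0$ and subsequence $t_n\to\infty$ keep $\langle\psi|\TT^{t_n}(\rho_0)|\psi\rangle$ bounded away from $0$; compactness of the state set then yields a further subsequential limit $\rho_1$ with $\langle\psi|\rho_1|\psi\rangle>0$. The candidate $\rho_*$ is the Cesaro mean $\lim_{T\to\infty}\frac{1}{T}\int_0^T\TT^t(\rho_1)\,dt$ (with the analogous discrete sum in the bang-bang case), which exists and is stationary by the mean ergodic theorem in finite dimensions. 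The main obstacle will be to show $\langle\psi|\rho_*|\psi\rangle>0$: \emph{a priori}, oscillatory components of the orbit could cancel and make the mean vanish. To rule this out I would decompose $\TT^t(\rho_1)$ spectrally into a strictly decaying part plus a finite-frequency peripheral part, so that $f(t):=\langle\psi|\TT^t(\rho_1)|\psi\rangle$ is a decay plus a trigonometric polynomial. Since $f\geq 0$ and the trigonometric polynomial is recurrent, that polynomial is itself non-negative everywhere; and a non-negative trigonometric polynomial with zero mean must vanish identically. If the Cesaro mean of $f$ were zero, $f$ would then decay to $0$, contradicting $\limsup f>0$ inherited from $\langle\psi|\rho_1|\psi\rangle>0$ by recurrence of $\rho_1$ in its $\omega$-limit set.

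For Part 3, I would reduce monotonicity of $t\mapsto\Tr(P_\D\TT^t(\rho))$ via the semigroup property to the operator inequality $\TT^{s\dag}(P_\Rr)\geq P_\Rr$ for all $s\geq 0$. Using Part 2 together with convexity of the stationary-state set, a finite convex combination gives a stationary state $\rho_*$ with $\supp\rho_*=\Rr$. For any unit $\psi\in\Rr$, $|\psi\rangle\langle\psi|\leq c\rho_*$ for some $c>0$, so by positivity $\TT^s(|\psi\rangle\langle\psi|)\leq c\rho_*$ is still supported in $\Rr$, giving $\langle\psi|\TT^{s\dag}(P_\Rr)|\psi\rangle=1$ for every unit $\psi\in\Rr$. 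Combined with $0\leq\TT^{s\dag}(P_\Rr)\leq\one$ from positivity and unitality, the standard block-matrix positivity fact that a positive operator whose $\Rr$-diagonal block equals the identity must have vanishing $(\Rr,\D)$ off-diagonal blocks upgrades this to $\TT^{s\dag}(P_\Rr)\geq P_\Rr$, as required.

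The principal difficulty is Part 2, and specifically the exclusion of oscillatory cancellation in the mean ergodic limit. Finite-dimensionality of $\Hh$ enters essentially there, through the finiteness of the peripheral spectrum and the resulting reduction to a statement about non-negative trigonometric polynomials. Once Part 2 is in place, a faithful stationary state on $\Rr$ is available and Parts 1 and 3 become essentially bookkeeping.
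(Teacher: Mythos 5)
Your proposal is correct and rests on the same key ideas as the paper's proof: closure of $\D$ under addition via positivity (you use Cauchy--Schwarz for the positive operator $\TT^t(\rho)$ where the paper uses the operator inequality $Q_{\psi+\phi}\leq 2Q_\psi+2Q_\phi$); a Ces\`aro-mean construction of a stationary state, where the decisive point in both treatments is that a non-negative, bounded, almost-periodic matrix element with zero mean must vanish identically (the paper phrases this as ``the mean of a positive operator is zero iff its limit is zero'' and works in the Heisenberg picture with $\overline{Q_\psi}$ and its $\Rr$-block, rather than with an $\omega$-limit point of a Schr\"odinger-picture orbit); and, for monotonicity, the block-positivity fact that a positive matrix with an extremal diagonal block has vanishing adjacent off-diagonal blocks --- you apply it to $\TT^{s\dag}(P_\Rr)$, while the paper applies the dual, Schr\"odinger-picture version via the matrices $\kappa_{\eps\pm}=\eps^{-1}\rho_\V\pm\rho_C+\eps\rho_\W$ in Lemma 12. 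The one step you should tighten is the assertion in Part 3 that ``a finite convex combination gives a stationary state with $\supp\rho_*=\Rr$'': your Part 2 produces a possibly different stationary $\rho_\psi$ for each $\psi$, and positivity of the diagonal elements over a basis of $\Rr$ does not by itself force full range. Either iterate using $\range(\rho+\sigma)=\range(\rho)+\range(\sigma)$ for positive operators until the range stabilizes at $\Rr$, or run your Part 2 argument once with the maximally mixed initial state, which dominates a positive multiple of every state and so works for all $\psi\in\Rr$ simultaneously; the paper takes essentially this second route, averaging a state whose range is all of $\Rr$.
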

In the Heisenberg picture the characterizations are:
\beq
\D=\{\psi: \quad \TT^{t\dag}(|\psi\rangle\langle\psi|)\rightarrow 0 \quad\rm{as}\quad t\rightarrow\infty\}\label{Dspace};
\eeq
\beq\label{HeisenbergD}
\D \textrm{ is the maximal subspace } \V \textrm{ for which }\quad \lim_{t\rightarrow\infty}(\TT^{t\dag}(P_\V))=0,
\eeq
\beq\label{HeisenbergR}
\Rr \textrm{ is the minimal subspace } \V \textrm{ for which }\quad \lim_{t\rightarrow\infty}(\TT^{t\dag}(P_\V))=\one.
\eeq
Loosely speaking, $\D$ contains all the decaying states, while no state in $\Rr$ is decaying completely,
and no part of any state leaves $\Rr$.
The proof is presented in parts, in Subsection \ref{details1}.

In the following we talk about
the \textbf{range} of a state $\rho$, meaning the range of the density matrix which represents this state.
We note that in the context of quantum systems on spaces with infinite dimension, \fe in \cite{H01,FR06},
instead of ``range'' the naming ``support'' is used, generalizing the notion ``support of a probability measure''
from commutative to non-commutative theories.

The investigations on the structure of $\Hh$ are investigations on the ranges
of states in the course of time.
The subspace $\D$ contains the ranges of all decaying states.
The subspace $\Rr$, containing the ranges of all stationary states,
can further be decomposed according to the structure of the set of stationary states.
This structure is related to the set-structure of the ranges.
We call the range of a stationary state  a \textbf{minimal stationary range} if it
does not contain a smaller range of a stationary state.
Again there are orthogonality relations and completeness, giving the second main theorem,
proven in Subsection \ref{details2}:

\begin{thm}\textbf{Splitting of the non-decaying subspace:}\label{splitting2}   

The space  $\Rr$ can be decomposed into a
direct sum of mutually orthogonal subspaces,
where each one of them is a minimal stationary range.
\end{thm}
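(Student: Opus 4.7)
The plan is to build the decomposition iteratively: extract a stationary state $\rho_1$ with minimal range $\Rr_1$, show that the orthogonal complement $\W:=\Rr\ominus\Rr_1$ is invariant under $\TT^t$, and recurse on $\W$. A stationary state exists because $\TT^t$ is a continuous affine self-map of the compact convex set of density matrices (most concretely, any Cesaro average $\bar\rho=\lim_{T\to\infty}\frac{1}{T}\sum_{t=0}^{T-1}\TT^t(\rho_0)$ is stationary), and by Theorem~\ref{splitting} its range lies in $\Rr$. If a stationary $\rho$ does not already have minimal range, there is another stationary $\sigma$ with $\supp\sigma\subsetneq\supp\rho$; then the reduction $t_*:=\sup\{t\geq 0:\rho-t\sigma\geq 0\}$ produces a positive stationary operator $\rho-t_*\sigma$ of strictly smaller range. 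Iterating in the finite-dimensional lattice of ranges yields a stationary $\rho_1$ with minimal range $\Rr_1$, and the same reduction shows that any stationary state supported in $\Rr_1$ is proportional to $\rho_1$.

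\medskip

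Let $P_1$ be the projector onto $\Rr_1$. For $\psi\in\Rr_1^\perp$, $\Tr(\TT^{t\dag}(|\psi\rangle\langle\psi|)\rho_1)=\Tr(|\psi\rangle\langle\psi|\TT^t(\rho_1))=0$, so $\TT^{t\dag}(|\psi\rangle\langle\psi|)$ is supported in $\Rr_1^\perp$; summing over an orthonormal basis gives $\TT^{t\dag}(\one-P_1)$ supported in $\Rr_1^\perp$, i.e.\ $\TT^{t\dag}(P_1)=P_1+Y_t$ with $Y_t\geq 0$ supported in $\Rr_1^\perp$ and monotone non-decreasing in $t$. Dualizing, states supported in $\Rr_1$ stay in $\Rr_1$ under $\TT^t$. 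The central task --- and the main obstacle --- is to promote this to invariance of $\W$: for $\sigma$ supported in $\W$, Theorem~\ref{splitting} already forces $\TT^t(\sigma)$ to stay in $\Rr$, and I must show it further stays in $\Rr_1^\perp$. Equivalently, $Y_t$ vanishes on $\W$, or $\lim_t\Tr(P_1\TT^t(|\psi\rangle\langle\psi|))=0$ for every $\psi\in\W$. The approach is to form the Cesaro limit $\bar\sigma$ of $\TT^t(|\psi\rangle\langle\psi|)$, a stationary state in $\Rr$; any nonzero $\Rr_1$-component must, via the maximal-$c$ reduction $\bar\sigma-c\rho_1$ together with the uniqueness of $\rho_1$ inside $\Rr_1$, collapse, because otherwise one would produce either a stationary state with support strictly inside $\Rr_1$ (impossible by minimality) or a stationary state whose restriction to $\Rr_1$ is not a multiple of $\rho_1$ (contradicting that uniqueness). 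This interplay between stationarity, minimality and the arithmetic of positive operators is the delicate point of the proof.

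\medskip

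Once $\W$ is invariant, $\TT^t$ restricts to a positive, trace-preserving semigroup on matrices supported in $\W$, whose own residual subspace coincides with $\W$ (no weight escapes $\Rr$). Applying the extraction-and-invariance procedure to $\W$ produces $\Rr_2\perp\Rr_1$, then to $\W\ominus\Rr_2$ produces $\Rr_3$, and so on; since the dimension strictly decreases at each step, the recursion terminates after finitely many rounds and yields the orthogonal direct-sum decomposition $\Rr=\Rr_1\oplus\Rr_2\oplus\cdots\oplus\Rr_k$ into mutually orthogonal minimal stationary ranges, as claimed.
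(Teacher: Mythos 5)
Your overall strategy (peel off one minimal stationary range at a time and recurse on the orthogonal complement inside $\Rr$) is structurally the same as the paper's, and your preparatory steps are sound: the Cesaro construction of stationary states, the reduction $\rho-t_*\sigma$ giving minimality and the uniqueness of the stationary state on a minimal range, the invariance of $\Rr_1$ itself, and the monotonicity of $Y_t=\TT^{t\dag}(P_1)-P_1$. But the step you yourself flag as the delicate point --- that $\W=\Rr\ominus\Rr_1$ is invariant, equivalently that $Y_\infty:=\lim_t Y_t$ vanishes on $\W$ --- is not actually proved by your sketch. The Cesaro limit $\bar\sigma$ of $\TT^t(Q_\psi)$ is just \emph{some} stationary state, and stationary states with a nonzero $\Rr_1$-component are perfectly legitimate ($\rho_1$ itself is one); so any contradiction must exploit $\psi\perp\Rr_1$, which your reduction argument never uses. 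Concretely: (a) the compression $P_1\bar\sigma P_1$ is not known to be stationary at this stage --- that the diagonal blocks evolve autonomously is exactly the block-independence you are trying to establish; (b) at the maximal $c_*$ the operator $\bar\sigma-c_*\rho_1$ is a positive stationary operator whose range need be neither contained in $\Rr_1$ nor orthogonal to it (two subspaces can intersect trivially without being orthogonal, and this non-orthogonal configuration genuinely occurs --- it is the source of the ``rings'' of enclosures in Proposition~\ref{ring}), so neither of your two advertised contradictions is forced.

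The missing ingredient, supplied in the paper by Proposition~\ref{rspace}.\ref{rspace2}, is the existence of a single stationary state $\bar\rho$ whose range is \emph{all} of $\Rr$ (obtained by averaging a full-rank state and showing, via the Heisenberg-picture means $\overline{Q_\psi}$, that no direction of $\Rr$ is lost). With that in hand your own framework closes immediately: $Z:=\lim_t\TT^{t\dag}(P_1)=P_1+Y_\infty$ satisfies $\Tr(Z\bar\rho)=\Tr(P_1\bar\rho)$ by stationarity of $\bar\rho$, hence $\Tr(Y_\infty\bar\rho)=0$; since $Y_\infty\geq 0$ and $\range(\bar\rho)=\Rr$ this forces $Y_\infty P_\Rr=0$, and your monotonicity then gives $\Tr(P_1\TT^t(Q_\psi))=0$ for all $t$ and all $\psi\in\W$. (The paper instead shows directly that the $\W$-block of $\bar\rho$ is itself stationary with range $\W$, using the positivity trick of Lemma~\ref{monotone} to control the coherence block.) Without some version of this full-range stationary state, or of Lemma~\ref{monotone}, the invariance of $\W$ --- and with it the entire recursion --- remains an unproved assertion.
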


The conditions on $\TT^t$ listed in Definition \ref{defTT} do not exclude
unitary evolutions, and one can ask about the form taken by the definitions and theorems in these
well known special cases.
For unitary evolutions there is no decay, and $\Rr$ is the whole Hilbert space.
The unitary operators form a group, generated by a Hamiltonian.
(In case of a discrete group, the Hamiltonian is uniquely defined, if
one restricts its spectrum to a certain finite interval.)
Those density matrices which can be diagonalized simultaneously with the Hamiltonian
give the stationary states.
A minimal stationary range is a one-dimensional space containing an eigenvector
of the Hamiltonian, and the decomposition of $\Rr$ is the spectral decomposition.
It is unique, unless there appears a degeneracy in the spectrum of the Hamiltonian.
Now, for general semigroups $\TT^t$, there is an analogous situation -
the proof is in Subsection \ref{details3}:

\begin{thm}\textbf{Equivalence of splittings:}\label{splitting3}   

If there are several possibilities to decompose $\Rr$ into mutually orthogonal minimal stationary ranges,
then all these decompositions, together with the invariant states, are unitarily equivalent
under unitary transformations commuting with the time evolution restricted to
states with range in $\Rr$.
\end{thm}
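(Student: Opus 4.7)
The plan is to proceed in three steps: (i) prove that each minimal stationary range carries a stationary state that is unique up to positive scaling; (ii) use an ergodic (Cesàro) average to align the two candidate decompositions; and (iii) construct the intertwining unitary from stationary off-diagonal blocks between equivalent pieces. Notably, the argument should use only positivity of $\TT^t$, not complete positivity, in accordance with the paper's stated plan to defer CP to Section \ref{Sectioncohcp}.

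For step one, suppose $\sigma$ and $\tau$ are stationary states both with range equal to a minimal stationary range $\Rr_i$. Letting $\lambda_{*}$ denote the smallest element of the spectrum of $\tau^{-1/2}\sigma\tau^{-1/2}$ (inverses taken on $\Rr_i$), the operator $\sigma-\lambda_{*}\tau$ is a non-zero positive stationary operator whose range is a proper subspace of $\Rr_i$; after normalization this contradicts minimality of $\Rr_i$. Hence the minimal stationary state $\rho_i$ associated with $\Rr_i$ is unique.

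For step two, fix two decompositions $\Rr=\bigoplus_i\Rr_i=\bigoplus_j\Rr_j'$ with unique minimal stationary states $\rho_i,\rho_j'$. I would apply the Cesàro mean $\E$ of $\TT^t$, which converges in finite dimensions to a projection onto the space of stationary operators. For any normalized $\psi\in\Rr_j'$ the stationary operator $\E(|\psi\rangle\langle\psi|)$ has block diagonal contributions in the decomposition $\bigoplus_i \Rr_i$ which, by the uniqueness of step one, are positive multiples of the $\rho_i$. Collecting these relations partitions both index sets into common equivalence classes $\alpha$ whose combined subspaces $\Rr_\alpha=\bigoplus_{i\in\alpha}\Rr_i=\bigoplus_{j\in\alpha}\Rr_j'$ coincide; the problem then reduces to a single equivalence class.

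For step three, inside $\Rr_\alpha$ positivity of stationary states on $\Rr_i\oplus\Rr_j$ (through the Schur-complement criterion) forces every stationary operator mapping $\Rr_j$ into $\Rr_i$ to have the form $X_{ij}=\rho_i^{1/2}W_{ij}\rho_j^{1/2}$, where $W_{ij}$ is an isometry that intertwines the restrictions of $\TT^t$ to the two ranges. Choosing a bijection $\pi$ matching the two decompositions inside $\alpha$, these isometries assemble into a unitary $U_\alpha:\Rr_\alpha\to\Rr_\alpha$ with $U_\alpha\rho_iU_\alpha^{*}=\rho'_{\pi(i)}$, and the direct sum $U=\bigoplus_\alpha U_\alpha$ is the desired unitary; the intertwining property of the $W_{ij}$ gives $\TT^t(U\rho U^{*})=U\TT^t(\rho)U^{*}$ on states with range in $\Rr$.

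The main obstacle is step three, specifically the factorization of stationary off-diagonal blocks as $X_{ij}=\rho_i^{1/2}W_{ij}\rho_j^{1/2}$ with $W_{ij}$ \emph{dynamically} intertwining. This combines a positivity/Schur-complement bound controlling the off-diagonal block by the diagonal ones with a detailed analysis of the fixed-point equation $\TT^t(X)=X$ on operators between minimal stationary ranges; once this factorization is secured the construction of $U$ and its commutation with $\TT^t$ are essentially automatic.
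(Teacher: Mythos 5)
Your overall architecture (uniqueness of the stationary state on a minimal range, reduction to ``interacting'' clusters of minimal ranges, extraction of an intertwining unitary from stationary coherence) parallels the paper's route through Subsection \ref{details3}, and your steps one and two are essentially sound: step one is Proposition \ref{equi}.\ref{equi2a}, and the Ces\`aro-mean reduction in step two is a legitimate variant of the paper's iterative construction of the subspace $\X$ spanned by mutually non-orthogonal minimal enclosures (it does rely on the block-diagonal parts of a stationary operator being stationary, i.e.\ Proposition \ref{blockindep}, which you should cite or prove). But step three, which you yourself flag as ``the main obstacle,'' is precisely where the entire content of the theorem lives, and it is asserted rather than proved. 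The Schur-complement bound only gives $X_{ij}=\rho_i^{1/2}K_{ij}\rho_j^{1/2}$ with $\|K_{ij}\|\leq 1$; nothing in your argument shows that $K_{ij}$ is an isometry, and --- more seriously --- nothing shows that it \emph{dynamically} intertwines the two restricted semigroups. That intertwining property is exactly the conclusion of the theorem, so deferring it to ``a detailed analysis of the fixed-point equation $\TT^t(X)=X$'' leaves the proof circular in effect. A further unaddressed point: even granting an intertwiner on each diagonal block, your $U=\bigoplus_\alpha U_\alpha$ must commute with $\TT^t$ on \emph{all} of $\M_\Rr$, including the coherence blocks between different clusters $\Rr_\alpha$ and $\Rr_\beta$, and that requires a separate argument about how $\TT^t$ acts there.

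For comparison, the paper closes this gap by two devices you may want to adopt. First, instead of analyzing stationary states, it analyzes the \emph{invariant projector} $P_\U$ of a second minimal enclosure $\U$ non-orthogonal to $\V$: its block decomposition $P_\U=M_\V+M_C+M_\W$ has invariant components, $M_\V$ and $M_\W$ are forced to be multiples of $P_\V$ and $P_\W$ by Proposition \ref{sinv}, and the algebraic identity $P_\U^2=P_\U$ then \emph{forces} the coherence part to be $\sin\alpha\cos\alpha\,(Q+Q^\dag)$ with $Q$ a partial isometry, $QQ^\dag=P_\V$, $Q^\dag Q=P_\W$ (Proposition \ref{ring}). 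This is what replaces your unproven ``$W_{ij}$ is an isometry.'' Second, the commutation with the dynamics is obtained not from a fixed-point analysis but from the reflection trick of Proposition \ref{ringu}: conjugation by $R_\alpha=\one-2P(\alpha)$ merely flips the sign of the coherence blocks relative to the decomposition $\V(\alpha)\oplus\V(\alpha+\pi/2)$ (and of those with the rest of $\Rr$), and since $\TT^t$ preserves each block separately, this conjugation commutes with $\TT^t$ on all of $\M_\Rr$; products of two reflections then give the rotations $U_\alpha$ that map $\W_1$ to $\V_1$. Until you supply arguments of comparable force for the isometry property, the intertwining, and the cross-block commutation, the proposal is a plan rather than a proof.
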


Such unitary transformations can be regarded as expressing symmetries of the dynamics;
if not existing on the whole space then at least on the subspace $\Rr$.
The ``symmetry''-group they form
can be analyzed generally in all detail, if complete positivity of $\TT$ is taken into account.
These studies are presented in the Subsection \ref{cohblocks}, where
a closer look at coherence between stationary states is taken.

``Coherence'' generalizes here the standard notion
``coherent superposition'', which means forming the Hilbert space vector
$|\chi\rangle=\alpha|\psi\rangle+\beta|\phi\rangle$
as opposed to incoherent mixing, forming the density matrix
$\rho=|\alpha|^2|\psi\rangle\langle\psi|+|\beta|^2|\phi\rangle\langle\phi|$.
The density matrix $|\chi\rangle\langle\chi|$ has the same diagonal matrix elements as $\rho$,
it shows the coherence in the off-diagonal
matrix elements $\alpha\beta^\ast|\psi\rangle\langle\phi|$ and $\alpha^\ast\beta|\phi\rangle\langle\psi|$.
This is generalized in the following way:
If $\V=\range(\rho)$ and  $\W=\range(\sigma)$ are mutually orthogonal subspaces of $\Hh$,
a \textbf{coherence} between $\rho$ and $\sigma$ is formed by matrix elements
$|\psi\rangle\langle\phi|$  and $|\phi\rangle\langle\psi|$ connecting these subspaces,
with $|\psi\rangle\in\V$ and $|\phi\rangle\in\W$.
Again there is a possible generalization of the analysis of unitary evolutions using matrices
to an analysis of semigroups using block-matrices:

\begin{define} \textbf{Matrix-blocks:} \label{matrixblocks}                                         
Given two orthogonal subspaces $\V$ and  $\W$ of $\Hh$, we consider three sections of the linear space of matrices:
\begin{itemize}
  \item $\M_{\V}$ is the subspace of matrices spanned by the $Q_\psi :=|\psi\rangle\langle\psi|$ with $\psi\in\V$,
  \item $\M_\W$ is spanned by the $Q_\phi$ with $\phi\in\W$,
  \item the coherence-section $\M_C$ consists of the off-diagonal blocks:
$$\M_C = \lspan (\{|\psi\rangle\langle\phi|, |\phi\rangle\langle\psi|;\,\, \psi\in\V,\, \phi\in\W\})
=\M_{C,1}\oplus\M_{C,2}.$$
\end{itemize}
\end{define}
This can be visualized, representing a matrix $M$ as block-matrix:
$$ M =
\left(
  \begin{array}{ccc}
    \cdots  & \cdots & \cdots \\
    \cdots & M_\V & M_{C,1} \\
    \cdots & M_{C,2} & M_\W \\
  \end{array}
\right)
\qquad \verb"acting on" \qquad \Hh =
\left(
  \begin{array}{c}
    \cdots \\
    \V \\
    \W \\
  \end{array}
\right)
$$

Generally, $\TT$ maps $\M_\D$ into the full space of matrices,
it maps $\M_{C(\D ,\Rr)}$ into $\M_{C(\D ,\Rr)}\oplus\M_\Rr$, and $\M_\Rr$ into itself;
thats the monotonicity in the process of Decay, proven in Subsection \ref{details1}.
The mapping of $\M_\Rr$ can be decomposed into mappings of matrix blocks $\M_\V$ onto itself
and of each $\M_{C(\V , \W)}$ onto itself, where $\V$ and $\W$ are mutually orthogonal minimal stationary ranges.
Notice the different behavior of such a $\M_{C(\V,\W)}$  and $\M_{C(\D,\Rr)}$;
it is demonstrated in an example in \cite{BN08}.
Dissipation goes on in each $\M_\V$ separately, leading to its unique stationary state.
Also Dephasing takes place inside each $\M_\V$, annihilating all off-diagonal
matrix elements, when a basis is used, in which the stationary state is diagonal.
In the Heisenberg picture, $\M_\Rr$ is mapped into the full space of matrices;
this is inconvenient in studies on approach to a stationary state.
Therefore it is convenient to ignore $\D$, unless Decay is studied, and to use a cut off evolution
of observables which has the same nice decomposition into mappings of matrix blocks.
The definition is at the end of the Subsection \ref{details1},
the possibility for decomposition is shown in  Subsection \ref{details2}.

Coherence appears in the off-diagonal matrix-blocks.
Only symmetries between ranges of stationary states can avoid their decay:

\begin{thm}\textbf{Decay vs. stationarity vs. oscillation of coherence:}\label{coherence}   

\begin{enumerate}
   \item Evolution of coherence between stationary ranges can be decomposed into mutually independent
   evolutions of coherence between the minimal stationary ranges which give a decomposition of $\Rr$.          \label{coherence1}
   \item A stationary coherence between minimal stationary ranges $\V$ and $\W$ exists
    if and only if the decomposition of $\V\oplus\W$ into minimal stationary ranges
    is not unique. This can happen only if
    the dynamics of the subsystems are unitarily equivalent
        under unitary transformations commuting with the time evolution restricted to $\Rr$.          \label{coherence2}
   \item For continuous time evolution oscillation of coherence between stationary ranges $\V$ and $\W$ may occur
   only if the dynamics of the subsystems are unitarily equivalent in a weaker sense,
        namely restricted to the blocks on the diagonal:                            \label{coherence3}
\beq
\exists\, U:\,\V\rightarrow\W,\quad U^\dag=U^{-1},\qquad \forall\sigma=
P_\V\tdt\sigma\tdt P_\V:\quad \TT^t(\sigma)=\TT^t(U\tdt\sigma\tdt U^\dag).
\eeq
\end{enumerate}
\end{thm}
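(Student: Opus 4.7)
My plan is to treat the three claims of the theorem in order, leaning on the orthogonal decomposition $\Rr=\bigoplus_k\V_k$ supplied by Theorem \ref{splitting2}, the uniqueness-up-to-unitary result of Theorem \ref{splitting3}, and the Kadison--Schwarz inequality which serves as the infinitesimal avatar of complete positivity.

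For claim \ref{coherence1}, the goal is to show that $\TT^t$ maps every block $\M_{C(\V_i,\V_j)}$ into itself. Invariance of the diagonal blocks $\M_{\V_i}$ is already contained in the stationarity of the minimal stationary state $\rho_i$. For $i\neq j$ I would perturb the stationary faithful state $\frac{1}{2}(\rho_i+\rho_j)$ by $\pm\eps M$ for a Hermitian $M\in\M_{C(\V_i,\V_j)}$ with $\eps$ small enough that $\frac{1}{2}(\rho_i+\rho_j)\pm\eps M\geq 0$. Since $\TT^t$ preserves positivity and fixes $\frac{1}{2}(\rho_i+\rho_j)$, the images $\frac{1}{2}(\rho_i+\rho_j)\pm\eps\TT^t(M)$ are again positive; a block-by-block inspection in the $\bigoplus_k\V_k$ decomposition forces the entries of $\TT^t(M)$ outside $\M_{C(\V_i,\V_j)}\oplus\M_{\V_i}\oplus\M_{\V_j}$ to vanish, while trace preservation, Hermiticity, and the uniqueness of the minimal stationary state on each of $\V_i,\V_j$ rule out genuine diagonal components, leaving $\TT^t(M)\in\M_{C(\V_i,\V_j)}$.

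For claim \ref{coherence2}, non-uniqueness of the decomposition of $\V\oplus\W$ yields, through Theorem \ref{splitting3}, a unitary $U$ that commutes with $\TT^t$ on matrices supported in $\V\oplus\W$ and that non-trivially mixes $\V$ with $\W$; the state $U\rho_\V U^\dag$ is then stationary and carries nonzero off-diagonal block in the $(\V,\W)$ partition, and by claim \ref{coherence1} that block is itself stationary, producing the desired stationary coherence. Conversely, a nonzero Hermitian stationary $M\in\M_{C(\V,\W)}$ makes $\frac{1}{2}(\rho_\V+\rho_\W)+tM$ a one-parameter family of stationary states for $|t|$ small; at the two boundary values of $t$ positivity fails first, so the resulting states have strictly smaller rank and the minimal stationary ranges appearing in their decompositions (via Theorem \ref{splitting2}) differ from both $\V$ and $\W$, showing non-uniqueness, with the claimed unitary equivalence again supplied by Theorem \ref{splitting3}.

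Claim \ref{coherence3} is the heart of the theorem and the place where complete positivity enters essentially. Given $M\in\M_{C(\V,\W)}$ with $L(M)=i\omega M$ for the Lindblad generator $L$ and $\omega\neq 0$, the Kadison--Schwarz inequality applied to $L^\dag$ in the Heisenberg picture and tested against the faithful stationary state $\frac{1}{2}(\rho_\V+\rho_\W)$ forces equality and hence the stationarity of $MM^\dag$ and $M^\dag M$; minimality of $\V,\W$ then gives $MM^\dag\propto\rho_\V$ and $M^\dag M\propto\rho_\W$, with common constant by $\Tr(MM^\dag)=\Tr(M^\dag M)$ and in particular $\dim\V=\dim\W$, so the polar decomposition of $M$ produces a unitary $U_p:\W\to\V$ whose adjoint $U$ is the desired isomorphism. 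The main obstacle I anticipate is the last step, namely upgrading the eigenequation $L(M)=i\omega M$ to the intertwining statement $U\TT^t(\sigma)U^\dag=\TT^t(U\sigma U^\dag)$ on all of $\M_\V$: this rests on the Frigerio--Evans observation that the purely imaginary eigenvectors of $L^\dag$ span a $\ast$-subalgebra on which $\TT^t$ acts by $\ast$-automorphisms, a structural property available only in the continuous-time setting with a faithful stationary state, and it is precisely the absence of this structure in discrete time which permits the additional perpetual-oscillation phenomena flagged in the introduction.
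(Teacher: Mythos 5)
Your items 1 and 2 follow the same general strategy as the paper (positivity perturbations around a stationary state of full range on $\V\oplus\W$, plus the unitary equivalence supplied by Theorem \ref{splitting3}), and your item 2 is essentially sound. But item 1 has a genuine gap at the step where you ``rule out genuine diagonal components'' of $\TT^t(M)$. Positivity of $\frac12(\rho_i+\rho_j)\pm\eps\,\TT^t(M)$ for one fixed small $\eps$ only confines the range of $\TT^t(M)$ to $\V_i\oplus\V_j$; it does not kill the diagonal blocks. Trace preservation only says the two diagonal blocks have opposite traces, Hermiticity says nothing, and uniqueness of the stationary state on each $\V_i$ is irrelevant because $\TT^t(M)$ need not be stationary: a priori $\TT^t(M)$ could equal $\delta(\rho_i-\rho_j)$ plus a coherence part, which passes all three of your tests. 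What is needed is the asymmetric scaling of the paper's Lemma \ref{monotone}: apply $\TT^t$ to the positive matrices $\frac1\eps\rho_i\pm M+\eps\rho_j$ and let $\eps\to 0$ to force the $\V_j\V_j$-corner of $\TT^t(M)$ to vanish, then exchange the roles of $i$ and $j$ (this is Proposition \ref{blockindep}).

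Item 3 is where you and the paper genuinely part ways, and where your argument is incomplete. Your use of the Kadison--Schwarz equality to show that a peripheral eigen-observable is a multiple of a partial isometry intertwining $P_\V$ and $P_\W$ is fine --- it is in fact the technique the paper itself uses in Proposition \ref{discrete} for discrete-time oscillations --- modulo a picture confusion: Kadison--Schwarz applies to the unital Heisenberg maps, so the conclusion is $AA^\dag\propto P_\V$ and $A^\dag A\propto P_\W$, not $\propto\rho_\V$. But the decisive step, upgrading the existence of this unitary to the intertwining relation $\TT^t(\sigma)=\TT^t(U\sigma U^\dag)$ on all of $\M_\V$, is exactly what you outsource to an unproved appeal to the Evans--Frigerio automorphism property of the peripheral eigenspaces; that appeal is the entire content of the claim, so as written the proof is not closed. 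The paper avoids this machinery altogether: it conjugates the evolution by the phase $U_t=e^{-i\omega t}P_\V+P_{\V^\perp}$ as in (\ref{modifyt}), obtaining a new completely positive semigroup $\TT_\omega$ for which the oscillating coherence becomes \emph{stationary} and which has the same action as $\TT$ on the diagonal blocks; item 2 (via Lemma \ref{nostatcoh} and Proposition \ref{ringu}) then yields the dynamical unitary equivalence for $\TT_\omega$, hence for $\TT$ restricted to the diagonal blocks. Complete positivity enters there through Proposition \ref{cohsplit} (to place the eigenmatrix in a single off-diagonal block) and through the fact that $\TT_\omega$ is again completely positive. Either prove the multiplicative-domain/automorphism property you invoke, or adopt the gauge-transformation trick, which reduces item 3 to item 2 in one line.
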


The proof is presented at the end of Subsection \ref{cohblocks}.
Complete positivity is not needed to prove the first and second items,
it is needed only to prove the third item on oscillations,
and it enables a sharper formulation of item 1.

There are no other possibilities of perpetual oscillations for evolutions in continuous time.
The situation is slightly different for discrete dynamical semigroups.
Again perpetual oscillations of coherence can occur only
if there exists a dynamical unitary equivalence between subsystems,
but there may occur also perpetual oscillations of states inside minimal stationary ranges, and their existence may be
accompanied by  oscillations of coherence.
These specialities appearing in discrete time are presented in the Subsection \ref{discretesection}.

Using complete positivity of $\TT$ it is possible to write down the general form of a stationary state,
the analogue to the diagonalization by using the spectral theorem in finite dimensions:

\begin{thm}\textbf{Complete characterization of the decomposition of $\Rr$:}\label{splittingxc}
There is a unique decomposition of the subspace $\Rr$ as
\beq
\Rr=\bigoplus_k \U_k \oplus\bigoplus_\ell \X_\ell,
\eeq
where each $\U_k$ is the range of one and only one stationary state $\rho_k$, and each $\X_\ell$
can be further decomposed into minimal ranges of stationary states as
\beq\label{statrhoform}
\X_\ell=\bigoplus_\alpha \V_{\ell,\alpha}\simeq \C^{m(\ell)}\otimes \V_\ell,\qquad \V_\ell\simeq\V_{\ell,\alpha}\quad\forall \alpha,
\eeq
so that each stationary state can be written as
\beq
\sum_k \lambda_k\cdot\rho_k +\sum_\ell \mu_\ell \cdot\sigma_\ell\otimes\tau_\ell,
\eeq
where $\lambda_k$ and $\mu_\ell$ are non-negative numbers, $\sum_k\lambda_k +\sum_\ell \mu_\ell=1$,
$\sigma_\ell$ can be any positive matrix on $\C^{m(\ell)}$ with $\Tr\sigma_\ell=1$,
and $\tau_\ell$ is a unique special normalized density matrix on $\V_\ell$.

There is no stationary coherence between the $U_k$, between the $\X_\ell$
or between a $\U_k$ and an $\X_\ell$.
\end{thm}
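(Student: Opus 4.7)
The plan is to assemble the decomposition inductively from Theorems \ref{splitting2} and \ref{splitting3}, and then use Theorem \ref{coherence} together with complete positivity to pin down the explicit form of stationary states and to rule out unwanted coherences. I would begin from any decomposition $\Rr=\bigoplus_\alpha\V_\alpha$ into mutually orthogonal minimal stationary ranges supplied by Theorem \ref{splitting2}. Each $\V_\alpha$ carries a unique normalized stationary state $\tau_\alpha$: if two such states $\tau,\tau'$ existed on $\V_\alpha$, a suitable non-trivial affine combination $\tau-c\tau'$ would, after normalization of its positive part, produce a stationary state with strictly smaller range, contradicting minimality.

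Next I would introduce the equivalence relation $\V_\alpha\sim\V_\beta$ if some unitary $U$ commuting with $\TT^t|_\Rr$ maps $\V_\alpha$ onto $\V_\beta$; Theorem \ref{splitting3} guarantees that this is exactly the source of non-uniqueness of the splitting. Partition the index set into equivalence classes. Classes with a single element deliver the subspaces $\U_k$, on which the unique stationary state $\rho_k=\tau_k$ sits. Classes of size $m(\ell)\geq 2$ are collected into $\X_\ell=\bigoplus_\alpha \V_{\ell,\alpha}$. Fixing one representative $\V_\ell:=\V_{\ell,1}$ and using the symmetry unitaries $U_\alpha:\V_\ell\to\V_{\ell,\alpha}$ to identify all members of the class, one obtains the natural isomorphism $\X_\ell\simeq \C^{m(\ell)}\otimes\V_\ell$ in which $\alpha$ labels a basis of $\C^{m(\ell)}$. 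Because each $U_\alpha$ is a symmetry of the whole semigroup, $U_\alpha\tau_\ell U_\alpha^\dag=\tau_{\ell,\alpha}$, so the diagonal part of every stationary state on $\X_\ell$ takes the form $\mathrm{diag}(\mu_{\ell,\alpha})\otimes\tau_\ell$.

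To promote this diagonal form to the full tensor-product expression $\sigma_\ell\otimes\tau_\ell$, I would invoke Theorem \ref{coherence}\,(\ref{coherence2}): the pairwise equivalences among the $\V_{\ell,\alpha}$ are precisely what allows non-vanishing stationary coherences between them, and complete positivity (as used in that theorem) guarantees that every off-diagonal block compatible with the symmetry group is stationary as well. Filling in all such blocks reconstitutes an arbitrary positive-trace-one $\sigma_\ell$ on the multiplicity space $\C^{m(\ell)}$, tensored with the fixed $\tau_\ell$ on $\V_\ell$. Conversely, Theorem \ref{coherence}\,(\ref{coherence2}) also yields the stated absence of stationary coherence between any two distinct $\U_k$, any two distinct $\X_\ell$, or between a $\U_k$ and an $\X_\ell$: by construction these pairs live in different symmetry classes, so no interchanging unitary exists and no stationary coherence can survive. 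Uniqueness of the coarse-grained decomposition $\Rr=\bigoplus_k\U_k\oplus\bigoplus_\ell\X_\ell$ then follows because each $\U_k$ is intrinsically singled out as a minimal stationary range that is not equivalent to any other, while each $\X_\ell$ is the maximal subspace whose minimal stationary subranges form a single equivalence class.

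The main obstacle I expect is showing that \emph{every} $\sigma_\ell$ on $\C^{m(\ell)}$ actually produces a stationary element of $\X_\ell$ under $\TT^t$, rather than only the diagonal combinations. This amounts to verifying that the symmetry unitaries from Theorem \ref{splitting3} intertwine $\TT^t$ in a sufficiently strong sense to keep the full off-diagonal blocks $|\psi\rangle\langle\phi|$ invariant, not just the rank-one stationary pieces $|\psi\rangle\langle\psi|$. This is precisely the point where complete positivity must be used: without it, the equivalence provided by Theorem \ref{splitting3} guarantees matching of the diagonal evolutions but allows off-diagonal coherences to contract in modulus, so the neat tensor-product parametrization would collapse to its diagonal restriction. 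Handling this step via the Kraus representation of $\TT^t$, and comparing its action on the off-diagonal blocks to its action on the diagonals via the unitaries $U_\alpha$, is the key technical task; everything else is bookkeeping on the classification scheme.
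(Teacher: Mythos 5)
There is a genuine gap in your classification scheme: you define the classes $\X_\ell$ by the equivalence relation ``some unitary commuting with $\TT^t|_\Rr$ maps $\V_\alpha$ onto $\V_\beta$,'' but this is the wrong relation. Theorem \ref{splitting3} (and Theorem \ref{coherence}.\ref{coherence2}) only say that non-uniqueness of the splitting \emph{implies} such a dynamical unitary equivalence; the converse fails. A standard counterexample is pure dephasing on $\C^2$, $\TT^t$ acting as the identity on the diagonal and multiplying the off-diagonal entries by $e^{-t}$: the swap $\sigma_x$ commutes with this (completely positive) semigroup, so your relation would lump the two one-dimensional minimal enclosures into a single $\X_\ell\simeq\C^2\otimes\C$, and your conclusion would assert that \emph{every} density matrix $\sigma$ on $\C^2$ is stationary --- which is false, since the coherences decay. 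The correct grouping criterion, which the paper uses via Corollary \ref{splittingx}, is the existence of a \emph{stationary coherence} between $\V_\alpha$ and $\V_\beta$, equivalently non-uniqueness of the decomposition of $\V_\alpha\oplus\V_\beta$ into minimal stationary ranges (Lemma \ref{nostatcoh}); dynamical unitary equivalence is necessary for this but not sufficient. For the same reason, the resolution you propose for your self-identified ``main obstacle'' --- showing that the symmetry unitaries force the off-diagonal blocks to be invariant --- cannot work: in the dephasing example the symmetry unitary exists and the off-diagonal block still contracts.

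The paper's actual route avoids this by working directly with invariant projectors and coherences: Proposition \ref{ring} and Corollary \ref{sphere} show that a pair of non-orthogonal minimal enclosures generates a whole sphere of invariant projectors $P(\alpha,z)$ (this is where complete positivity enters, through the splitting $\M_C=\M_{C1}\oplus\M_{C2}$ of Proposition \ref{cohsplit}); Lemma \ref{uniquecoh} shows the stationary element of each coherence block is unique up to a scalar and is an isometry, which is exactly what pins the stationary states inside $\X_\ell$ to the form $\sigma_\ell\otimes\tau_\ell$ and rules out anything beyond the sphere; and the tensor factorization for $m(\ell)>2$ is obtained by showing that the relation ``a stationary coherence exists'' is transitive, using the ring-permutation unitaries to transport the coherence block between $\V_2$ and $\V_3$ onto that between $\V_1$ and $\V_3$. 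Your outline captures the bookkeeping but substitutes a weaker hypothesis at the one step where the strong one is indispensable.
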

The proof uses complete positivity and is at the end of subsection \ref{cohblocks}.

In the following parts of this paper we will, with slight abuse of formulation,
write all the formulas related to splitting a subspace $\X$ as
in equation (\ref{statrhoform}) with ``$=$'' instead of ``$\simeq$''.

\section{Details and proofs concerning stationary states}\label{details}

\subsection{Splitting into a decaying subspace and enclosures}\label{details1}

\begin{proof}{\textit{of Theorem \ref{splitting}.1:}}

Consider the positive not normalized operators
$Q_\psi = | \psi \rangle \langle \psi |$ and
consider vectors $\psi$ and $\phi$ for which
$\TT^{t\dag}(Q_\psi)\rightarrow 0$ and  $\TT^{t\dag}(Q_\phi)\rightarrow 0$ as $t\rightarrow\infty$.
Preservation of positivity implies preservation of operator-inequalities.
Now the inequality
\beq\label{qpsi}
0\leq Q_{\psi+\phi}\leq Q_{\psi+\phi}+Q_{\psi-\phi}=2Q_\psi+2Q_\phi
\eeq
implies $\TT^{t\dag}(Q_{\psi+\phi})\rightarrow 0$ as $t\rightarrow\infty$.
So the set $\D$, as characterized in Equ. (\ref{Dspace}),
is a subspace of $\Hh$.
Decomposing $P_\D=Q_\psi +Q_\phi +\ldots$ shows that $\D$
obeys (\ref{HeisenbergD}). By duality
$\Tr(P_\D \TT^t(\rho))\rightarrow 0$
for each $\rho$.

Since $\Tr(\TT^t(\rho))=const$, the projector $P_\D$ is strictly smaller than $\one$.
The evolution $\TT^{t\dag}$ is unital, and for $\Rr=\D^\perp$
$$\TT^{t\dag}(P_\Rr) = \TT^{t\dag}(\one - P_\D) = \one - \TT^{t\dag}(P_\D)  \rightarrow\one$$
gives (\ref{HeisenbergR}).
By duality $\Tr(P_\Rr \TT^t(\rho))\rightarrow 1$ for each $\rho$.
\end{proof}

Not every state of a finite dimensional system can decay. Counterparts
to ranges of decaying states are ranges which can not grow in time,
and ranges of stationary states. These have to be orthogonal to $\D$, by definition in equ. (\ref{Dspace1}).

\begin{define}\textbf{Enclosure}\label{defencl}                                                   

In case that for each state $\sigma$ with range in $\V$
also the range  of $\TT^t(\sigma)$ is contained in $\V$,
for each following time $t$, then this subspace $\V$ is called an \textbf{enclosure}.
\end{define}

\begin{proposition}\textbf{Stationary ranges are enclosures.}\label{enclosure}                             

If $\V$ is the range of a stationary state, then it is an \textbf{enclosure}.
\end{proposition}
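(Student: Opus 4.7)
The plan is to exploit the domination of arbitrary states supported in $\V$ by the stationary state itself, and then to propagate this domination through $\TT^t$ using positivity preservation.

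First, I would observe that since $\V = \range(\rho)$, the restriction $\rho|_\V$ is strictly positive, so there is a smallest positive eigenvalue $\mu > 0$ such that $\rho \geq \mu P_\V$ as operators on $\Hh$. For any state $\sigma$ with $\range(\sigma)\subseteq\V$, one has $\sigma \leq \|\sigma\|\, P_\V$. Combining these, with $\lambda := \mu / \|\sigma\| > 0$, gives the operator inequality
\beq
0 \leq \lambda\,\sigma \leq \rho.
\eeq

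Second, I would apply $\TT^t$. Property 3 of Definition \ref{defTT} (preservation of positivity) immediately implies preservation of operator inequalities, exactly as used in (\ref{qpsi}). Since $\rho$ is stationary, this yields
\beq
0 \leq \lambda\, \TT^t(\sigma) \leq \TT^t(\rho) = \rho.
\eeq

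Third, I would extract the conclusion on the range. For any $\psi \in \V^\perp = \D$-candidate, $\langle\psi|\rho|\psi\rangle = 0$ because $\range(\rho) = \V$. Sandwiching the inequality above with $|\psi\rangle\langle\psi|$ gives
\beq
0 \leq \lambda\,\langle\psi|\TT^t(\sigma)|\psi\rangle \leq \langle\psi|\rho|\psi\rangle = 0.
\eeq
Since $\TT^t(\sigma)$ is positive, vanishing of its diagonal matrix elements on $\V^\perp$ forces (via Cauchy--Schwarz for the positive form $\langle\cdot,\TT^t(\sigma)\cdot\rangle$) that $P_{\V^\perp}\TT^t(\sigma)P_{\V^\perp} = 0$ and hence all matrix elements $\langle\psi|\TT^t(\sigma)|\phi\rangle = 0$ for $\psi\in\V^\perp$. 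Thus $\range(\TT^t(\sigma)) \subseteq \V$, which is the enclosure property.

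There is no real obstacle; the only point to be a bit careful about is the step from ``diagonal elements vanish on $\V^\perp$'' to ``$\TT^t(\sigma)$ annihilates $\V^\perp$,'' which uses positive semidefiniteness. The argument works identically for discrete and continuous time, needs only positivity preservation (not complete positivity), and makes no use of trace preservation beyond ensuring $\rho$ is a genuine state with well-defined range.
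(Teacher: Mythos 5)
Your proof is correct and follows essentially the same route as the paper's: dominate $\sigma$ by the stationary state ($\epsilon\sigma\leq\rho$ for some $\epsilon>0$, which exists because $\rho$ is strictly positive on its range $\V$ and the dimension is finite), push the operator inequality through $\TT^t$ using positivity preservation, and conclude $\range(\TT^t(\sigma))\subseteq\range(\rho)=\V$. The paper states these steps more tersely; your version merely makes explicit the choice of $\epsilon$ and the Cauchy--Schwarz argument that vanishing diagonal elements of a positive matrix on $\V^\perp$ force the whole off-diagonal block to vanish.
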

\begin{proof}
Let $\rho$ be a stationary state which has $\V$ as its range,
consider another state $\sigma$, the range of which is contained in $\V$. Then
$\exists\, \epsilon>0$ such that $\epsilon\sigma<\rho$.
By preserving positivity, also the ordering is preserved in the course of time.
So $\epsilon \TT^t(\sigma)<\TT^t(\rho)=\rho$, which implies that the range of $\TT^t(\sigma)$
is still contained in $\V$.
\end{proof}

Stationary states and invariant operators can be constructed.
From the general theory of coupled linear differential equations --
and of coupled difference-equations for discrete time as well -- we know
that any solution is a finite linear combination of such functions which are
products of a polynomial with an exponential function.
For evolution equations of matrices in time this concerns the matrix elements,
i.e. the functions $f_{k,\ell }(t)=\langle k|\TT^t(\rho)|\ell\rangle$.
These functions can not increase indefinitely as $t\rightarrow\infty$, since the trace norm of positive
matrices stays constant and their matrix elements are bounded by
$|\langle k|\rho|\ell\rangle|\leq \,\|\rho\| \leq \Tr(\rho)$.
So we can define  \textbf{the mean of states and of operators}:                                                  

For continuous time-evolution:
$$\bar{\rho} := \lim_{T\rightarrow\infty}\frac{1}{T}\int_0^T \TT^t(\rho)dt,  \qquad
\bar{A} := \lim_{T\rightarrow\infty}\frac{1}{T}\int_0^T \TT^{t\dag}(A)dt.$$

In case the semigroup is discrete, defined for times $t=n\tau,\,n \in \N$:
$$\bar{\rho} := \lim_{N\rightarrow\infty}\frac{1}{N+1}\sum_{n=0}^N \TT^{n\tau}(\rho),  \qquad
\bar{A} := \lim_{N\rightarrow\infty}\frac{1}{N+1}\sum_{n=0}^N \TT^{n\tau\dag}(A).$$

$\TT^{t\dag}$ preserves positivity, and hence, for positive matrices $A$, it preserves the inequality
$0\leq A\leq\,\|A\|\, \cdot\one$.
So, the norm of $\TT^{t\dag} (A)$ is not increasing in time and $\|\bar{A}\|\leq \|A\|$.
Also the trace norm of $\rho$ is not increasing.
By dominated convergence, taking the trace of $\rho\cdot A$ can be exchanged with these limiting processes.
Duality gives thus $\Tr(\bar{A}\rho) = \Tr(A\bar{\rho})$.
Considering also $\bar{\rho}$ in place of $\rho$ and using invariance in time of $\bar{\rho}$,
implying $\overline{\bar{\rho}}=\bar{\rho}$, gives thus
\beq\label{stat}
\forall \rho,\, \forall A \qquad \Tr(\bar{A}\rho) = \Tr(A\bar{\rho}) = \Tr(\bar{A}\bar\rho).
\eeq

Considering again the matrix elements as functions of time, one can note
that they have the special form: Constant plus oscillation plus an exponentially decreasing part.
A positive matrix stays positive, so oscillations have to be bounded by a positive constant,
and the mean of a positive operator $A$ is zero if and only if its limit as $t\rightarrow\infty$ is zero.

With these preliminaries one can state and prove

\begin{proposition}\textbf{The subspace $\Rr$ supporting stationary states.} \label{rspace}

The subspace $\Rr$ has the properties           
\begin{enumerate}
  \item $\forall\, \rho:\quad \rm{range}(\bar{\rho})\subset\Rr$.
  \item $\exists\, \bar{\rho}:\quad \rm{range}(\bar{\rho})=\Rr$.\label{rspace2}
  \item $\Rr$ is an enclosure: \quad
  $\forall \rho$ with ${\rm range}(\rho)\subset\Rr$ $\Rightarrow$ $\forall t:\,
   {\rm range}(\TT^t(\rho))\subset\Rr$.\label{rspace3}

\end{enumerate}
\end{proposition}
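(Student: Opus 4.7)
The plan is to establish items 1--3 in sequence, with items 2 and 3 resting on item 1.

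\emph{For item 1}, I start from the characterization (\ref{characterization}) in Theorem \ref{splitting}, which gives $\Tr(P_\D\TT^t(\rho))\to 0$ as $t\to\infty$ for every normalized state $\rho$. The function $t\mapsto\Tr(P_\D\TT^t(\rho))$ is nonnegative, bounded by one, and tends to zero, so its time average also vanishes; by linearity and continuity of the trace,
\[
  \Tr(P_\D\bar\rho) \;=\; \lim_{T\to\infty}\frac{1}{T}\int_0^T \Tr(P_\D\TT^t(\rho))\,dt \;=\; 0,
\]
and analogously for discrete time. Since $P_\D$ and $\bar\rho$ are both positive, this forces $P_\D\bar\rho=0$, i.e.\ $\range(\bar\rho)\subseteq\Rr$. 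Equivalently, one can appeal to the duality relation (\ref{stat}) together with $\bar P_\D=0$, which is (\ref{HeisenbergD}).

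\emph{For item 2}, I take a faithful starting state, for instance $\rho_0:=\one/N$, and claim that $\bar\rho_0$ already has $\range(\bar\rho_0)=\Rr$. The inclusion ``$\subseteq$'' is item 1. For the reverse inclusion, fix any nonzero $\psi\in\Rr$. Since $\D\cap\Rr=\{0\}$, $\psi\notin\D$, so by the definition (\ref{Dspace}) of $\D$ the operator $\TT^{t\dag}(Q_\psi)$ does not tend to zero. The observation recorded just above the Proposition --- that the mean of a positive operator vanishes if and only if its limit as $t\to\infty$ vanishes --- then yields $\bar Q_\psi\neq 0$. Duality (\ref{stat}) now gives
\[
  \langle\psi|\bar\rho_0|\psi\rangle \;=\; \Tr(Q_\psi\bar\rho_0) \;=\; \Tr(\bar Q_\psi\rho_0) \;=\; \Tr(\bar Q_\psi)/N \;>\; 0,
\]
because $\bar Q_\psi$ is a nonzero positive matrix. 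Hence $\psi$ lies in the range of $\bar\rho_0$.

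\emph{Item 3} follows at once: the state $\bar\rho_0$ produced for item 2 is stationary by the very definition of the time-averaged mean, and its range is $\Rr$, so Proposition \ref{enclosure} gives that $\Rr$ is an enclosure.

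The main delicacy I anticipate is the ``$\supseteq$'' half of item 2, which requires marshalling two ingredients together: the fact that the mean of a nonzero, nondecaying positive operator is itself nonzero (supplied by the preparatory discussion of $\bar A$ just above the Proposition), and the choice of a faithful $\rho_0$, so that duality can convert ``$\bar Q_\psi\neq 0$'' into the desired statement about $\range(\bar\rho_0)$. Items 1 and 3 are then essentially bookkeeping using duality and the already-proved Proposition \ref{enclosure}.
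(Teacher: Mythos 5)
Your proof is correct and follows essentially the same route as the paper: item 1 from the defining property of $\D$, item 2 from the non-vanishing of $\overline{Q_\psi}$ for nonzero $\psi\in\Rr$ combined with the duality relation (\ref{stat}), and item 3 from Proposition \ref{enclosure} applied to the stationary state produced in item 2. The only difference is that by taking the globally faithful reference state $\one/N$ you can conclude directly from $\overline{Q_\psi}\neq 0$, whereas the paper uses a state faithful only on $\Rr$ and therefore needs an extra double-averaging step to show that the block $(\overline{Q_\psi})_\Rr$ is itself nonzero --- a small but genuine streamlining.
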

\begin{proof}

The range of a stationary state has to be orthogonal to $\D$, by the defining equation (\ref{Dspace1});
this implies \ref{rspace}.1.

On the other hand,
there exists no $\psi\in\Rr$ with its positive operator $Q_\psi=|\psi\rangle\langle\psi|$ decaying to zero,
by definition of $\D$ and $\Rr$.
So, as stated above, for each $\psi\in\Rr$, the mean $\overline{Q_\psi}$ is positive and not zero.
Moreover, splitting into matrix-blocks, $\overline{Q_\psi}=(\overline{Q_\psi})_\D +(\overline{Q_\psi})_C+(\overline{Q_\psi})_\Rr$,
followed by again forming the mean,
using $(\overline{Q_\psi})_\D\leq P_\D$, since the norm is not increased, implying
$$
\overline{(\overline{Q_\psi})_\D}\leq \overline{P_\D}=0,
$$
and
$$0\neq \overline{Q_\psi}=\overline{(\overline{Q_\psi})}=\overline{(\overline{Q_\psi})_\D} +\overline{(\overline{Q_\psi})_C}+\overline{(\overline{Q_\psi})_\Rr} \geq 0,$$
implies $(\overline{Q_\psi})_\Rr \neq 0$.
Using any $\rho$ with $\range(\rho)=\Rr$,
the equation (\ref{stat}) gives therefore for each $\psi\in\Rr$
\beq
\langle\psi|\bar{\rho}|\psi\rangle =\Tr(\bar{\rho}\cdot Q_\psi)=\Tr(\rho\cdot\overline{Q_\psi})
=\Tr(P_\Rr\rho P_\Rr\cdot\overline{Q_\psi})=\Tr(\rho\cdot(\overline{Q_\psi})_\Rr)\neq 0,
\eeq
so $\range(\bar{\rho})=\Rr$, as stated in \ref{rspace}.\ref{rspace2}.

\ref{rspace}.\ref{rspace3} follows then from  Proposition (\ref{enclosure}).
\end{proof}
\begin{proof}{\textit{of Theorem \ref{splitting}.2:}}

 Take the $\bar{\rho}$ of Proposition \ref{rspace}.\ref{rspace2}.
\end{proof}
The study of mappings of matrix blocks begins with
\begin{lem}\label{monotone}
Let $\V$ be an enclosure, $\W=\V^\perp$, and $\M_C$ the coherence section
for $\M_\V$ and $\M_\W$.
The evolutions $\TT^t$ do not map $\M_C$ into $\M_\W$.
\end{lem}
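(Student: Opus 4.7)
The plan is a proof by contradiction: assume $\TT^t(\M_C) \subseteq \M_\W$ for a given $t > 0$, and exploit the positivity of a one-parameter family of rank-one operators, together with the enclosure property of $\V$, to force $\TT^t$ to annihilate all of $\M_C$. This is the desired contradiction, since then the image of $\M_C$ sits inside $\M_\W$ only in the trivial sense of being the zero subspace, so $\TT^t$ does not genuinely map $\M_C$ into $\M_\W$.

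Concretely, pick unit vectors $\psi \in \V$ and $\phi \in \W$, and set $M := |\psi\rangle\langle\phi| + |\phi\rangle\langle\psi| \in \M_C$. The identity $Q_{\psi+\lambda\phi} = Q_\psi + \lambda M + \lambda^2 Q_\phi \geq 0$, valid for every $\lambda \in \R$, is preserved by $\TT^t$. Since $\V$ is an enclosure, $\TT^t(Q_\psi) \in \M_\V$, so its $\W$-block vanishes; combined with the standing assumption $\TT^t(M) \in \M_\W$, the $\W$-block of the positive operator $\TT^t(Q_{\psi+\lambda\phi})$ reads
$$
\lambda\,\TT^t(M) + \lambda^2\,[\TT^t(Q_\phi)]_\W \ \geq\ 0 \qquad \forall\,\lambda \in \R.
$$
Dividing by $\lambda > 0$ and letting $\lambda \to 0^+$ forces $\TT^t(M) \geq 0$; dividing by $\lambda < 0$ and letting $\lambda \to 0^-$ forces $\TT^t(M) \leq 0$. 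Hence $\TT^t(M) = 0$. Running the identical argument with $\phi$ replaced by $i\phi$ yields $\TT^t\bigl(i(|\psi\rangle\langle\phi| - |\phi\rangle\langle\psi|)\bigr) = 0$, and combining these gives $\TT^t(|\psi\rangle\langle\phi|) = 0$ for every $\psi \in \V$, $\phi \in \W$. By linearity, $\TT^t$ is identically zero on $\M_C$.

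The main obstacle, and the point at which the enclosure hypothesis does the real work, is the clean isolation of the $\W$-block of $\TT^t(Q_{\psi+\lambda\phi})$. Without enclosure, $\TT^t(Q_\psi)$ could contribute a $\lambda$-independent term to the $\W$-block that would obstruct the $\lambda\to 0$ argument; with enclosure in place, only the linear and quadratic $\lambda$-terms survive, and the scaling trick closes. The resulting vanishing of $\TT^t$ on $\M_C$ precludes any non-trivial map into $\M_\W$, which is exactly the content of the lemma.
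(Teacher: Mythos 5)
Your core computation is correct and is essentially the paper's own argument: both proofs test positivity of the rank-one family $Q_{\psi+\lambda\phi}$ (the paper's $\kappa_{\eps\pm}=\frac1\eps\rho_\V\pm\rho_C+\eps\rho_\W$ is the same family with $\eps=\lambda^2$), use the enclosure property of $\V$ to kill the $\lambda$-independent term in the $\W$-corner, and then let the parameter tend to zero with both signs to isolate and annihilate the linear term; your extra pass with $i\phi$ to recover the full (non-self-adjoint) coherence block is a harmless refinement of the paper's appeal to the span of all $\rho_C$.

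One remark on the packaging: the proof-by-contradiction framing is an unnecessary detour and, read literally, delivers only the conditional statement ``if $\TT^t(\M_C)\subseteq\M_\W$ then $\TT^t$ vanishes on $\M_C$.'' What the paper actually establishes and uses later (e.g.\ in Theorem \ref{splitting}.3 and Proposition \ref{blockindep}) is the unconditional fact $P_\W\,\TT^t(M)\,P_\W=0$ for \emph{every} $M\in\M_C$, i.e.\ $\TT^t(\M_C)\cap\M_\W=\{0\}$. Your inequality already proves this: without any standing assumption, the $\W$-block of $\TT^t(Q_{\psi+\lambda\phi})$ is $\lambda\,[\TT^t(M)]_\W+\lambda^2\,[\TT^t(Q_\phi)]_\W\geq 0$, and the same $\lambda\to 0^\pm$ scaling forces $[\TT^t(M)]_\W=0$. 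Simply drop the contradiction hypothesis and replace $\TT^t(M)$ by its compression $P_\W\TT^t(M)P_\W$ throughout, and your argument gives the lemma in the form the rest of the paper needs.
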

\begin{proof}
$\V$ is an enclosure, so $\TT^t$ maps $\M_{\V}$  into  $\M_{\V}$.
Now consider a density matrix $\rho$, decompose it
according to the sections as $\rho=\rho_\V +\rho_C +\rho_\W$,
and consider the action of $\TT^t$ onto the
matrices $\kappa_{\eps +}$ and $\kappa_{\eps -}$, where $\kappa_{\eps \pm} = \frac 1\eps \rho_\V \pm\rho_C +\eps \rho_W$,
with $\eps >0$.
These matrices are positive, which can be seen by
$$\langle\chi|\kappa_{\eps \pm}|\chi\rangle = \langle\chi_{\eps \pm}|\rho|\chi_{\eps \pm}\rangle ,$$
with a general vector $\chi$, decomposed as $\chi=\psi+\phi$,\, $\psi\in\V,\,\phi\in\W$,\,
and with $\chi_{\eps\pm}=\psi/\sqrt{\eps}\pm\sqrt{\eps} \phi$.

The map $\TT^t$ conserves positivity, so $P_\W\TT^t(\kappa_{\eps \pm})P_\W \geq 0$.
Since $\V$ is an enclosure, this part can not consist of more than
$\eps P_\W\TT^t(\rho_\W)P_\W \,\pm\, P_\W\TT^t(\rho_C)P_\W$,
and this can only then be positive for all $\eps$, if
$P_\W\TT^t(\rho_C)P_\W=0$.
So, since the set of all $\rho_C$ span all $\M_C$, $\TT$ does not map the coherence section $\M_C$ into $\M_W$:
\beq\label{cohencl}
\TT^t(\M_C)\cap\M_\W =\{0\}
\eeq
\end{proof}
\begin{proof}{\textit{of Theorem \ref{splitting}.3:}}

Decompose $\sigma=\TT^s(\rho)$ as $\sigma=\sigma_\Rr +\sigma_C +\sigma_\D$.
 $\Rr$ is an enclosure (Proposition \ref{rspace}.\ref{rspace3}),
 so $\TT^t$ maps neither $\sigma_\Rr$ nor $\sigma_C$ into $\M_\D$, and
 \beqa
 \Tr(P_\D \TT^{t+s}(\rho))&=&\Tr(P_\D \TT^t(\sigma))=\Tr(P_\D \TT^t(\sigma_\D))\nonumber
 \\ &\leq &\Tr(\TT^t(\sigma_\D))=\Tr(\sigma_\D)=\Tr(P_\D \TT^s(\rho)).\nonumber
 \eeqa

\end{proof}
As a consequence, $\D$ may be ignored when considering the stationary states.
And, when  using invariant operators, we are in these investigations not interested in their actions
on the space $\D$, the space of decaying states.
We study only the operators $A=P_\Rr\cdot A\cdot P_\Rr$ and define
a cut off evolution $\S$ of matrices, mapping $\M_\Rr \rightarrow \M_\Rr$:
\begin{define} \textbf{The cut off evolution of observables.}                                                          

For observables $A=P_\Rr\tdt A\tdt   P_\Rr $ we set
\beq\S^t(A):=P_\Rr \tdt \TT^{t\dag}(A) \tdt P_\Rr
\eeq
\end{define}
Since $\Rr$ is an enclosure, we have:
\begin{lem}                                                               
The cut off evolutions $\S^t$ form a semigroup.
\end{lem}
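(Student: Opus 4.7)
The plan is to deduce the semigroup identity $\S^{t+s}=\S^t\circ\S^s$ from the enclosure property of $\Rr$ by a short duality argument. The semigroup property of $\TT^t$ gives $\TT^{(t+s)\dag}=\TT^{t\dag}\circ\TT^{s\dag}$ trivially, so the only real content is that the two inner compressions by $P_\Rr$ in $\S^t\circ\S^s(A)=P_\Rr\,\TT^{t\dag}(P_\Rr\,\TT^{s\dag}(A)\,P_\Rr)\,P_\Rr$ can be dropped without changing the outcome.

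First I would record the dual version of the enclosure property. Since $\Rr$ is an enclosure (Proposition \ref{rspace}.\ref{rspace3}), any state $\rho$ with $\range(\rho)\subset\Rr$ satisfies $\TT^t(\rho)=P_\Rr\,\TT^t(\rho)\,P_\Rr$. By the trace duality $\Tr(A\,\TT^t(\rho))=\Tr(\TT^{t\dag}(A)\,\rho)$, this means that for any matrix $B$ which satisfies $P_\Rr B P_\Rr=0$ (so either $B$ lives in $\M_\D$ or in the coherence section $\M_{C(\D,\Rr)}$), and any $\rho$ with range in $\Rr$, one has
\[
\Tr\bigl(P_\Rr\,\TT^{t\dag}(B)\,P_\Rr\cdot\rho\bigr)=\Tr\bigl(\TT^{t\dag}(B)\,\rho\bigr)=\Tr\bigl(B\cdot\TT^t(\rho)\bigr)=\Tr\bigl(B\cdot P_\Rr\TT^t(\rho)P_\Rr\bigr)=0.
\]
Because matrices in $\M_\Rr$ are determined by their pairings with states supported in $\Rr$, this yields $P_\Rr\,\TT^{t\dag}(B)\,P_\Rr=0$ whenever $P_\Rr B P_\Rr=0$.

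Now I would apply this observation with $B:=\TT^{s\dag}(A)-P_\Rr\,\TT^{s\dag}(A)\,P_\Rr$, which manifestly satisfies $P_\Rr B P_\Rr=0$. It follows that
\[
P_\Rr\,\TT^{t\dag}\!\bigl(P_\Rr\,\TT^{s\dag}(A)\,P_\Rr\bigr)P_\Rr=P_\Rr\,\TT^{t\dag}\bigl(\TT^{s\dag}(A)\bigr)P_\Rr=P_\Rr\,\TT^{(t+s)\dag}(A)\,P_\Rr,
\]
which is precisely $\S^t(\S^s(A))=\S^{t+s}(A)$. Since $\S^t$ by construction sends $\M_\Rr$ into $\M_\Rr$, the compositions are well defined on the class of observables considered.

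The only delicate step is the duality argument identifying a Heisenberg-picture statement (the compression of $\TT^{t\dag}$ kills the non-$\M_\Rr$ blocks) with the Schr\"odinger-picture enclosure property of $\Rr$; once this is in place the proof is a one-line algebraic manipulation. No use of complete positivity is needed, only preservation of positivity as already invoked in Proposition \ref{rspace}.
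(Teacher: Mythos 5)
Your proof is correct and follows essentially the same route as the paper's: both rest on trace duality combined with the enclosure property of $\Rr$ (so that $\TT^t(\rho)=P_\Rr\TT^t(\rho)P_\Rr$ for $\rho$ supported in $\Rr$) and the non-degeneracy of the trace pairing on $\M_\Rr$. The paper chains the trace identities directly for $\Tr(\rho\cdot\S^t(\S^u(A)))$, while you factor out the equivalent observation that the compressed dual map annihilates any $B$ with $P_\Rr B P_\Rr=0$; this is only a cosmetic reorganization of the same argument.
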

\begin{proof}
Let $\rho_\Rr=P_\Rr\rho P_\Rr$, for any $\rho$, consider  $A=P_\Rr\tdt A\tdt   P_\Rr $. Then
$$\forall\rho\qquad \Tr(\rho \cdot  \S^t(\S^u(A)))=\Tr(\rho_\Rr \cdot  \TT^{t\dag}(\S^u (A)))=
\Tr(\TT^t(\rho_\Rr)\tdt  \S^u (A))$$
$$=\Tr(P_\Rr\TT^t(\rho_\Rr)P_\Rr\tdt  \S^u (A))=\Tr(\TT^{t+u}(\rho_\Rr)\cdot A)=
\Tr(\rho_\Rr \cdot (\TT^{(t+u)})^\dag(A))$$
$$=\\Tr(\rho\cdot \S^{(t+u)} (A))$$
\end{proof}

\subsection{Orthogonal decompositions of $\Rr$}\label{details2}

It is not at all obvious a priori that orthogonality of subspaces plays a role in dissipative
time evolution, where orthogonality is not a generally conserved property.
We have the conservation of positivity by $\TT$, by $\TT^\dag$ and also by $\S$ instead.
The necessary information on conservation of orthogonality of special subspaces is gained by studying the matrices in block-form,
as in Definition (\ref{matrixblocks}).
An essential tool are the decompositions of $\TT$ and of $\TT^\dag$
into their actions onto the matrix blocks:
\begin{proposition} \textbf{Mappings of matrix-blocks.} \label{blockindep}                         

Consider a subspace $\V$ which is an enclosure. Then
also $\W = \V^\perp\cap\Rr$ is an enclosure.
Both  $\TT^t$ and $\S^t$ (not $\TT^{t\dag}$) map $\M_{\V}$ into $\M_{\V}$, $\M_{\W}$ into $\M_{\W}$ and $\M_C$ into $\M_C$.
\end{proposition}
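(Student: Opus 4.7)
Since the proposition is invoked to decompose $\Rr$, I may assume $\V\subset\Rr$, so that $\V\oplus\W=\Rr$. The claim that $\TT^t$ maps $\M_\V$ into itself is immediate from the enclosure property of $\V$ together with the fact that $\M_\V$ is spanned by states with range in $\V$. The crux is to establish that $\W$ is itself an enclosure; once that is in hand, the statements for $\M_\W$ and $\M_C$ fall out by combining Lemma \ref{monotone} with the block structure, and the $\S^t$ claims follow by a Hilbert--Schmidt duality argument inside $\M_\Rr$.

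To show $\W$ is an enclosure, I invoke Proposition \ref{rspace}.\ref{rspace2} to get a stationary state $\bar\rho$ with $\range(\bar\rho)=\Rr$, and decompose it in $\V\oplus\W$ blocks as $\bar\rho=\bar\rho_\V+\bar\rho_\W+\kappa$. Since $\bar\rho$ is strictly positive on $\Rr$, its $\W$-corner $\bar\rho_\W=P_\W\bar\rho P_\W$ is strictly positive on $\W$; consequently every $\sigma\geq 0$ with $\range(\sigma)\subset\W$ satisfies $\sigma\leq\lambda\bar\rho_\W$ for some $\lambda>0$. Applying $\TT^t$ to the identity $\bar\rho=\TT^t(\bar\rho)$ and using (i) $\TT^t(\bar\rho_\V)\in\M_\V$ (enclosure of $\V$), (ii) $P_{\V^\perp}\TT^t(\kappa)P_{\V^\perp}=0$ (Lemma \ref{monotone} applied to $\V$), and (iii) $\TT^t(\bar\rho_\W)\in\M_\Rr$ (enclosure of $\Rr$), comparison of the $\W$--$\W$ corners gives $P_\W\TT^t(\bar\rho_\W)P_\W=\bar\rho_\W$. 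Trace preservation then yields $\Tr(P_\W\TT^t(\bar\rho_\W)P_\W)=\Tr(\TT^t(\bar\rho_\W))$, so the positive operator $\TT^t(\bar\rho_\W)$ has all its mass inside $\W$, i.e.\ $\range(\TT^t(\bar\rho_\W))\subset\W$. The domination $\sigma\leq\lambda\bar\rho_\W$ together with positivity of $\TT^t$ then propagates this to every $\sigma\geq 0$ with range in $\W$.

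With both $\V$ and $\W$ now enclosures, the $\M_C$-claim follows quickly: any $M\in\M_C$ is a real linear combination of Hermitian coherences $|\psi\rangle\langle\phi|+|\phi\rangle\langle\psi|$, and each such term is a difference of two positive rank-one matrices with range in $\V\oplus\W=\Rr$. Hence $\TT^t(M)\in\M_\Rr$. Lemma \ref{monotone} applied to the enclosure $\V$ kills the $\V^\perp$--$\V^\perp$ block of $\TT^t(M)$, and in particular its $\W$--$\W$ block; applied to the enclosure $\W$ it kills the $\V$--$\V$ block. The only piece of $\M_\Rr$ surviving both annihilations is precisely $\M_C$.

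For $\S^t$ I use Hilbert--Schmidt duality inside $\M_\Rr$: since $\Rr$ is an enclosure, $\Tr(\rho\,\S^t(A))=\Tr(\TT^t(\rho)\,A)$ for all $\rho,A\in\M_\Rr$, and the three block subspaces $\M_\V,\M_\W,\M_C$ are pairwise orthogonal in this pairing. If $A$ lies in one of these three blocks, then by the invariance just proved $\TT^t(\rho)$ remains in the block of $\rho$ for every $\rho$ in either of the other two, so the pairing vanishes and $\S^t(A)$ must lie in the same block as $A$. The main obstacle is the enclosure step for $\W$: unlike $\V$ and $\Rr$, the subspace $\W$ carries no direct dynamical definition, and its stability must be extracted indirectly from the block structure of the full-range stationary state $\bar\rho$, where Lemma \ref{monotone} and trace preservation conspire to pin $\TT^t(\bar\rho_\W)$ inside $\W$.
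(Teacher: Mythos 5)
Your proof is correct and follows essentially the same route as the paper's: both arguments block-decompose the full-range stationary state $\bar\rho$ of Proposition \ref{rspace}.\ref{rspace2}, use Lemma \ref{monotone} together with the enclosure property of $\V$ to pin down $P_\W\TT^t(\bar\rho_\W)P_\W=\bar\rho_\W$, invoke trace preservation and positivity to confine $\TT^t(\bar\rho_\W)$ to $\M_\W$ (hence $\W$ is an enclosure), and then obtain the $\S^t$ statements by trace duality. Your only deviations are cosmetic: you derive the enclosure property of $\W$ directly by domination $\sigma\le\lambda\bar\rho_\W$ rather than citing Proposition \ref{enclosure}, and you spell out the $\M_C$-preservation step slightly more explicitly.
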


\begin{proof}

Consider a stationary state $\bar{\rho}$ which has the whole subspace $\Rr$ as range;
its existence is guaranteed by Proposition (\ref{rspace}.2).
Decompose it  according to the sections: $\bar{\rho}=\bar{\rho}_\V +\bar{\rho}_C+\bar{\rho}_\W$,
note that the range of $\bar{\rho}_\W$ is the whole subspace $\W$.
Since $\bar\rho$ is stationary, and since $P_\W \TT^t(\bar\rho_{\V})=0$, and also $P_\W \TT^t(\bar\rho_C)P_\W=0$
by Lemma \ref{monotone},
one has $P_\W\TT^t(\bar\rho _\W)P_\W=\bar\rho _\W$;
nothing enters the subspace $\W$.
It is also true, that nothing of $\TT^t(\bar{\rho}_\W)$ is outside of $\M_\W$,
because,
starting the evolution with $\bar{\rho}_\W$ alone, without any state in the complement of $\W$,
this would contradict preserving the trace and positivity.
So $\bar{\rho}_\W$ is a stationary state.
$\W$ as its range is therefore an enclosure.
Then in analogy to (\ref{cohencl}): $\TT^t(\M_C)\cap\M_\V=\{0\}$.

Now, this mapping of matrix blocks into itself can be formalized as:
$$
\forall\rho \quad{\rm with }\,\, \range(\rho)\subset\Rr,\quad  \TT^t(P_\alpha\rho P_\beta)=P_\alpha \TT^t(\rho))P_\beta,
$$
where both $P_\alpha$ and $P_\beta$ may be either $P_\V$ or $P_\W$.
Consider $A=P_\Rr\tdt A\tdt P_\Rr$ and any $\rho$ with $\range(\rho)\subset\Rr$:
$$
\Tr(\rho\tdt \S^t(P_\beta A P_\alpha))=\Tr(\TT^t(\rho)\tdt P_\beta A P_\alpha)
=\Tr(\TT^t( P_\alpha\rho P_\beta)\tdt A)=\Tr(\rho\tdt P_\beta \S^t(A) P_\alpha).
$$
This implies $\S^t(P_\beta A P_\alpha)= P_\beta \S^t(A) P_\alpha$.
\end{proof}

Conservation of self-adjointness
comes from linearly extending (following Definition \ref{defTT}) the maps $\TT^t$,
first defined as mapping state to state, to mappings of general matrices, where
obviously $\TT^t(\sigma^\dag)=[\TT^t(\sigma)]^\dag$.
When using complete positivity of the maps it makes sense to consider moreover the two
parts of decoherence blocks $\M_C$, one above and the other below the diagonal, separately.
This will be done in Subsection \ref{cohblocks}.
Such a decomposition is not needed before that.

Stationary states form a convex set in a vector space with finite dimensions.
There are extremal elements of this set, \textbf{extremal stationary states}.
There are one-to-one relations between extremal stationary states,
minimal stationary ranges and minimal enclosures:
\begin{proposition}\textbf{Equivalences and Decomposition}\label{equi}                         
\begin{enumerate}
  \item Each enclosure inside of $\Rr$ is the range of a stationary state.   \label{equi1}
  \item Different extremal stationary states have different ranges.   \label{equi2a}
  \item The range of an extremal stationary state is a minimal enclosure inside of $\Rr$ and vice versa. \label{equi2}
  \item A minimal enclosure in $\Rr$ is a minimal stationary range and vice versa.   \label{equi3}
  \item Each enclosure inside of $\Rr$, especially $\Rr$ itself, can be decomposed as an orthogonal sum of minimal enclosures. \label{equi4}
\end{enumerate}
\end{proposition}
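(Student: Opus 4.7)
The plan is to handle the five items in order, leveraging three recurring tools: a time-average construction starting from a maximally mixed state, a subtraction trick for uniqueness, and the block-matrix decomposition of Proposition \ref{blockindep}. For item \ref{equi1}, given an enclosure $\V\subset\Rr$ I would start from $\rho=P_{\V}/\dim\V$. The enclosure property keeps the range of $\TT^{t}(\rho)$ in $\V$, so the mean $\bar{\rho}$ is stationary with range inside $\V$. To show the range equals $\V$, for each $\psi\in\V$ use (\ref{stat}) in the form $\langle\psi|\bar{\rho}|\psi\rangle=\Tr(\rho\cdot\overline{Q_{\psi}})$; because $\psi\notin\D$ one has $\overline{Q_{\psi}}\neq 0$, and because $\V$ is an enclosure $\overline{Q_{\psi}}\in\M_{\V}$ with trace $1$, so $\langle\psi|\bar{\rho}|\psi\rangle=1/\dim\V>0$.

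Items \ref{equi2a}, \ref{equi2} and \ref{equi3} will all follow from one uniqueness lemma: on a minimal enclosure the stationary state is unique. Given a minimal enclosure $\V$, item \ref{equi1} produces some stationary $\rho$ with $\range(\rho)=\V$; any other stationary $\sigma$ supported in $\V$ must also have range exactly $\V$, since otherwise $\range(\sigma)$ would be a strictly smaller enclosure by Proposition \ref{enclosure}. I would then set $t^{\ast}=\sup\{t\geq 0:\rho-t\sigma\geq 0\}$: at $t^{\ast}$ either $\rho=t^{\ast}\sigma$ (forcing $\sigma=\rho$ after matching traces) or $\rho-t^{\ast}\sigma$ is a nonzero positive stationary operator with range strictly inside $\V$, contradicting minimality. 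Uniqueness immediately gives extremality, which proves item \ref{equi2a} and one half of item \ref{equi2}. For the other half, if an extremal stationary $\rho$ had a strictly smaller enclosure $\W$ inside its range, item \ref{equi1} would supply a stationary $\tau$ with range $\W$, and the nontrivial convex decomposition $\rho=\epsilon\tau+(1-\epsilon)(\rho-\epsilon\tau)/(1-\epsilon)$ for small $\epsilon>0$ would contradict extremality. Together with Proposition \ref{enclosure} this also matches minimal enclosures with minimal stationary ranges and yields item \ref{equi3}.

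Item \ref{equi4} proceeds by induction on dimension. The crucial geometric step is: if $\W\subsetneq\V\subset\Rr$ are both enclosures then $\V\cap\W^{\perp}$ is again an enclosure. This would use Proposition \ref{blockindep} applied to $\W$ to know that $\W^{\perp}\cap\Rr$ is an enclosure, so any state whose range sits in $\V\cap\W^{\perp}$ evolves with range both in $\V$ and in $\W^{\perp}\cap\Rr$, hence in their intersection $\V\cap\W^{\perp}$. Starting from $\V$, one extracts a minimum-dimensional (hence minimal) enclosure $\V_{1}\subset\V$, splits $\V=\V_{1}\oplus(\V\cap\V_{1}^{\perp})$, and iterates on the orthogonal complement. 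The main obstacle is exactly this orthogonality step: dissipative dynamics does not respect orthogonality in general, and the only reason the orthogonal complement of a sub-enclosure inside an enclosure is again an enclosure is the block decomposition supplied by Proposition \ref{blockindep}; everything else reduces to the time-averaging of item \ref{equi1} and the positivity-preservation subtraction trick.
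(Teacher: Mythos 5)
There is a genuine gap, and it sits at the foundation of your argument, in item \ref{equi1}. Your construction (the time mean $\bar\rho$ of $P_\V/\dim\V$) is fine as far as producing a stationary state with range \emph{contained} in $\V$, but your argument that the range is \emph{all} of $\V$ breaks down. To use equation (\ref{stat}) in the form $\langle\psi|\bar\rho|\psi\rangle=\Tr(\rho\cdot\overline{Q_\psi})$, the bar on $Q_\psi$ must denote the Heisenberg mean, i.e.\ the Ces\`{a}ro average of $\TT^{t\dag}(Q_\psi)$. But $\TT^{t\dag}$ is unital, not trace-preserving, and it does \emph{not} map $\M_\V$ into $\M_\V$ even when $\V$ is an enclosure --- that is precisely why the paper introduces the cut-off evolution $\S$. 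A two-level example with decay $|2\rangle\to|1\rangle$ and $\V=\lspan|1\rangle$ gives $\TT^{t\dag}(Q_{|1\rangle})\to\one$, so $\overline{Q_{|1\rangle}}=\one\notin\M_\V$ and its trace is $2$, not $1$. (If instead you meant the Schr\"odinger mean of $Q_\psi$ as a state, then it does lie in $\M_\V$ with trace $1$, but the identity $\langle\psi|\bar\rho|\psi\rangle=\Tr(\rho\cdot\overline{Q_\psi})$ no longer follows from (\ref{stat}).) Your claimed value $1/\dim\V$ is also generally false: if the unique stationary state on a minimal $\V$ is $\tau\neq P_\V/\dim\V$, then $\bar\rho=\tau$ and the diagonal entries are whatever $\tau$ dictates. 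The paper avoids all of this by taking the stationary state $\bar\rho$ with $\range(\bar\rho)=\Rr$ from Proposition \ref{rspace}.\ref{rspace2} and cutting it down to the block $\M_\V$: by Proposition \ref{blockindep} the block $\bar\rho_\V=P_\V\bar\rho P_\V$ evolves autonomously, its mean is stationary, and its range is automatically all of $\V$ because $\bar\rho$ is strictly positive on $\Rr$. Since every subsequent item leans on item \ref{equi1}, you need to repair this step; the fix is short but it is not the one you wrote.

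The remainder of your proposal is essentially sound and close to the paper's own route. Your uniqueness lemma via $t^\ast=\sup\{t:\rho-t\sigma\geq0\}$ is the same subtraction trick the paper uses for items \ref{equi2a} and \ref{equi2} (just packaged as a standalone lemma, which is a clean way to organize it --- though note that deducing item \ref{equi2a} for arbitrary extremal states requires first knowing their ranges are minimal enclosures, i.e.\ the forward half of item \ref{equi2}, so the logical order should be made explicit). Your item \ref{equi4} induction, resting on the fact that $\W^\perp\cap\Rr$ is an enclosure by Proposition \ref{blockindep} and that intersections of enclosures are enclosures, is exactly the paper's argument, and you correctly identify that this orthogonality step is the one place where block independence is indispensable.
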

\begin{proof}

(\ref{equi1})
Consider again a stationary state $\bar{\rho}$ which has the whole subspace $\Rr$ as range
and decompose it into matrix blocks $\bar{\rho}_\V+\bar{\rho}_\W+\bar{\rho}_C$,
where $\V$ is an enclosure inside $\Rr$.
By Proposition \ref{blockindep} evolutions of these parts are mutually independent, so
$\overline{\bar{\rho}_\V}=\bar{\rho}_\V$ is a stationary state and $\V$ is its range.

(\ref{equi2a})
Let $\rho$ be an extremal stationary state and $\V$ its range.
Suppose there is  another stationary state $\sigma$ with $\range(\sigma)=\V$.
Since the dimension of the space is finite, there would be an $\eps >0$ such that $\rho -\eps\sigma \geq 0$.
But then $\tau=(\rho-\eps\sigma)/(1-\eps)$ were also a stationary state
and $\rho$ could be decomposed into $\sigma$ and $\tau$, in contradiction to its extremality.

(\ref{equi2})
Let again $\rho$ be an extremal stationary state and $\V$ its range. This $\V$ is an enclosure (Proposition \ref{enclosure}).
Suppose there is a smaller enclosure inside: it would contain the range of another stationary state $\sigma$
(item \ref{equi1}). This would enable a decomposition of $\rho$ into $\sigma$ and $\tau$
as in the proof of item \ref{equi2a}, contradicting the extremality.
On the other hand,
a minimal enclosure  $\V\subset\Rr$ is the range of a stationary state $\rho$ (item \ref{equi1}).
If $\rho$ could be decomposed as $\rho=\alpha\sigma+(1-\alpha)\sigma'$, (necessarily with $\range(\sigma)\subset\range(\rho)=\V$),
there were a maximal $\eps$ of those numbers such that $\rho-\eps\sigma >0$.
The $\tau$ constructed as above, but using this $\eps_{max}$, were then at the border of the set of states with range in $\V$.
It would have an eigenvector inside of $\V$ to the eigenvalue zero, so its range,
which is again an enclosure, were strictly smaller than $\V$,
a contradiction to the minimality of $\V$.

(\ref{equi3}) Since ``stationary range'' is defined as the range of a stationary state,
the equivalence between extremal states and minimal enclosures, shown in proving (\ref{equi2}), gives the
equivalence between minimal stationary ranges and minimal enclosures.

(\ref{equi4}) Now consider some enclosure $\V\subset\Rr$.
If it is minimal, there is nothing to prove. Otherwise it contains a minimal enclosure $\V_1$.
The subspace $\W_1=\V_1^\perp\cap\Rr$ is also an enclosure (Proposition \ref{blockindep}),
and the intersection $\X_2=\W_1\cap\V$ of two enclosures is also an enclosure.
Either $\X_2$ is minimal, in which case the proof is finished, or it contains a minimal enclosure $\V_2$.
Then we proceed with $\W_2=(\V_1\cup\V_2)^\perp\cap\Rr$,  and proceed further until the decomposition of $\V$ is complete.
\end{proof}

Considering operators under cut off evolution enlarges the list of equivalences:
\begin{proposition} \textbf{Equivalences with invariant operators} \label{sinv}                          

\begin{itemize}
  \item A projector $P_\V \neq 0$ is invariant under $\S$, iff it projects onto an enclosure $\V\subset\Rr$.
  \item An observable $A=P_\Rr\cdot A\cdot P_\Rr$ is invariant under $\S$, iff its spectral projections are invariant.
  \item $\V$ is a minimal stationary range, iff $P_\V$ can not be decomposed into a sum of invariant projectors.
\end{itemize}
\end{proposition}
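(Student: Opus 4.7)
My plan is to handle the three items in order, leveraging Propositions \ref{blockindep}, \ref{rspace}, and \ref{equi}.

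For item 1, the forward direction (enclosure $\Rightarrow$ $P_\V$ invariant) uses Proposition \ref{blockindep}: since $\V$ is an enclosure in $\Rr$, so is $\W = \V^\perp \cap \Rr$, and $\TT^t$ respects the block decomposition $\M_\V \oplus \M_C \oplus \M_\W$. For any $\rho$, writing $\rho_\Rr = P_\Rr \rho P_\Rr = \rho_\V + \rho_C + \rho_\W$, only the $\rho_\V$-piece contributes to $\Tr(\TT^t(\rho_\Rr) P_\V)$, and its trace is preserved; by duality $\Tr(\rho\, \S^t(P_\V)) = \Tr(\rho P_\V)$ for every $\rho$, yielding $\S^t(P_\V) = P_\V$. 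The converse comes from testing $\S^t(P_\V) = P_\V$ against a state $\sigma$ with range in $\V$: since $P_\V \in \M_\Rr$ one first obtains $\V \subset \Rr$, and then duality gives $\Tr(\TT^t(\sigma) P_\V) = \Tr(\sigma) = \Tr(\TT^t(\sigma))$, so positivity forces $\range(\TT^t(\sigma)) \subset \V$.

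For item 2, the easy direction is mere linearity of $\S^t$ on the spectral decomposition $A = \sum_i \lambda_i P_i$. The harder direction, which I expect to be the main obstacle, requires three ingredients. First, $\S^t$ is completely positive (as a composition of the CP maps $\TT^{t\dag}$ and $B \mapsto P_\Rr B P_\Rr$) and unital on $\M_\Rr$: the latter follows because $\Tr(\rho\, \TT^{t\dag}(P_\D)) = \Tr(\TT^t(\rho) P_\D) = 0$ for every $\rho$ with range in $\Rr$ by Proposition \ref{rspace}.3, forcing $\S^t(P_\Rr) = P_\Rr$. Second, Kadison--Schwarz for unital 2-positive maps gives $\S^t(A^2) \geq \S^t(A)^2 = A^2$, and testing against the full-range stationary state $\bar\rho$ from Proposition \ref{rspace}.2 yields $\Tr(\bar\rho(\S^t(A^2) - A^2)) = \Tr(\TT^t(\bar\rho) A^2) - \Tr(\bar\rho A^2) = 0$; since $\bar\rho$ has full range $\Rr$ and $\S^t(A^2) - A^2$ is a nonnegative element of $\M_\Rr$, this forces $\S^t(A^2) = A^2$. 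Third, applying the same Schwarz argument to self-adjoint combinations $A + c A^k$ and subtracting previously obtained invariant terms inductively yields $\S^t(A^k) = A^k$ for every $k$, hence $\S^t(p(A)) = p(A)$ for every polynomial $p$ by linearity. Since each spectral projection $P_i$ is a polynomial in $A$ via Lagrange interpolation on the finite spectrum, each $P_i$ is $\S$-invariant.

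Item 3 is a corollary of item 1 together with Proposition \ref{equi}. If $\V$ is not a minimal stationary range, Proposition \ref{equi}.4 decomposes it as an orthogonal sum of at least two minimal enclosures, whose projectors are invariant by item 1 and sum to $P_\V$. Conversely, a nontrivial decomposition $P_\V = P_{\V_1} + P_{\V_2}$ into nonzero invariant projectors forces $\V_1, \V_2$ to be orthogonal (the sum of two projectors is itself a projector only when they are orthogonal) enclosures strictly inside $\V$ by item 1, contradicting minimality.
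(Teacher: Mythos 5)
Your proof is correct, and items 1 and 3 follow essentially the paper's own route (duality plus trace preservation for the enclosure/invariant-projector equivalence, and Proposition \ref{equi}.4 for the minimality statement). Where you genuinely diverge is the hard direction of item 2. The paper never touches the Schwarz inequality here: it subtracts $a_1 P_\Rr$ from $A=\sum_i a_i P_i$ to get an invariant positive operator vanishing exactly on $P_1\Rr$, tests it against states supported in $P_1\Rr$ to conclude that $P_1\Rr$ is an enclosure (hence $P_1$ invariant by the first bullet), and then peels off the spectral projections one at a time by iterating with $A+(a_2-a_1)P_1$, and so on. You instead establish unitality of $\S^t$ on $\M_\Rr$, invoke Kadison--Schwarz to get $\S^t(A^2)\geq A^2$, kill the difference by pairing with the full-range stationary state $\bar\rho$ of Proposition \ref{rspace}.2, bootstrap to all powers $A^k$, and recover the $P_i$ by Lagrange interpolation. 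Both arguments are sound; yours is the standard ``multiplicative domain'' argument from the quantum Markov semigroup literature and is arguably slicker. Two remarks on what each buys. First, the paper deliberately defers every use of complete positivity to Section \ref{Sectioncohcp}, and this proposition feeds into Theorems \ref{splitting2} and \ref{splitting3}, which are advertised as CP-free; your appeal to ``2-positivity'' superficially breaks that bookkeeping, although it is only cosmetic, since Kadison's inequality for self-adjoint elements requires nothing beyond positivity and unitality of $\S^t$ --- you should state it that way. Second, your route does require the extra lemmas you correctly supply (unitality of $\S^t$, existence of the full-range $\bar\rho$, and the small induction isolating $\S^t(A^{k+1})$ from the square of $A+cA^k$ by varying $c$), whereas the paper's peeling argument needs only the first bullet and positivity; on the other hand your argument yields at once that the whole unital $\ast$-algebra generated by an invariant observable is invariant, which is slightly more than the statement asks for.
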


\begin{proof}

$\bullet$ If $P_\V$ is invariant under $\S$, the space $\V$ must be a subspace of $\Rr$, by definition of $\S$.
For any $\rho$ with $\range(\rho)\subset\Rr$ we have then
\beq
\Tr(P_\V\rho)=\Tr(\S^t(P_\V)\rho)=\Tr(P_\V \TT^t(\rho)).    \label{PVinv}
\eeq
This implies, considering $\rho$ with $\range(\rho)\subset\V \neq \{0\}$ that $\V$ is an enclosure.
On the other hand, if $\V$ is an enclosure inside $\Rr$,
preserving the trace implies, again
for any $\rho$ with $\rho=P_\Rr\rho P_\Rr$,
$$
\Tr(P_\V\rho)=\Tr(P_\V \TT^t(\rho))=\Tr(\TT^{t\dag}(P_\V)\rho)=\Tr(\S^t(P_\V)\rho),
$$
implying $\S^t(P_\V)=P_\V$.

$\bullet$ Consider an observable $A=P_\Rr \tdt A\tdt P_\Rr$ and its spectral representation
$$
A=a_1 P_1+a_2P_2+\ldots \quad\quad \textrm{with} \quad \sum_i P_i=P_\Rr,
$$
where $a_1<a_2\ldots$.
If the $P_i$ are invariant, then $A$ is invariant.
On the other hand, $\TT^{t\dag}$ being unital, monotonicity (Theorem \ref{splitting}.\ref{Dspace3})
and the characterization in equation (\ref{HeisenbergR}) imply $\S^t(P_\Rr)=P_\Rr$,
and  $A$ being invariant under $\S$ implies
that also $A-a_1 P_\Rr=(a_2-a_1)P_2+(a_3-a_1)P_3 \ldots$ is invariant.
This operator is zero on the subspace $P_1\Rr$ and strictly positive on $(P_\Rr -P_1)\Rr$.
For $\rho=P_1\rho P_1$ we have
$$
\Tr(\TT^t(\rho)\tdt(A-a_1 P_\Rr))=\Tr(\rho\tdt\S^t(A-a_1 P_\Rr))=0,
$$
implying that $P_1$ projects onto an enclosure and is therefore an invariant operator.
So we know that $A+(a_2-a_1)P_1=a_2(P_1+P_2)+a_3P_3+\ldots$ is invariant, and
iterating in the same way we as above, we infer that
$(P_1+P_2)$, then $P_1+P_2+P_3$ \ldots and therefore all the $P_i$ are invariant.

$\bullet$ The decomposition of a projector as $P_\X=P_\V+P_\W$
is equivalent to the decomposition of the space as $\X=\V\oplus\W$,
and Proposition \ref{equi}.\ref{equi4} gives the proof.
\end{proof}

We remark that this Proposition enables the generalization of a result of Frigerio, \cite{F77},
where a characterization of the set of invariant operators is given,
under the condition that there exists a single stationary state which has the whole space $\Hh$ as its range.
The generalization is here

\begin{cor}
Suppose $\Rr$ can be split in a unique way into  minimal stationary ranges $\U_k$.
Then those operators $A=P_\Rr\tdt A\tdt P_\Rr$ which are invariant under the cut off evolution $\S^t$
form an abelian algebra, namely $\{A=\sum_k a_k P_{\U_k}\}$.
\end{cor}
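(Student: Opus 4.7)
The plan is to prove the Corollary by combining Proposition \ref{sinv} with the uniqueness hypothesis and the orthogonal decomposability of Proposition \ref{equi}.\ref{equi4}. There are two directions to establish: that every $A=\sum_k a_k P_{\U_k}$ is invariant under $\S^t$, and conversely that every $\S^t$-invariant operator of the form $A=P_\Rr\tdt A\tdt P_\Rr$ has this presentation.

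For the forward direction, each minimal stationary range $\U_k$ is an enclosure (Proposition \ref{equi}.\ref{equi3}), so by Proposition \ref{sinv} the projector $P_{\U_k}$ is $\S^t$-invariant. Linearity of $\S^t$ then makes any linear combination $\sum_k a_k P_{\U_k}$ invariant. Since the $P_{\U_k}$ are pairwise orthogonal projectors onto the summands of $\Rr=\bigoplus_k \U_k$, they commute, so the set of such operators is closed under multiplication and forms an abelian $*$-algebra with unit $\sum_k P_{\U_k}=P_\Rr$.

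For the reverse direction, I would start from an invariant $A=P_\Rr\tdt A\tdt P_\Rr$. Proposition \ref{sinv} (second item) tells me that its spectral projections $P_1,\dots,P_n$ are all $\S^t$-invariant, and since $\sum_i P_i=P_\Rr$ they form an orthogonal resolution of $P_\Rr$. By the first item of the same Proposition each $P_i$ projects onto an enclosure $\V_i\subset\Rr$. Applying Proposition \ref{equi}.\ref{equi4} decomposes each $\V_i$ into an orthogonal sum of minimal enclosures, i.e.\ minimal stationary ranges. Collecting these across $i$ yields a decomposition of $\Rr$ into minimal stationary ranges; by the uniqueness hypothesis it must coincide with $\{\U_k\}$, so each $\V_i$ is an orthogonal sum of a sub-collection of the $\U_k$'s and $P_i=\sum_{k\in I_i}P_{\U_k}$. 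Substituting back gives $A=\sum_i a_i P_i=\sum_k a_k P_{\U_k}$ with $a_k$ equal to the eigenvalue $a_i$ of the unique block containing $\U_k$.

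The main obstacle, and the place where the hypothesis genuinely enters, is the second step: without uniqueness, the further splitting of each $\V_i$ into minimal stationary ranges could produce minimal pieces that are not among the $\U_k$ (they could instead be ``rotated'' copies living inside some $\X_\ell$ of Theorem \ref{splittingxc}), and then $P_i$ would not be expressible as a sum of $P_{\U_k}$'s; the algebra of invariants would be strictly larger and generally non-abelian. So the proof reduces to observing that uniqueness of the minimal decomposition of $\Rr$ forces each invariant spectral projection to be a coarse-graining of $\{P_{\U_k}\}$.
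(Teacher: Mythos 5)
Your proof is correct and takes the route the paper intends: the paper offers no written proof of this Corollary, presenting it as an immediate consequence of Proposition \ref{sinv}, and your two-directional argument (invariance of each $P_{\U_k}$ for one inclusion, then invariant spectral projections onto enclosures which the uniqueness hypothesis forces to be coarse-grainings of $\{P_{\U_k}\}$ for the other) supplies exactly the missing details. The only step left implicit is that a non-self-adjoint invariant $A$ should first be split into self-adjoint parts, each invariant because $\S$ commutes with taking adjoints, before the spectral argument of Proposition \ref{sinv} is applied.
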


We have now shown more than we need to give
\begin{proof}\textit{of Theorem \ref{splitting2}}\\
$\Rr$ is an enclosure as stated in Proposition \ref{rspace}.\ref{rspace3}.
So the Theorem is a special case of the last item of Proposition \ref{equi}.\ref{equi4}.
\end{proof}

\subsection{Handling non-orthogonal ranges of stationary states}\label{details3}

Let $\rho$ and $\sigma$ be two different extremal invariant states, with $\V=\rm{range}(\rho)$
not orthogonal to $\range(\sigma)$.
All these subspaces are contained in $\Rr$.
From the mapping of matrix-blocks, Proposition \ref{blockindep}, we infer that $\sigma_\V=P_\V\sigma P_\V$ is stationary,
from uniqueness, Proposition \ref{equi}.\ref{equi2}, we infer $\sigma_\V=\lambda\, \rho$
with $\lambda > 0$.
So the rectangular matrix block $\sigma P_\V$ has as many independent column-vectors as $\rho$,
and $\dim(\rm{range}(\sigma))\geq \dim(\rm{range}(\rho))$.
The opposite inequality follows with the same arguments, and we infer

\begin{lem}  \label{samedim}                                    
If there exist two different extremal invariant states with mutually non-orthogonal
ranges, their ranges have the same dimension.
\end{lem}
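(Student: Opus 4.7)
The plan is to use the block-matrix decomposition of Proposition \ref{blockindep} to show that the projection of $\sigma$ onto $\V=\range(\rho)$ is a nonzero stationary state with range exactly $\V$; uniqueness of the extremal stationary state on $\V$ then pins this projection down to a multiple of $\rho$, and a rank count, together with the symmetry $\rho\leftrightarrow\sigma$, yields the equality of dimensions.

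First I would set $\V=\range(\rho)$ and $\W=\range(\sigma)$. Both are minimal stationary ranges inside $\Rr$ by Proposition \ref{equi}.\ref{equi2}. Since $\V$ is an enclosure, Proposition \ref{blockindep} implies that $\TT^t$ preserves the block decomposition relative to $\V$ and $\V^\perp\cap\Rr$. Applied to the stationary decomposition $\sigma=\sigma_\V+\sigma_C+\sigma_{\V^\perp}$, where $\sigma_\V:=P_\V\sigma P_\V$, this gives that each block is individually invariant; in particular $\sigma_\V$ is a stationary positive operator with range inside $\V$.

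The hypothesis that $\V$ and $\W=\range(\sigma)$ are non-orthogonal is precisely the statement $\sigma_\V\neq 0$. Its range is an enclosure contained in the minimal enclosure $\V$ (Proposition \ref{enclosure} and Proposition \ref{equi}.\ref{equi2}), hence equals $\V$. By the uniqueness of the extremal stationary state with a given range (Proposition \ref{equi}.\ref{equi2a}), one must have $\sigma_\V=\lambda\rho$ with $\lambda>0$, so $\rank(P_\V\sigma P_\V)=\dim\V$.

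Finally, since $P_\V\sigma P_\V$ factors through $\sigma P_\V$ via multiplication by $P_\V$ on the left, one obtains
\[
\dim\V \;=\; \rank(P_\V\sigma P_\V) \;\leq\; \rank(\sigma P_\V) \;\leq\; \rank(\sigma) \;=\; \dim\W.
\]
Interchanging the roles of $\rho$ and $\sigma$ yields $\dim\W\leq\dim\V$, and the lemma follows. The only subtle point I expect is checking that $\range(\sigma_\V)$ really fills all of $\V$ rather than merely being nonzero; this relies essentially on the \emph{minimality} of $\V$ as an enclosure, so that any nonzero stationary positive operator with range in $\V$ must have range equal to $\V$.
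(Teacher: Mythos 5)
Your proof is correct and follows essentially the same route as the paper's: both isolate the stationary block $\sigma_\V=P_\V\sigma P_\V$ via Proposition \ref{blockindep}, identify it with $\lambda\rho$ through minimality and uniqueness of the extremal state on $\V$, and then compare dimensions by a rank count on $\sigma P_\V$, concluding by symmetry. The one point you flag as subtle --- that $\range(\sigma_\V)$ fills all of $\V$ --- is handled exactly as you suggest, so there is no gap.
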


The sum of invariant states gives (when normalized) again an invariant state.
The range of this state is an enclosure, and we can infer:
\begin{lem}                                                 
The linear span of two enclosures is an enclosure.
\end{lem}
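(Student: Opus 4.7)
The plan is to use Proposition \ref{equi}.\ref{equi1} to convert the question about enclosures into a question about stationary states, where linearity of $\TT^t$ makes combinations easy. Since the analysis of this section is carried out inside $\Rr$ (the hypotheses concern extremal stationary states and their ranges, which by definition live in $\Rr$), I shall take the two enclosures $\V_1$ and $\V_2$ to be subspaces of $\Rr$; if one starts from an enclosure not a priori contained in $\Rr$, Proposition \ref{blockindep} and the fact that $\Rr$ itself is an enclosure (Proposition \ref{rspace}.\ref{rspace3}) allow intersection with $\Rr$ before applying the argument.

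First I would invoke Proposition \ref{equi}.\ref{equi1} to obtain stationary states $\rho_1$ and $\rho_2$ with $\range(\rho_i)=\V_i$. Next I would form the normalized sum $\rho := \tfrac{1}{2}(\rho_1+\rho_2)$; by linearity of $\TT^t$ this is again a stationary state, and it is manifestly positive with unit trace. The key elementary step is the identification
\begin{equation*}
\range(\rho) \;=\; \V_1 + \V_2,
\end{equation*}
which follows from the standard identity for positive operators $\range(A+B)^\perp = \ker(A+B) = \ker A \cap \ker B = \range(A)^\perp \cap \range(B)^\perp$, so that the orthogonal complement of $\range(\rho)$ coincides with $\V_1^\perp\cap\V_2^\perp = (\V_1+\V_2)^\perp$. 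Finally, Proposition \ref{enclosure} tells us that the range of any stationary state is an enclosure, so $\V_1+\V_2$ is an enclosure.

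There is essentially no hard step here; the only conceptual point to be careful about is the scope of ``enclosure'' in the statement. Because Proposition \ref{equi}.\ref{equi1} (the bridge from enclosures to stationary states) is formulated for subspaces inside $\Rr$, the short proof above is clean only in that setting. In the generality of arbitrary enclosures one would have to argue separately that any enclosure contains the attractor of the states it supports, which again lives in $\Rr$; but this broader case is not needed for the use made of the lemma in the following subsection, so I would confine the proof to the $\Rr$-context already fixed by Lemma \ref{samedim}.
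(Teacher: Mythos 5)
Your proof is correct and is essentially the paper's own argument: the paper derives the lemma in one line by summing the stationary states supported on the two enclosures, normalizing, and observing that the range of the resulting stationary state is the linear span, which is an enclosure by Proposition \ref{enclosure}. Your additional care about the identity $\range(\rho_1+\rho_2)=\V_1+\V_2$ and about restricting to the $\Rr$-context merely makes explicit what the paper leaves implicit.
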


Now if $\U$ and $\V$ are two different minimal non-orthogonal enclosures,
each with dimension $n$, the dimension of their linear span $\X$
is larger than $n$ but not bigger than $2n$.
Since it is an enclosure it
can be decomposed as $\X=\V\oplus \ldots$ into minimal enclosures -- see Proposition \ref{equi}.\ref{equi4} --
which are all non-orthogonal to $\U$ and therefore (Lemma \ref{samedim}) they have all the same dimension $n$.
So $\dim(\X)$ has to be exactly $2n$.
\begin{lem} \label{RingW}                                 
The linear span $\X$ of two non-orthogonal minimal enclosures $\U$ and $\V$,
each of dimension $n$,
can be decomposed into $\X=\V\oplus\W$, where $\W$ is again a minimal
enclosure of dimension $n$ and is again not orthogonal to $\U$.

\end{lem}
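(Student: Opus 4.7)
The plan is to produce the desired $\W$ as $\W := \V^\perp \cap \X$, i.e.\ the orthogonal complement of $\V$ taken inside $\X$. First I would check that this is an enclosure: since $\V$ is a minimal enclosure in $\Rr$, Proposition \ref{blockindep} gives that $\V^\perp \cap \Rr$ is an enclosure, and the preceding discussion already established that $\X$ is an enclosure (as the linear span of the two enclosures $\U$ and $\V$). The intersection of two enclosures is immediately an enclosure from Definition \ref{defencl}: if $\range(\rho)$ lies in both, then so does $\range(\TT^t(\rho))$. This produces $\W$ as an enclosure with $\X = \V \oplus \W$ (orthogonal direct sum), and the dimension count $\dim(\X) = 2n$ pins down $\dim(\W) = n$.

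Next I would show that $\W$ cannot be orthogonal to $\U$. Writing $\X = \U + \V$ as a (not necessarily orthogonal) linear span, if one had $\W \perp \U$, then combined with $\W \perp \V$ this would yield $\W \perp \X$. Since $\W \subset \X$, this would force $\W = \{0\}$, contradicting $\dim(\W) = n \geq 1$.

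Finally, the minimality of $\W$ follows from Proposition \ref{equi}.\ref{equi4} applied to the enclosure $\W$: it splits orthogonally as $\W = \W_1 \oplus \cdots \oplus \W_r$ into minimal enclosures. Each $\W_j$ is a nonzero subspace of $\V^\perp \cap \X$, and the same span argument shows each $\W_j$ is non-orthogonal to $\U$; Lemma \ref{samedim} (via the identification of minimal enclosures with extremal stationary states from Proposition \ref{equi}.\ref{equi2}) then forces $\dim(\W_j) = n$. Summing dimensions gives $r = 1$, so $\W$ is itself minimal. The only subtlety lies in the non-orthogonality step and the dimension bookkeeping; all the structural input -- that $\V^\perp \cap \Rr$ is an enclosure, that extremal stationary states correspond one-to-one with minimal enclosures, and the equal-dimension Lemma \ref{samedim} -- has already been developed, so I do not anticipate a serious obstacle.
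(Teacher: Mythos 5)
Your proposal is correct and follows essentially the same route as the paper: the paper obtains $\W$ by applying its decomposition procedure (Proposition \ref{equi}.\ref{equi4}) to the enclosure $\X$ starting from $\V$ — whose first step is precisely your $\W=\V^\perp\cap\X$ — and then uses the same span argument plus Lemma \ref{samedim} and the bound $\dim\X\leq 2n$ to force a single $n$-dimensional, non-orthogonal-to-$\U$ complement. You merely spell out explicitly the non-orthogonality and dimension bookkeeping that the paper leaves implicit.
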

This orthogonal decomposition of $\X$ enables us to find a whole family of
non-orthogonal enclosures:

\begin{proposition}\textbf{Existence of a ring of minimal enclosures}\label{ring}                       

If there exists a pair of non-orthogonal minimal enclosures $\U$ and $\V$,
there exists a whole ring \, $\V(\alpha)\subset \X= lin.span(\U,\V)$,
$\alpha \in (-\pi/2, \,+\pi/2]$, of minimal enclosures,
with $\V=\V(0)$, $\U=\V(\alpha_\U)$ for some $\alpha_\U$, and $\V(\pi/2)=\V^\perp\cap\X$.
Representing $\X$ as $\X= \C^2\otimes\V$ gives
for the set of projectors $P(\alpha)$ onto $\V(\alpha)$ the equivalence relations
\beq\label{ring2}
P(\alpha)= p(\alpha)\otimes\one ,\quad\quad
p(\alpha)=\left(
            \begin{array}{cc}
              \cos^2\alpha & \sin\alpha\cos\alpha \\
              \sin\alpha\cos\alpha & \sin^2\alpha \\
            \end{array}
          \right)\, ,
\eeq
as each $P(\alpha)$ can be decomposed as
\beqa\label{ringP}
P(\alpha)=\cos^2\alpha \tdt P_\V
 + \sin\alpha\cos\alpha \tdt (Q +  Q^\dag) + \sin^2\alpha \tdt P_\W ,\\
{\rm with}\quad Q\tdt Q^\dag=P_\V,\quad Q^\dag\tdt Q=P_\W.\label{ringPP}
\eeqa
\end{proposition}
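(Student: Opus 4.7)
The plan is to extract the whole ring from the single invariant projector $P_\U$, by reading off a tensor-product factorization $\X\simeq \C^2\otimes \V$ that is preserved by $\S^t$, and then parametrizing the $P(\alpha)$ as rank-one rotations in the $\C^2$-factor.

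First I decompose $P_\U$ in block form with respect to $\X=\V\oplus\W$ (valid by Lemma \ref{RingW}), writing $P_\U = P_{\U,\V} + Q + Q^\dag + P_{\U,\W}$ with $Q=P_\V P_\U P_\W$. Since $P_\U$ is $\S^t$-invariant (Proposition \ref{sinv}) and $\S^t$ preserves each matrix block separately (Proposition \ref{blockindep}), each summand is $\S^t$-invariant on its own. Because $\V$ and $\W$ are minimal enclosures, Proposition \ref{sinv} forces every self-adjoint $\S^t$-invariant observable supported on $\V$ to be a scalar multiple of $P_\V$, and similarly for $\W$; hence $P_{\U,\V}=c_\V P_\V$ and $P_{\U,\W}=c_\W P_\W$, with $c_\V,c_\W>0$ by non-orthogonality of $\U$ to both $\V$ and $\W$ (Lemma \ref{RingW}). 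Computing $P_\U^2=P_\U$ in block form, together with $\dim\V=\dim\W=\rank(P_\U)=n$ (Lemma \ref{samedim}), yields $c_\V+c_\W=1$, $Q\tdt Q^\dag=c_\V c_\W P_\V$ and $Q^\dag\tdt Q=c_\V c_\W P_\W$. Setting $\hat Q:=Q/\sqrt{c_\V c_\W}$ then gives the partial-isometry relations (\ref{ringPP}) and a canonical identification $\X\simeq \C^2\otimes\V$ under which $P_\V,\,P_\W,\,\hat Q,\,\hat Q^\dag$ become $|0\rangle\langle 0|\otimes\one,\,|1\rangle\langle 1|\otimes\one,\,|0\rangle\langle 1|\otimes\one,\,|1\rangle\langle 0|\otimes\one$. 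With $\cos^2\alpha_\U:=c_\V$, this also identifies $P_\U$ with $P(\alpha_\U)$.

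For arbitrary $\alpha\in(-\pi/2,\pi/2]$ I then define $P(\alpha)$ by (\ref{ringP}) using this $\hat Q$. Under the tensor identification, $P(\alpha)=|\alpha\rangle\langle\alpha|\otimes\one$ with $|\alpha\rangle=\cos\alpha\,|0\rangle+\sin\alpha\,|1\rangle$, which immediately confirms (\ref{ring2}), idempotency, and $\rank P(\alpha)=n$. Invariance under $\S^t$ is automatic: $P_\V$ and $P_\W$ are $\S^t$-invariant, while $\hat Q+\hat Q^\dag=(Q+Q^\dag)/\sqrt{c_\V c_\W}$ is invariant as the normalized coherence block of the invariant $P_\U$. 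By Proposition \ref{sinv}, $\V(\alpha):=\range(P(\alpha))$ is therefore an enclosure. For minimality: when $\alpha\neq\pi/2$, a direct block computation using $\hat Q(\W)\subset\V$ shows $\V(\alpha)\cap\W=\{0\}$, so every nonzero subspace of $\V(\alpha)$ is non-orthogonal to $\V$; hence any minimal sub-enclosure of $\V(\alpha)$ has dimension $n$ by Lemma \ref{samedim}, which forces it to equal $\V(\alpha)$. The case $\alpha=\pi/2$ recovers $\W$, already minimal. Finally $\V(0)=\V$, $\V(\alpha_\U)=\U$, and $\V(\pi/2)=\V^\perp\cap\X$.

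The main obstacle is the first part: showing that the off-diagonal block of $P_\U$ is, up to a positive scalar, a partial isometry pairing $\V$ with $\W$, and that this partial isometry is itself $\S^t$-invariant. This is exactly what lets one replace the abstract pair $(\U,\V)$ by a genuine tensor-product picture $\C^2\otimes\V$; once that structure is in hand, the whole one-parameter ring, its invariance, and its minimality fall out uniformly from elementary properties of rank-one projectors in $\C^2$ together with Lemma \ref{samedim}.
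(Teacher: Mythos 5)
Your proof is correct and follows essentially the same route as the paper: decompose the invariant projector $P_\U$ into matrix blocks, use block-independence of $\S^t$ and the uniqueness of invariant observables on minimal enclosures to pin the diagonal blocks to multiples of $P_\V$ and $P_\W$, extract the partial-isometry relations from $P_\U^2=P_\U$, and obtain the ring as invariant linear combinations of $P_\V$, $P_\W$ and $Q+Q^\dag$. You in fact supply two details the paper leaves implicit --- the explicit solution of the idempotency equations ($c_\V+c_\W=1$, $QQ^\dag=c_\V c_\W P_\V$) and the verification that each $\V(\alpha)$ is \emph{minimal} via $\V(\alpha)\cap\W=\{0\}$ and Lemma \ref{samedim} --- both of which are sound.
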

\begin{proof}
The cut off evolution $\S$ does not mix matrix blocks associated to enclosures (Proposition \ref{blockindep}).
Decomposing the invariant projector $P_\U$ into $P_\U=M_\V+M_C+M_\W$ must therefore
give invariant components.
From  Proposition \ref{sinv} we can infer that $M_\V$ must be some multiple of $P_\V$,
and that $M_\W$ must be some multiple of $P_\W$.
Inserting this decomposition into the equation $P_\U^2=P_\U$ gives a set of equations
for the components, with equations (\ref{ringP}, \ref{ringPP}) as solution, with some $\alpha$.
Also the matrix block $M_C=Q + Q^\dag$, giving the coherence of $\V$ with $\W$, must be invariant.
Therefore, any linear combination of $P_\V$, $P_\W$ and $Q+Q^\dag$ is invariant and
the $P(\alpha)$ are invariant projectors for each $\alpha$.

Choosing a pair of bases, one for $\V$ and one for $\W$, such that the isometry $Q$
which appears in equation (\ref{ringP}) maps one basis onto the other,
representing $\X$ as $\X= \C^2\otimes\V$,
makes $P(\alpha)= p(\alpha)\otimes\one$,
with the $p(\alpha$) given in equation (\ref{ring2}).
\end{proof}

The existence of such a ``ring'' of enclosures is not yet the ultimate wisdom.
Investigating the effects of complete positivity of $\TT$ and $\S$, we will see in Proposition
\ref{sphere} that not only $M_C=Q+Q^\dag$ is invariant, but also $Q$ and $Q^\dag$ separately,
and that there exists a whole sphere of enclosures inside of a $2n$-dimensional subspace $\X$.
But here, in this Section, we study only those effects, which would exist also for
semigroups which are just positive.
And the relation of enclosures ``being in one ring'' suffices to infer a unitary equivalence:

\begin{proposition} \textbf{Unitary relations inside a ring} \label{ringu}                             

For a ring of enclosures with projectors $P(\alpha)$ as in Proposition \ref{ring} there exists a group
of unitary operators $U_\alpha$ and $R_\alpha$ rotating and reflecting the ring,
\beq
R_\alpha:=\one-2P(\alpha),\qquad U_\alpha:=R_{\alpha/2}\tdt R_0
\eeq
\beq
U_\alpha P(\beta)U_\alpha^\dag =P(\alpha+\beta),\qquad R_\alpha\tdt P(\beta)\tdt R_\alpha =P(2\alpha-\beta),
\qquad U_\alpha R_\beta U_\alpha^\dag=R_{\alpha+\beta},
\eeq
giving a symmetry both for $\TT$ restricted to states with range in $\Rr$, and for $\S$:
\beqa
\forall \,\,\rho &&= P_\Rr\rho P_\Rr\quad \nonumber\\
&&\TT^t(U_\alpha \rho U_\alpha^\dag) = U_\alpha \TT^t(\rho) U_\alpha^\dag,
\qquad \TT^t(R_\alpha \rho R_\alpha)=R_\alpha\tdt \TT^t(\rho)\tdt R_\alpha, \\
\forall \,\, A &&= P_\Rr A P_\Rr\quad \nonumber\\
&&\S^t(U_\alpha^\dag A U_\alpha) = U_\alpha^\dag \S^t(A) U_\alpha, \qquad
\S^t(R_\alpha \tdt A \tdt R_\alpha)=R_\alpha\tdt \S^t(A)\tdt R_\alpha.
\eeqa
\end{proposition}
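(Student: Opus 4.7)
The plan is to reduce everything to the two-dimensional $\C^2$-factor of $\X = \C^2 \otimes \V$. First, the three operator identities $U_\alpha P(\beta) U_\alpha^\dag = P(\alpha+\beta)$, $R_\alpha P(\beta) R_\alpha = P(2\alpha-\beta)$ and $U_\alpha R_\beta U_\alpha^\dag = R_{\alpha+\beta}$ reduce, via the tensor form $P(\alpha) = p(\alpha)\otimes\one_\V$ of equation (\ref{ring2}), to identities in the $2\times 2$ factor. A short calculation gives $u(\alpha) = r(\alpha/2)\,r(0) = \cos\alpha\cdot\one - i\sin\alpha\cdot\sigma_y$, so $u(\alpha)$ is the rotation by $2\alpha$ about the $y$-axis of the Bloch sphere; $p(\beta)$ projects onto the point $(\sin 2\beta,\,0,\,\cos 2\beta)$ of the $xz$-great-circle; and $r(\alpha)$ is the reflection in its orthogonal line. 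The three identities follow from standard 2D geometry and imply the composition law $U_\alpha U_\beta = U_{\alpha+\beta}$.

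The core of the proof is to show that $\TT^t$ restricted to $\M_\X$ has the product form $\mathrm{id}\otimes\Psi^t$ on $M_2(\C)\otimes\M_\V$ for a single semigroup $\Psi^t$ on $\M_\V$. Granting this, the symmetry inside $\M_\X$ is immediate: $U_\alpha = u(\alpha)\otimes\one_\V$ and $R_\alpha = r(\alpha)\otimes\one_\V$ act only in the $2\times 2$-factor, where $\TT^t$ is the identity, so conjugation commutes with $\TT^t$. To establish the factorisation I apply Proposition \ref{blockindep} to the enclosure decomposition $\X = \V(0)\oplus\V(\pi/2)$, which produces a priori three independent complex-linear maps $\Psi^t_0, \Psi^t_1, \Phi^t$ acting on the three matrix blocks of $\M_\X$. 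The decisive extra input is that $\V(\pi/4)$ is also an enclosure: evolving a Hermitian state $|+\rangle\langle+|\otimes\sigma$, expanding it in the computational basis, and requiring the output to again have range in $\V(\pi/4)$ forces $\Psi^t_0(\sigma) = \Psi^t_1(\sigma) = \Phi^t(\sigma)$ on every Hermitian $\sigma$; writing $\Psi^t$ for this common map and extending complex-linearly yields the tensor factorisation on all of $\M_\X$.

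For $\rho \in \M_\Rr$ with parts outside $\M_\X$, I invoke Proposition \ref{blockindep} with the enclosure $\X$ to split $\rho$ into parts in $\M_\X$, $\M_{C(\X,\X^\perp\cap\Rr)}$ and $\M_{\X^\perp\cap\Rr}$, each preserved by $\TT^t$. On $\M_{\X^\perp\cap\Rr}$ the symmetry is trivial since $U_\alpha$ and $R_\alpha$ restrict to the identity there. On the coherence block I repeat the tensor-factor argument: block-independence applied simultaneously to the enclosures $\V(\beta)$ and $\X^\perp\cap\Rr$ shows that $\M_{C(\V(\beta),\X^\perp\cap\Rr)}$ is preserved for every $\beta$ in the ring, and matching $\beta = 0, \pi/2, \pi/4$ forces the induced action on the half-block to factor as the identity on the $\C^2$-direction tensored with some map. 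Conjugation by $U_\alpha$ again lives entirely in the $\C^2$-factor and commutes with $\TT^t$. The $\S^t$-statements finally follow from the $\TT^t$-statements by duality, using $\Tr(\rho\cdot\S^t(U_\alpha^\dag A U_\alpha)) = \Tr(\TT^t(U_\alpha\rho U_\alpha^\dag)\cdot A)$ and cyclicity of the trace.

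The main obstacle is the rigidity step underlying the tensor factorisation. Morally it is a clean linear-algebra observation: a linear map on $\C^2\otimes W$ preserving three distinct subspaces $v(\beta_i)\otimes W$ with $\beta_i = 0, \pi/2, \pi/4$ is forced to act as the identity in the $\C^2$-direction. The technical care involves promoting real-linear information on the Hermitian subspace to complex-linear information on the full matrix space, and handling the coherence half-blocks in the absence of any CP-based splitting of $\M_C$ into upper and lower halves. Once that is in hand, complete positivity is not needed anywhere in the argument, consistent with the remark that CP enters only for the sharper results of later subsections.
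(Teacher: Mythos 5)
Your reduction of the three operator identities to the $2\times 2$ factor and your final duality step for $\S^t$ are fine, but the core of the argument --- the tensor factorisation $\TT^t|_{\M_\X}=\mathrm{id}\otimes\Psi^t$ and the ``rigidity'' observation you base it on --- is false at the level of generality required here (positivity without complete positivity), so there is a genuine gap. Matching the images of $\M_{\V(0)}$, $\M_{\V(\pi/2)}$ and $\M_{\V(\pi/4)}$ (or even of every $\M_{\V(\beta)}$ in the ring) pins down the action on the coherence block only on the $n^2$-dimensional subspace $\{(E_{01}+E_{10})\otimes\sigma\}$ (matrix units in the $\C^2$ factor); the complementary subspace $\{(E_{01}-E_{10})\otimes\sigma\}$ is left completely undetermined, and without complete positivity there is no splitting of $\M_C$ into halves to control it. The paper's own example $\Hh=\C^2$, $\TT^t(\rho)=e^{-t}\rho+(1-e^{-t})\rho^\dag$ (transposition, Subsection \ref{cohblocks}) is a concrete counterexample to your rigidity claim: every real projector $p(\alpha)$ is stationary, so the full ring of enclosures is present and all the subspaces you match are preserved, yet $\TT^t$ is not a multiple of the identity on $M_2(\C)=\C^2\otimes\C$ --- it multiplies the antisymmetric coherence $E_{01}-E_{10}$ by $2e^{-t}-1$. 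The factorisation you propose to ``grant'' is in fact one of the CP-dependent conclusions of Subsection \ref{cohblocks} (it needs the sphere and Lemma \ref{uniquecoh}), not an ingredient available at this stage.

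The Proposition is nevertheless true, and the paper proves it without any factorisation of the dynamics: conjugation by the reflection $R_\alpha=\one-2P(\alpha)$ acts as the scalar $+1$ on $\M_{\V(\alpha)}$ and on $\M_{\V(\alpha)^\perp\cap\Rr}$, and as the scalar $-1$ on the coherence block between $\V(\alpha)$ and $\V(\alpha)^\perp\cap\Rr$; since Proposition \ref{blockindep} guarantees that $\TT^t$ and $\S^t$ preserve each of these three blocks separately, any map that is a scalar on each block automatically commutes with them. The rotations then follow because $U_\alpha=R_{\alpha/2}\tdt R_0$ is a product of two reflections. Note that in the transposition example conjugation by the real rotations does commute with $\TT^t$ even though the dynamics does not factorise --- exactly the distinction your argument misses. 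Replacing your factorisation step by this scalar-on-invariant-blocks observation repairs the proof.
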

\begin{proof}
Decompose $\Rr=\X\oplus(\X^\perp\cap\Rr)$,
with $\X$ being the $2n$-dimensional subspace supporting the ring.
Moreover,  decompose $\X$ into two mutually orthogonal enclosures, $\X=\V\oplus\W$.
Consider the action of operators only on $\Rr$,
where the  matrix-sections are not being mixed, neither by $\TT$ nor by  $\S$ (Proposition \ref{blockindep}).
So  $R_0=\one-2P(0)$ acting as $\rho\mapsto R_0\tdt\rho\tdt R_0$ is a reflection of the ring,
commuting with the semigroups (restricted to their action on $\Rr$),
since it does nothing but changing the signs of the coherence blocks giving
coherence of $\V$ with the other enclosures orthogonal to $\V$.
The same argument holds for each reflection $R_\alpha=\one-2P(\alpha)$,
just by considering its action in the case of decomposing $\X=\V(\alpha)\oplus\V(\alpha+\pi /2)$.

Representing $\X$ as $\X= \C^2\otimes\V$ and
using $2\times 2$-matrices $^2U_\alpha$ to perform the rotation in $\C^2$,
we see that we may construct the rotations and the reflections of the ring
as $U_\alpha=\, ^2U_\alpha\otimes\one$, and $R_\alpha=U_\alpha R_0 U_\alpha^\dag $.
Since $R_\alpha\tdt R_0=U_{2\alpha}$,
the reflections $R_\alpha$ generate the whole group,
and also the rotations by the $U_\alpha$ commute with the actions of the semigroups
$\S$ and $\TT$ restricted to $\M_\Rr$.
\end{proof}

Transformations by these unitary operators $U_\alpha$ commute with the cut-off semigroups, they represent dynamical symmetries on $\Rr$,
they map stationary states onto stationary states and invariant operators onto invariant operators.
The symmetry-group of one ring of minimal enclosures, formed by these unitary operators,
is equivalent to the group $O(2)$.

There is moreover the possibility that there exist enclosures $\X$ with dimension $m\cdot n$ with $m>2$,
where each pair of minimal enclosures contained in $\X$ is in a ring as we just described.
More details on such structures follow in Corollary \ref{splittingx} and in Subsection \ref{cohblocks}.
But even without such precise knowledge one can now prove the equivalences of decompositions:

\begin{proof} \emph{of Theorem \ref{splitting3}}

Consider two different decompositions into minimal enclosures,
$$\Rr=\bigoplus \V_k \qquad \textrm{and}\qquad \Rr=\bigoplus \W_\ell.$$
If each $\W_\ell$ is identical to some $\V_k$ the decompositions are identical.
Otherwise we have to show the existence of a unitary operator mapping each $\W_\ell$ onto some $\V_k$.
If $\V_k$ and $\W_\ell$ are neither identical nor orthogonal for some pair $(k,\ell)$ -- w.l.o.g. we assume $k=1,\,\ell=1$ --
they have the same dimension and
they span a subspace with  twice that dimension.
Therefore, in each of the two decompositions there must be at least one more enclosure
which is not orthogonal to this space of the other decomposition, w.l.o.g.
these are $\V_2$, not orthogonal to $\W_1$, and $\W_2$, not orthogonal to $\V_1$.
Looking wether there exists one more enclosure $\W_3$ which is not-orthogonal to  $\V_1\oplus\V_2$,
including this and also a $\V_3$ and then iterating, one gets an ordered collection with those properties:

$\V_k$ is not orthogonal to the space spanned by $\W_1,\W_2,\ldots\W_{k-1}$, and

$\W_k$ is not orthogonal to the space spanned by $\V_1,\V_2,\ldots\V_{k-1}$. \\
They all span a subspace
\beq
\X=\bigoplus_1^m\V_k=\bigoplus_1^m\W_k.
\eeq
By Proposition \ref{ringu} there is a unitary operator $U_1$ mapping $\W_1\rightarrow\V_1$,
giving a unitary mapping of states or of observables inside the ring spanned by these two enclosures,
commuting with the cut-off semigroups.
Therefore it maps enclosures onto enclosures, such that
$$
U_1 :\quad \W_1\rightarrow\V_1,\qquad (\X\ominus\W_1)\rightarrow (\X\ominus\V_1).
$$
This unitary operator acts as unity on $\X^\perp$.
If $m=2$, the mapping of $\X$ is already as it shall be, since then
$\W_2=\X\ominus\W_1$ and $\V_2=\X\ominus\V_1$.
Otherwise, if $m>2$, we observe that there exist two different decompositions for $\X\ominus\V_1$:
$$
\X\ominus\V_1=\bigoplus_2^m\V_k=\bigoplus_2^m U_1(\W_k).
$$
So there exists another unitary operator, $U_2$, mapping  $U_1(\W_2)\rightarrow\V_2$ (the indices are ``w.l.o.g.''),
acting as unity on $(\X\ominus\V_1)^\perp$:
$$
U_2 \cdot U_1:\qquad  \W_1\rightarrow\V_1,\qquad \W_2\rightarrow\V_2,\qquad (\X\ominus(\W_1\oplus\W_2))\rightarrow (\X\ominus(\V_1\oplus\V_2)).
$$

This procedure continues, ending with
$$U=U_{m-1}\cdot U_{m-2}\cdots U_1 : \qquad \W_k\rightarrow \V_k , \quad\forall k\leq m.$$

If the decompositions of $\Rr\ominus\X$ into $\V_k$ and $\W_\ell$ are still different,
one continues in the same way, finding another subspace $\X_2\subset\Rr$ where the $\W_\ell$ can be
mapped onto the $\V_k$. One continues, finding $\X_3\subset\Rr$, \ldots , until each $\W_\ell\in\Rr$ is mapped onto some $\V_k$.
\end{proof}

Using the notions used to prove Theorem \ref{splitting3}, we can state an
enhanced version of Theorem \ref{splitting2}.
One just needs only to continue this proof by adding to the two different decompositions
$\{\V_k\}$ and $\{\W_k\}$ all possible decompositions $\{\V_{k,\beta}\}$, which may bring in more
non-orthogonal pairs of minimal enclosures:

\begin{cor} \textbf{A unique splitting}\label{splittingx}         

The space $\Rr$ uniquely splits into orthogonal subspaces $\U_k$ and $\X_\ell$,
where each  $\U_k$ is a minimal enclosure, each $\X_\ell$ allows for different decompositions
into minimal enclosures $\V_{\ell,\alpha}$.
All decompositions of each $\X_\ell=\C^m \otimes\V_{\ell,1}$ are unitarily equivalent,
with unitary transformations commuting with $\S$ and with $\TT$
restricted to $\M_\Rr$:
$$
\Rr=\bigoplus_k \U_k \oplus \bigoplus_\ell\X_\ell,
$$
For $\X_\ell =\bigoplus_{k=1}^m \V_k,\,\, m\geq 2$,
there exist other  decompositions $\X_\ell =\bigoplus_{k=1}^m \W_k$, and there exists
for each such pair of decompositions a unitary $U$,
mapping $\W_k\to\V_k$ such that
\beq
\forall\rho=P_\Rr\rho P_\Rr: \quad \TT^t(U\rho U^\dag)=U\TT^t(\rho)U^\dag .
\eeq
\end{cor}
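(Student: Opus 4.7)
The plan is to extend the iterative matching argument used for Theorem \ref{splitting3} by feeding into it \emph{all} possible decompositions of $\Rr$ into minimal enclosures simultaneously, following the hint just before the corollary, so that the non-orthogonality structure canonically partitions the collection of minimal enclosures into the summands $\U_k$ and the clusters $\X_\ell$. First I would fix any decomposition $\Rr = \bigoplus_i \V_i$ into minimal enclosures, guaranteed by Proposition \ref{equi}.\ref{equi4}, and call a minimal enclosure $\V\subset\Rr$ \emph{lonely} when every minimal enclosure $\W\ne\V$ appearing in any decomposition of $\Rr$ is orthogonal to $\V$. This definition provisionally distinguishes those $\V_i$ that will become the $\U_k$'s (the lonely ones) from those that will cluster into the $\X_\ell$'s.

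The main obstacle I expect is proving that loneliness depends only on the subspace $\V$, not on the decomposition in which it sits. One direction is immediate: if $\V$ is lonely and $\Rr = \bigoplus_j \W_j$ is any other decomposition, then $\V\ne 0$ must be non-orthogonal to some $\W_j$, and loneliness forces $\W_j = \V$; hence $\V$ appears in every decomposition. For the converse, suppose some $\W\ne\V$ from another decomposition satisfies $\W\not\perp\V$. By Lemma \ref{samedim} we have $\dim\W=\dim\V$, and Proposition \ref{ring} produces a full ring of minimal enclosures filling the span of $\V$ and $\V^\perp\cap(\V+\W)$; replacing $\V$ by any other element of that ring yields a decomposition of $\Rr$ in which $\V$ is absent, contradicting the first direction. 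Thus loneliness is intrinsic, and the subspace $\U := \bigoplus_k \U_k$ is canonically determined.

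Inside the canonical complement $\U^\perp\cap\Rr$, I would group the clustered minimal enclosures (from all decompositions) by the transitive closure of the non-orthogonality relation. Lemma \ref{samedim} propagates a common dimension $n_\ell$ along any chain, and the lemma that the linear span of two enclosures is an enclosure, iterated along the chain, shows that the span $\X_\ell$ of each equivalence class is itself an enclosure. Distinct classes span mutually orthogonal subspaces by construction, giving $\U^\perp\cap\Rr = \bigoplus_\ell \X_\ell$, and the identification $\X_\ell \simeq \C^{m(\ell)}\otimes \V_{\ell,1}$ of equation (\ref{statrhoform}), with $m(\ell) := \dim\X_\ell / n_\ell$, follows on choosing a basis identification.

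For any two decompositions $\{\V_k\}$ and $\{\W_k\}$ of a single $\X_\ell$, the unitary $U$ asserted by the corollary is then obtained by a verbatim reuse of the iterative construction from the proof of Theorem \ref{splitting3}: a product of at most $m(\ell)-1$ ring-rotations $U_\alpha$ from Proposition \ref{ringu}, each commuting with $\S$ and with $\TT^t$ restricted to $\M_\Rr$ and each acting as the identity outside its own ring, so that the composition sends $\W_k\to\V_k$ for every $k$ and commutes with the restricted dynamics.
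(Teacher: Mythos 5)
Your proposal is correct and follows essentially the same route as the paper, which proves this corollary simply by rerunning the iterative matching argument from Theorem \ref{splitting3} with all possible decompositions of $\Rr$ fed in at once; your ``lonely''/cluster formalization via the transitive closure of non-orthogonality, together with Lemma \ref{samedim}, the span-of-enclosures lemma, and the ring rotations of Proposition \ref{ringu}, is exactly that argument made explicit. If anything, your version supplies more detail than the paper's one-sentence justification.
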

In proving Theorem \ref{splittingxc} in Subsection \ref{cohblocks},
we  present a complete characterization of the set of stationary states,
accompanied by unitary transformations
commuting with $\TT$ inside each $\X_\ell$.

\section{Coherence and the effects of complete positivity}\label{Sectioncohcp}

\subsection{Coherence-blocks}\label{cohblocks}

There is one fact about the coherence blocks (Definition \ref{matrixblocks}) which
can be inferred directly from the studies of non-orthogonal enclosures:

\begin{lem}\label{nostatcoh}             
There is no stationary coherence between two minimal orthogonal enclosures, unless
they are both in one ring containing non-orthogonal enclosures.
\end{lem}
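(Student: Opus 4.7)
The plan is to assume the existence of a nonzero stationary coherence $M_C\in\M_C$ between orthogonal minimal enclosures $\V,\W$, and to extract from it a third minimal enclosure inside $\V\oplus\W$ that is non-orthogonal to $\V$. Proposition \ref{ring} will then produce a ring through $\V$, and a dimension argument will identify $\W$ as the $\pi/2$-partner of $\V$ on the same ring.

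First I would reduce to a self-adjoint coherence: since $\TT^t$ commutes with taking the adjoint, the self-adjoint and anti-self-adjoint parts of $M_C$ are separately invariant, so one of them is nonzero and may be taken as $M_C=Q+Q^\dag$ with $Q=P_\V M_C P_\W\neq 0$. Denote by $\rho_\V$ and $\rho_\W$ the unique stationary states supported on the minimal enclosures $\V$ and $\W$ (Proposition \ref{equi}.\ref{equi2a}). Since $\rho_\V,\rho_\W$ and $M_C$ are all $\TT$-invariant, the one-parameter family
\beq
\rho(t):=\rho_\V+\rho_\W+t\,M_C
\eeq
consists of self-adjoint, $\TT$-invariant matrices supported on $\V\oplus\W$, positive at $t=0$.

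The next step is to push $t$ to the boundary of positivity. The set $\{t\in\R:\rho(t)\geq 0\}$ is closed and bounded, the latter because $M_C$ has trace zero (its entries lie off the diagonal blocks $\M_\V,\M_\W$), hence eigenvalues of both signs, so large $|t|$ forces negative eigenvalues. Let $t^*$ be the right endpoint of this interval; then $\tilde\rho:=\rho(t^*)$ is positive, $\TT$-invariant, and has a nontrivial kernel. Normalized, it is a stationary state, so $\U:=\range(\tilde\rho)$ is an enclosure (Proposition \ref{enclosure}) strictly contained in $\V\oplus\W$. Two further properties matter: $\U$ is non-orthogonal to $\V$, since $\langle\psi|\tilde\rho|\psi\rangle=\langle\psi|\rho_\V|\psi\rangle>0$ for every $0\neq\psi\in\V$; and $\U\not\subseteq\V$, since for $0\neq\phi\in\W$ the vector $\tilde\rho\phi=\rho_\W\phi+t^*Q\phi$ has the nonzero $\W$-component $\rho_\W\phi$. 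The symmetric statements with $\V$ and $\W$ interchanged follow identically.

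The final step reads off the ring. Decomposing $\U$ into minimal enclosures via Proposition \ref{equi}.\ref{equi4}, minimality of $\V$ (resp.\ $\W$) forces every summand contained in $\V$ (resp.\ $\W$) to coincide with $\V$ (resp.\ $\W$). The possibilities in which every summand is one of these two exhaust $\U\in\{\V,\W,\V\oplus\W\}$, all already excluded, so at least one summand $\U'$ is a minimal enclosure distinct from $\V$ and $\W$ and non-orthogonal to $\V$. Proposition \ref{ring} then yields a ring of minimal enclosures in $\V\oplus\U'$, and Lemma \ref{samedim} gives $\dim\V=\dim\U'=:n$. The $\pi/2$-partner $\V(\pi/2)=(\V\oplus\U')\cap\V^\perp$ is an $n$-dimensional minimal enclosure sitting inside $\W$, so by minimality $\V(\pi/2)=\W$, placing $\W$ on the same ring as $\V$. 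The main subtlety lies in the second step: one must verify that the positivity interval really is bounded and that at its right boundary the resulting enclosure neither coincides with $\V$ or $\W$ nor fills all of $\V\oplus\W$; everything after that is a clean application of the already-established ring machinery.
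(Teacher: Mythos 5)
Your proof is correct and follows essentially the same route as the paper's: symmetrize the stationary coherence, add it to $\rho_\V+\rho_\W$ to obtain a stationary state that cannot be a mixture of the two extremal states on $\V$ and $\W$, and extract from it a minimal enclosure non-orthogonal to both, which Proposition \ref{ring} turns into a common ring. Your boundary-of-positivity argument and the explicit identification of $\W$ as the $\pi/2$-partner carefully fill in the final step that the paper leaves as a one-line assertion.
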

\begin{proof}
Let $\rho$ and $\sigma$ be the stationary states on the minimal enclosures $\V$ and $\W$,
let $\mu$ be a stationary matrix in the coherence block. $\TT$  commutes with taking the adjoint,
therefore, if $\mu$ is stationary, also $\gamma_1=\mu+\mu^\dag$ and $\gamma_2=i(\mu-\mu^\dag)$ are stationary.
At least one of these self-adjoint matrices is not zero, and
$(\rho+\eps\gamma_j+\sigma)/2$ is, for small $\eps$, a stationary state whose range is not orthogonal to
$\V$ and not orthogonal to $\W$. This can happen only if there exists a common ring for these minimal enclosures.
\end{proof}

For more studies on coherence we need the effects of \textbf{complete positivity}.
The standard example of a map which is positive, but not completely positive,
namely transposition, shows that there are cases of non c.p. evolutions
where a matrix block $\M_C$ can not be split further.
For continuous time, an example is
$\Hh =\C^2$, $\TT^t(\rho)=e^{-t}\rho+(1-e^{-t})\rho^\dag$.
For c.p. evolutions however we start with
enhancing Proposition \ref{blockindep}, showing that each matrix block
$\M_C$ splits into two parts which evolve independently of each other.

\begin{proposition}\textbf{Two independent coherence blocks}\label{cohsplit}                

 In evolutions with completely positive $\TT^t$ the matrix block $\M_C$ between two minimal enclosures $\V$ and $\W$
 splits into two mutually independent parts,
 \beqa
 \M_{C1} = \lspan (\,\{|\psi\rangle\langle\phi|,\, \psi\in\V,\, \phi\in\W\}\,), \\
 \M_{C2} = \lspan (\,\{|\phi\rangle\langle\psi|,\, \psi\in\V,\, \phi\in\W\}\,),
\eeqa
such that  each one is mapped by $\TT$, and also by $\S$, onto itself.
\end{proposition}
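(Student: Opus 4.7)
The plan is to invoke the Stinespring/Kraus representation $\TT^t(\rho)=\sum_k K_k\rho K_k^\dag$, guaranteed precisely by complete positivity, and to use it to show that each $K_k$ restricted to $\Rr$ is block-diagonal with respect to the splitting $\Rr=\V\oplus\W\oplus\Rr''$, where $\Rr''=(\V\oplus\W)^\perp\cap\Rr$. Given this structural fact, both claims follow by direct computation. This is also precisely where complete positivity is truly needed: the transposition example cited in the text shows that a merely positive evolution can mix $\M_{C1}$ and $\M_{C2}$.

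First, $\Rr''$ is itself an enclosure, since the direct sum $\V\oplus\W$ is an enclosure (by block-preservation, Proposition \ref{blockindep}) and the orthogonal complement in $\Rr$ of an enclosure is again an enclosure (also Proposition \ref{blockindep}). The standard Kraus-operator argument then shows that each $K_k$ preserves each of $\V$, $\W$, $\Rr''$: for $\psi$ in any such enclosure $\V'$, the image $\TT^t(|\psi\rangle\langle\psi|)=\sum_k|K_k\psi\rangle\langle K_k\psi|$ is supported in $\V'$, and testing against arbitrary $\chi\in\V'^\perp$ forces $\sum_k|\langle\chi|K_k\psi\rangle|^2=0$, whence $K_k\psi\in\V'$. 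In block form on $\Hh=\V\oplus\W\oplus\Rr''\oplus\D$ this reads $K_k|_\Rr=\mathrm{diag}(A_k,B_k,E_k)$, with the $\D$-column of $K_k$ left unconstrained.

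The Schr\"odinger case is then immediate: for $\psi\in\V$, $\phi\in\W$,
\beq
\TT^t(|\psi\rangle\langle\phi|)=\sum_k|K_k\psi\rangle\langle K_k\phi|=\sum_k|A_k\psi\rangle\langle B_k\phi|\in\M_{C1},
\eeq
and $\TT^t(\M_{C2})\subset\M_{C2}$ follows by taking adjoints, using $\TT^t(A^\dag)=\TT^t(A)^\dag$. For $\S^t$, a naive application of the Kraus form $\TT^{t\dag}(A)=\sum_k K_k^\dag A K_k$ is deceptive, since the $\V$-column of $K_k^\dag$ may spill into $\D$; but these $\D$-components are killed by the outer $P_\Rr$ in $\S^t=P_\Rr\TT^{t\dag}(\cdot)P_\Rr$, so that $P_\Rr K_k^\dag\psi=A_k^\dag\psi\in\V$ and $P_\Rr K_k^\dag\phi=B_k^\dag\phi\in\W$, giving
\beq
\S^t(|\psi\rangle\langle\phi|)=\sum_k|A_k^\dag\psi\rangle\langle B_k^\dag\phi|\in\M_{C1},
\eeq
and $\M_{C2}$ by adjoints. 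The main obstacle is exactly the step of extracting Kraus operators that respect every enclosure, which genuinely requires complete positivity; Proposition \ref{blockindep} alone only preserves $\M_C$ as a whole, without resolving it into the two halves.
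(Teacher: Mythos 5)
Your proof is correct and follows essentially the same route as the paper: a Kraus representation, the observation that the off-diagonal blocks of each Kraus operator must vanish because $\V$ and $\W$ are enclosures and the positive terms $\sum_k D_k\sigma D_k^\dag$ cannot cancel, and then a direct block computation on $\M_C$. If anything you are slightly more careful than the paper, which only spells out the Schr\"odinger-picture computation and leaves the $\S^t$ case (where the $\D$-column of $K_k^\dag$ must be cut off by the outer projections $P_\Rr$) implicit.
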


\begin{proof}
The action of each completely positive $\TT^t$ can be represented with
Kraus-operators (\cite{K71}):
$$\TT^t(\rho)=\sum_k A_k\rho A_k^\dag\qquad{\rm with}\quad \sum_k A_k^\dag A_k=\one.$$
(We suppress the dependence of $t$ and, in the following, also the dependence of $k$, for simplicity.)
We consider $\rho$ with range in $\V\oplus\W\subset\Rr$ and use a block representation
of the Kraus-operators, restricted to act on such matrices,
$$
A=\left(  \begin{array}{cc} B & C \\ D & F \end{array} \right)\qquad\qquad\rm{with}\quad
B:\,\V \rightarrow\V , \qquad D:\,\V\rightarrow\W\, \ldots\quad .
$$
The action of $A$ onto a state $\sigma$ with range in $\V$ is
$$
A:\quad\left(  \begin{array}{cc} \sigma & 0 \\ 0 & 0 \end{array} \right)\,\mapsto\,
\left(  \begin{array}{cc} . & . \\ . & D\sigma D^\dag \end{array} \right).
$$
For positive $\sigma$ this ``lower right'' part is positive for each $D_k$, mutual annihilation is not possible.
But $\V$ is an enclosure, so that ``nothing gets out'' is possible only if all the $D_k$ are $0$.
In the same way one can see that all the $C_k$ have to be zero, since also $\W$ is an enclosure.

The action of a Kraus-operator on an element of $M_C$  is therefore
$$
A:\quad\left(  \begin{array}{cc} 0 & \gamma \\ \delta & 0 \end{array} \right)\,\mapsto\,
\left(  \begin{array}{cc} 0 & B\,\gamma\,F^\dag \\ F\,\delta\,B^\dag & 0 \end{array} \right).
$$
This shows that the ``upper right'' part of $\M_C$ is not mixed with its ``lower left'' part.
\end{proof}

As a consequence of this further decomposition of each coherence block into two non-mixed parts
we find for each pair of non-orthogonal minimal enclosures a larger family of enclosures
than in Proposition \ref{ring}:

\begin{cor}\textbf{Each ring is in a sphere}\label{sphere}                                  

If there exists a pair of non-orthogonal minimal enclosures $\U$ and $\V$,
there exists a whole sphere \, $\V(\alpha,z)$, $\alpha \in (-\pi/2, \,+\pi/2]$, $z\in\C$, $|z|=1$, of minimal enclosures,
with $\V=\V(0,1)$, $\U=\V(\alpha_\U ,1)$ for some $\alpha_\U$.
The set of projectors $P(\alpha,z)$ onto $\V(\alpha,z)$ is isomorphic to the set of $2\times 2$ matrices
\beq
p(\alpha)=\left(
            \begin{array}{cc}
              \cos^2\alpha & z\tdt\sin\alpha\cos\alpha  \\
              z^\ast\tdt\sin\alpha\cos\alpha & \sin^2\alpha \\
            \end{array}
          \right)
\eeq
and each $P(\alpha, z)$ can be decomposed as
\beqa\label{sphere2}
P(\alpha,z)=\cos^2\alpha \tdt P_\V
 + z\tdt\sin\alpha\cos\alpha \tdt Q +  z^\ast\tdt\sin\alpha\cos\alpha \tdt Q^\dag + \sin^2\alpha \tdt P_\W ,\\
\textrm {with}\qquad \W:=\V(\pi/2 ,1) \perp \V,\quad Q\tdt Q^\dag=P_\V,\quad Q^\dag\tdt Q=P_\W.\label{sphere3}
\eeqa
\end{cor}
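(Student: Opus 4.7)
The plan is to upgrade the proof of Proposition \ref{ring} by inserting, at the place where $M_C=Q+Q^\dag$ appeared as the invariant coherence block of $P_\U$, the additional information now available from complete positivity, namely Proposition \ref{cohsplit}. That proposition asserts that the coherence section $\M_C$ between the minimal enclosures $\V$ and $\W$ splits as $\M_{C,1}\oplus\M_{C,2}$, and that $\S$ maps each of the two parts into itself. Since Proposition \ref{ring} already produced the invariance of $Q+Q^\dag$ under $\S$, and since $Q\in\M_{C,1}$ while $Q^\dag\in\M_{C,2}$, this non-mixing forces $Q$ and $Q^\dag$ to be invariant \emph{separately}. By linearity, $zQ$ and $z^\ast Q^\dag$ are then $\S$-invariant for every $z\in\C$.

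With this in hand I would take the candidate operator
\beq
P(\alpha,z):=\cos^2\alpha\tdt P_\V+z\tdt\sin\alpha\cos\alpha\tdt Q+z^\ast\tdt\sin\alpha\cos\alpha\tdt Q^\dag+\sin^2\alpha\tdt P_\W
\eeq
for each $\alpha\in(-\pi/2,\pi/2]$ and each $z$ with $|z|=1$. It is manifestly self-adjoint and $\S$-invariant because each of its four summands is. To verify $P(\alpha,z)^2=P(\alpha,z)$ one uses $P_\V P_\W=0$, the identities $P_\V Q=Q=QP_\W$ and $Q^\dag P_\V=Q^\dag=P_\W Q^\dag$, and the relations $QQ^\dag=P_\V$, $Q^\dag Q=P_\W$, together with $Q^2=(Q^\dag)^2=0$. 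The constraint $|z|=1$ is exactly what makes the cross term collapse via $(zQ+z^\ast Q^\dag)^2=|z|^2(P_\V+P_\W)=P_\V+P_\W$, and then the coefficients of $P_\V$, $P_\W$, $Q$, $Q^\dag$ in $P(\alpha,z)^2$ reduce to $\cos^2\alpha$, $\sin^2\alpha$, $\sin\alpha\cos\alpha$, $\sin\alpha\cos\alpha$ by $\sin^2\alpha+\cos^2\alpha=1$.

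Once $P(\alpha,z)$ is a non-zero invariant projector, Proposition \ref{sinv} identifies $\V(\alpha,z):=P(\alpha,z)\Hh$ as an enclosure inside $\Rr$. For minimality I would fall back on the representation $\X=\C^2\otimes\V$ from Proposition \ref{ring}: in this basis $P(\alpha,z)=p(\alpha,z)\otimes\one$ with the $2\times 2$ matrix $p(\alpha,z)$ displayed in the statement, whose trace is $1$ and whose determinant $\cos^2\alpha\sin^2\alpha-|z|^2\sin^2\alpha\cos^2\alpha$ vanishes. Hence $\rank p(\alpha,z)=1$ and $\dim\V(\alpha,z)=\dim\V=n$, which by Lemma \ref{samedim} forces $\V(\alpha,z)$ to be a minimal enclosure. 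The parameter set $\{(\alpha,z):\alpha\in(-\pi/2,\pi/2],\,|z|=1\}$ parametrizes exactly the rank-one trace-one $2\times 2$ matrices, i.e.\ the Bloch sphere, whence the name ``sphere''.

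The only mildly subtle step is the very first one: inferring invariance of $Q$ alone, since $Q$ is not self-adjoint and thus not accessible via positivity arguments. This is precisely what complete positivity, routed through Proposition \ref{cohsplit}, supplies at no extra cost; everything after that is a direct block-matrix computation.
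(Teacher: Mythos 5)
Your proposal is correct and follows essentially the same route as the paper: the paper's (one-sentence) proof likewise invokes Proposition \ref{cohsplit} to decompose the invariant projector $P_\U$ into $M_\V+M_{C1}+M_{C2}+M_\W$ with \emph{separately} invariant components, so that $Q$ and $Q^\dag$ are individually invariant and the whole complex-phase family $P(\alpha,z)$ of invariant projectors results; you merely fill in the routine idempotency computation that the paper leaves implicit. The only spot worth one more sentence is the minimality claim: equality of dimensions via Lemma \ref{samedim} should be applied to a minimal enclosure \emph{contained in} $\V(\alpha,z)$ (which, for $\alpha\neq\pi/2$, cannot be orthogonal to $\V$ inside $\X$), forcing it to exhaust $\V(\alpha,z)$.
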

\begin{proof}
The non-mixing of matrix blocks by $\TT$ and by its cut off dual $\S$
enables the decomposition of the invariant projector $P_\U$ into $P_\U=M_\V+M_{C1}+M_{C2}+M_\W$
with invariant components.
\end{proof}

Such a sphere is now the maximal set of minimal enclosures in the space spanned by non-orthogonal minimal enclosures $\U$ and $\V$.
To show this we use the equivalence in Proposition \ref{sinv} and state:

\begin{lem}\textbf{Uniqueness of stationary coherence}\label{uniquecoh}

There is, apart from a constant factor, at most one invariant matrix $Q_i$ in each
matrix block $\M_{Ci}$  between two minimal enclosures $\V$ and $\W$.
For an appropriate choice of factors these matrices obey
$Q_2 =Q_1^\dag$, and are isometries between the enclosures.
The mapping $A\mapsto Q_1\tdt A\tdt Q_2$, transforming an operator with range in $\W$
into an operator with range in $\V$,  commutes with the cut off semigroup $\S$.
\end{lem}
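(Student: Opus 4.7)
The plan has three phases and depends essentially on complete positivity, first through the Kadison--Schwarz inequality and then through the multiplicative-domain theorem of Choi. First I would show that any nonzero $\S^t$-invariant $Q_1\in\M_{C1}$ satisfies $Q_1^\dag Q_1\propto P_\W$ and $Q_1Q_1^\dag\propto P_\V$, so that after rescaling it is an isometry $\W\to\V$. The starting point is Kadison--Schwarz for the unital c.p.\ map $\S^t$, giving $\S^t(Q_1^\dag Q_1)\geq \S^t(Q_1)^\dag\S^t(Q_1)=Q_1^\dag Q_1$. Pairing both sides with the stationary state $\rho_\W$ of range $\W$ (Proposition \ref{equi}) and using $\Tr(\S^t(X)\rho_\W)=\Tr(X\rho_\W)$ forces the nonnegative difference $\S^t(Q_1^\dag Q_1)-Q_1^\dag Q_1$ to have zero trace against a state strictly positive on $\W$, hence to vanish. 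So $Q_1^\dag Q_1\in\M_\W$ is $\S^t$-invariant, and Proposition \ref{sinv} together with the minimality of $\W$ restricts it to a multiple of $P_\W$. The symmetric argument on $\V$, combined with $\Tr(Q_1^\dag Q_1)=\Tr(Q_1Q_1^\dag)$ and $\dim\V=\dim\W$ (which applies because a nonzero invariant $Q_1$ produces non-orthogonal minimal enclosures, so Lemmas \ref{nostatcoh} and \ref{samedim} take effect), gives equal constants, and a rescaling produces the isometry.

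For uniqueness I would use linearity of the invariant set: if $Q_1,Q_1'\in\M_{C1}$ are both invariant then so is $Q_1-cQ_1'$ for every $c\in\C$, and the previous step forces $(Q_1-cQ_1')^\dag(Q_1-cQ_1')$ to be a scalar multiple of $P_\W$ for each $c$. Expanding the product and reading off the coefficients at $c=1$ and $c=i$ isolates $Q_1^\dag Q_1'=\alpha P_\W$ for some $\alpha\in\C$; multiplying on the left by $Q_1$ and using $Q_1Q_1^\dag=P_\V$ gives $Q_1'=\alpha Q_1$. Hence the invariant line in $\M_{C1}$ is one-dimensional. Since adjunction maps invariants in $\M_{C1}$ bijectively to invariants in $\M_{C2}$, the same holds there, and $Q_2=Q_1^\dag$ is simply a normalization choice.

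The commutation with $\S$ is the main obstacle, and I would resolve it through Choi's multiplicative-domain theorem. The two equality cases of Kadison--Schwarz produced in the first step, $\S^t(Q_1^\dag Q_1)=\S^t(Q_1)^\dag\S^t(Q_1)$ and $\S^t(Q_1Q_1^\dag)=\S^t(Q_1)\S^t(Q_1)^\dag$, are precisely the conditions placing $Q_1$, and hence also $Q_1^\dag$, in the multiplicative domain of the unital c.p.\ map $\S^t$ acting on the finite-dimensional unital $*$-algebra $\M_\Rr$. The restriction of $\S^t$ to its multiplicative domain is a $*$-homomorphism, so for every $A=P_\Rr A P_\Rr$ one has $\S^t(Q_1 A)=Q_1\S^t(A)$ and $\S^t(AQ_1^\dag)=\S^t(A)Q_1^\dag$. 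Composing these yields $\S^t(Q_1 A Q_1^\dag)=Q_1\S^t(A)Q_1^\dag$ for every $A=P_\W A P_\W$, which is exactly the claimed intertwining. The only subtlety to verify is that the multiplicative-domain theorem applies in this setting with $P_\Rr$ serving as unit; this is immediate from $\S^t=P_\Rr \TT^{t\dag} P_\Rr$ being completely positive (inherited from $\TT^{t\dag}$) and unital on $\M_\Rr$, a property already used in the proof of Proposition \ref{sinv}.
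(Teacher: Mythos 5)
Your proof is correct, but it takes a genuinely different route from the paper's. The paper proves the lemma by exploiting the ring structure already established in Propositions \ref{ring} and \ref{ringu}: since an invariant coherence can only exist when $\V$ and $\W$ lie in a common ring (Lemma \ref{nostatcoh}), one conjugates the invariant off-diagonal block by the ring rotation $U_{\pi/4}$, which deposits a copy of $Q_i$ into the diagonal block $\M_\V$; non-mixing of blocks forces that copy to be invariant there, and uniqueness of invariant operators on a minimal enclosure (Proposition \ref{sinv}) pins it down to a constant multiple of the identity in the $\C^2\otimes\V$ representation, whence $Q_i$ is a multiple of an isometry and unique up to scale. You instead argue directly with the Schwarz inequality for the unital c.p.\ map $\S^t$: the equality case paired against the faithful stationary states on $\V$ and $\W$ shows $Q_1^\dag Q_1$ and $Q_1 Q_1^\dag$ are invariant, hence proportional to $P_\W$ and $P_\V$ by minimality; the polarization trick with $Q_1-cQ_1'$ gives uniqueness; and Choi's multiplicative-domain theorem delivers the intertwining $\S^t(Q_1 A Q_1^\dag)=Q_1\S^t(A)Q_1^\dag$. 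Your route is self-contained --- it bypasses the ring machinery entirely and does not presuppose Lemma \ref{nostatcoh} --- and it has the merit of making the final commutation claim explicit, which the paper's proof leaves essentially implicit (there it is inherited from the symmetry group of Proposition \ref{ringu}). The trade-off is that your argument leans on complete positivity (2-positivity) at every stage, whereas the paper's ring construction for the diagonal and self-adjoint coherence parts works for merely positive semigroups, with c.p.\ entering only through the prior splitting of $\M_C$ into $\M_{C1}\oplus\M_{C2}$ (Proposition \ref{cohsplit}); since that splitting is needed anyway to make sense of separately invariant $Q_1$ and $Q_2$, nothing is lost. The only points worth double-checking in your write-up are routine: that $\S^t$ is unital on $\M_\Rr$ with unit $P_\Rr$ (established in the proof of Proposition \ref{sinv}) and that $\S^t$ commutes with the adjoint, so that invariance of $Q_1$ does give invariance of $Q_2=Q_1^\dag$; both hold.
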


\begin{proof}
Such invariant matrices exist only in the coherence block of minimal enclosures $\U$, $\W$, when they are in a ring,
as was stated in Lemma \ref{nostatcoh}.
In proposition \ref{ringu} it was shown that there is a group of unitary operators $U_\alpha$
giving a mapping $Q_i\mapsto U_\alpha\tdt Q_i\tdt U_\alpha^\dag$
which leaves the invariance of observables untouched.

We choose the basis such that these $U_\alpha$ can be written, when
considered as an operator acting only on the subspace spanned by the ring, in block-matrix form as:
\beq\non
U_\alpha=\left(
            \begin{array}{cc}
              \cos\alpha\cdot\one & -\sin\alpha\cdot\one \\
              \sin\alpha\cdot\one & \cos\alpha\cdot\one \\
            \end{array}
          \right).
\eeq
It maps, for $\alpha=\pi/4$
\beq
         \left(
            \begin{array}{cc}
              0 & Q_2 \\
              0 & 0 \\
            \end{array}
          \right),
\eeq\non
to
\beq\non
         \frac12\left(
            \begin{array}{cc}
              Q_2 & Q_2 \\
              Q_2 & Q_2 \\
            \end{array}
          \right).
\eeq
Because of the non-mixing of enclosures the part in the block which maps $\V\rightarrow\V$ 
must be invariant.
But here we know, that there is, up to a constant factor, only one invariant operator,
namely the projector $P_\V=\one$.

So $Q_2={\rm const.}\cdot\one$.
The same argument applies to $Q_1$.
(If one is using another basis, the $\one$ in the off-diagonals
appear as mutually adjoint isometries.)
\end{proof}

\begin{proof}\emph{of Theorem \ref{coherence}: }

(\ref{coherence1}) is stated in Proposition \ref{blockindep}.

(\ref{coherence2}) Stationary coherence between two minimal enclosures $\V$ and $\W$ can exist only
when they are in a ring, as is shown in Lemma \ref{nostatcoh}.
Then, by Proposition \ref{ringu}, there is a unitary dynamical equivalence of all the elements
in the ring, especially between $\V$ and $\W$.

(\ref{coherence3}) The uniqueness of a stationary coherence does not yet imply
the decay of coherence in other cases, when no stationary coherence exists.
There may be an ``eigenmatrix'' $\gamma$ of $\TT$,
with $\TT^t(\gamma)=e^{i\omega t}\gamma$.
Here complete positivity is necessary (Proposition \ref{cohsplit}):
Let $\gamma=P_\V\tdt\gamma\tdt P_\W$, in the coherence block for two minimal enclosures.
Note, that  $\gamma^\dag=P_\W\tdt\gamma^\dag\tdt P_\V$ and $\TT^t(\gamma^\dag)=e^{-i\omega t}\gamma^\dag$.
For the modified evolution
\beq\label{modifyt}
\TT_\omega ^t(\rho):=\TT^t(U_t\rho U^\dag_t),\qquad U_t:=e^{-i\omega t}P_\V +P_{\V^\perp},
\eeq
which is also a completely positive semigroup, since it is composed of c.p. maps, the coherence $\gamma+\gamma^\dag$ is stationary.
The extremal stationary states for $\TT$, $\rho_\V$ and $\rho_\W$, are stationary states for $\TT_\omega$ also.
So, they must be in a ring for $\TT_\omega$ as is stated in Lemma \ref{nostatcoh}.
It follows, using Proposition \ref{ringu}, that there must be a symmetry of the dynamics between the subspaces $\V$ and $\W$:
\beq
\exists\, U:\,\V\rightarrow\W,\quad \forall\sigma=P_\V\tdt\sigma\tdt P_\V:\quad
\TT_\omega^t(\sigma)=\TT_\omega^t(U\tdt\sigma\tdt U^\dag).
\eeq
The evolution $\TT_\omega$ acts on the blocks on the diagonal in the same way as the evolution $\TT$, so
\beq
\TT^t(\sigma)=\TT^t(U\tdt\sigma\tdt U^\dag).
\eeq
\end{proof}
An example for such a situation of dynamical symmetry, allowing for oscillation of coherence, is given in 5.5 of \cite{BN08}.
In the following Subsection \ref{discretesection}
we identify precisely all possibilities of other oscillations in discrete time.
As a consequence we find then that other cases of perpetual oscillation can not appear in continuous time.

Knowing all invariant observables enables us to know precisely the set
of enclosures, the minimal ranges of stationary states.
\begin{proof} \emph{of Theorem \ref{splittingxc}:}

We extend the Corollary \ref{splittingx}:
If $\X_\ell$ is spanned by merely two mutually orthogonal minimal enclosures
there is nothing new to prove. One has just to paste together
Corollary \ref{sphere}, showing the possibility to represent $\X_\ell=\C^2\otimes\V$,
and Lemma \ref{uniquecoh}, showing there are no other stationary states with range in $\X_\ell$.

When there are more than two such orthogonal $\V_\alpha\subset\X_\ell$,
the representation  $\X_\ell=\C^m\otimes\V$, with the representation of
each stationary state in the form of equation (\ref{statrhoform}),
follows the same way.
It remains to show that, in the other direction, \emph{each} density matrix $\sigma$
can appear in (\ref{statrhoform}), that each matrix element can be non-zero.
This means, that there exists a stationary coherence for each pair of minimal enclosures in $\X_\ell$.
Now the relation ``existence of stationary coherence'',
is transitive: If $\V_1$ has a stationary coherence with $\V_2$, then they are together in a ring,
for c.p. maps in a sphere.
The same shall be the case for the pair $\V_2$ and $\V_3$.
Being in a ring allows for permutation of $\V_1$ with $\V_2$, commuting with the cut off dynamical semigroups,
and mapping the coherence block between $\V_2$ and $\V_3$ onto
the coherence block between $\V_1$ and $\V_3$.
\end{proof}

\subsection{Special processes in discrete time}\label{discretesection}

The structuring of Hilbert space is independent of the structure of time, whether time is continuous or discrete,
but some things may happen in discrete time which are otherwise impossible:

\begin{description}
  \item[(i)] Decay, dephasing, dissipation may happen immediately, coming to a limit in finite time.
  \item[(ii)] Perpetual oscillation may occur also inside a minimal enclosure.
\end{description}

Examples are, for $t\in\{1,2,\ldots\}$:

\begin{description}
  \item[(i)] $\TT^1:\quad \rho\mapsto\Tr(\rho)\rho_\infty$, with $\rho_\infty$ positive and $\Tr(\rho_\infty)=1$.
  \item[(ii)] $\TT^1:\quad \rho\mapsto \sum_{k=1}^n |k+1\rangle\langle k|\rho|k\rangle\langle k+1|$,
          on $\Hh$ with basis $\{|k\rangle, k=1 \ldots n\}$, $|n+1\rangle\equiv|1\rangle$.
          (The case with $n=2$ has been presented in \cite{BG07}.)
          Oscillating evolutions appear as
          \beq
          \rho(t)= e^{-2  \pi i \cdot t \cdot q/n}  \left(\frac  {1}{ n} \sum_{k=1}^n|k\rangle e^{2\pi i k\cdot q/n}\langle k|\right).
          \eeq
\end{description}

In the spectrum of $\TT^1$ these possible cases appear in (i) as eigenvalues $\lambda=0$;
in (ii) as eigenvalues $\lambda\neq 1,\,|\lambda|=1$, whose eigenmatrix has a minimal enclosure as range.
In continuous time $\TT^t=e^{-tL}$ ($L$ a ``superoperator'') can not have $0$ as an eigenvalue;
and it is shown in the following, in  Corollary \ref{rotationcor},
that the existence of eigenvalues $\lambda$ with $|\lambda|=1$
is completely covered as described here in Theorems \ref{splitting2}, \ref{splitting3}, \ref{coherence}.

For c.p. semigroups in discrete time there are more oscillations and rotations possible than those described
in the investigations at the end of subsection \ref{cohblocks}:

\begin{proposition}\textbf{Special oscillations in discrete time.}\label{discrete}

Suppose $\TT^t$ are c.p. maps of a Quantum Dynamical System in discrete time $t\in\{1,2,\ldots\}$,
and suppose $\V$ is a minimal enclosure with dimension at least $2$.
$\TT^1$ may have eigenmatrices with $\V$ as their range and corresponding to eigenvalues
$\lambda=\exp(2\pi i\cdot q/m)$, $1\leq q <m\leq \textrm{dim}(\V)$.
    This covers all possible cases of a perpetual oscillation inside minimal enclosures.
\end{proposition}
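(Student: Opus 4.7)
The proposition has two parts: existence of oscillations with eigenvalues $\exp(2\pi iq/m)$, $m\leq\dim\V$, and uniqueness of this form. Existence follows directly from Example (ii) immediately preceding the proposition: the cyclic permutation on $n=\dim\V$ basis vectors realizes all eigenvalues $e^{-2\pi iq/n}$ for $1\leq q<n$. The main content is the uniqueness claim.

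So suppose $\gamma$ is an eigenmatrix of $\TT^1$ with $\TT^1(\gamma)=\lambda\gamma$, $|\lambda|=1$, $\lambda\neq 1$, $\range(\gamma)=\V$, and let $d=\dim\V$. First I would restrict $\TT^1$ to $\M_\V$: this is well-defined by Proposition \ref{blockindep}, and by minimality of $\V$ combined with Proposition \ref{enclosure} the unique stationary state $\rho_\V$ is faithful on $\V$ (otherwise its range would be a strictly smaller enclosure). Moreover, by Proposition \ref{sinv}, the only invariant projectors inside $\V$ are $0$ and $P_\V$, so the fixed-point set of the cut-off evolution $\S$ on $\M_\V$ is $\C\cdot P_\V$. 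Because $\TT^1$ is a contraction in the trace norm, each peripheral eigenspace is genuine, with no Jordan block.

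The key structural step uses the Kadison--Schwarz inequality for c.p.\ maps in the Heisenberg picture. If $\TT^{1\dag}(A)=\bar\lambda A$ with $|\lambda|=1$, then $\TT^{1\dag}(A^\dag A)\geq A^\dag A$; integrating against the faithful invariant state $\Tr(\rho_\V\,\cdot\,)$ forces equality, so $A^\dag A$ is a fixed point, hence $A^\dag A=cP_\V$; similarly $AA^\dag=c'P_\V$, and $A$ is proportional to a unitary on $\V$. The multiplicative domain of $\TT^{1\dag}$ is a $*$-algebra, and $\TT^{1\dag}$ acts multiplicatively on peripheral eigenvectors: the product $U_1U_2$ is again peripheral with eigenvalue $\bar\lambda_1\bar\lambda_2$. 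Hence the peripheral spectrum of $\TT^1|_{\M_\V}$ is a finite multiplicative subgroup of the unit circle, necessarily cyclic of some order $m$, and $\lambda=\omega^q$ with $\omega=e^{2\pi i/m}$ and $1\leq q<m$.

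Finally, to bound $m\leq d$: let $U$ be the unitary on $\V$ associated to $\omega$. The powers $U^k$, $k=0,\ldots,m-1$, are eigenvectors of $\TT^{1\dag}$ for the distinct eigenvalues $\bar\omega^k$, hence linearly independent as operators on $\V$. After suitable normalization of the peripheral representatives one has $U^m=P_\V$ (since $\omega^m=1$ corresponds to the stationary eigenvalue and fixed points are multiples of $P_\V$), so the minimal polynomial of $U$ divides $x^m-1$; combined with linear independence of $\{P_\V,U,\ldots,U^{m-1}\}$ it must have degree exactly $m$. Since $U$ is diagonalizable, $U$ then has $m$ distinct eigenvalues on $\V$, so $d\geq m$. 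The main obstacle is the structural Step 2 --- deriving the unitary and multiplicative structure of peripheral eigenvectors from the Schwarz inequality. This is precisely where complete positivity (rather than mere positivity) is essential, in keeping with the proposition's explicit c.p.\ hypothesis; once this lemma is in place, the remaining steps are straightforward within the paper's framework.
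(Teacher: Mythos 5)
Your proof of the uniqueness half is correct but takes a genuinely different route from the paper's after the common opening step. Both arguments use the Kadison--Schwarz inequality together with the faithful invariant state on the minimal enclosure to force $\gamma^\dag\gamma$ to be a fixed point, hence a multiple of $P_\V$, so that a peripheral eigenmatrix is proportional to a unitary --- this is where complete positivity enters in both. From there the paper diagonalizes $\gamma=\sum_k u_k|k\rangle\langle k|$ and runs an elementary pigeonhole argument: the observable $A=2P_\V+\gamma+\gamma^\dag$ attains norm $4$ exactly when some $u_k z^\ell=1$, the norm cannot increase under the unital positivity-preserving map $\S$, and $t+1$ such conditions on only $t$ phases force $z^m=1$ for some $m\le\dim\V$. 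You instead invoke the multiplicative domain: peripheral eigenvectors multiply, so the peripheral point spectrum is a finite subgroup of the unit circle, hence cyclic of some order $m$, and the linear independence of $P_\V,U,\dots,U^{m-1}$ together with $U^m=P_\V$ and diagonalizability bounds $m$ by $\dim\V$. Your version is the standard noncommutative Perron--Frobenius argument and yields slightly more (the full group structure of the peripheral spectrum); the paper's is more self-contained, needing nothing beyond norm-monotonicity once unitarity of $\gamma$ is established. Both are sound.

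One small gap on the existence side: Example (ii) (the plain cyclic shift) only realizes $m=\dim\V$, whereas the statement allows any $m\le\dim\V$; for instance $\lambda=-1$ on a three-dimensional minimal enclosure is not an eigenvalue of the $3$-cycle. The paper closes this by a modified example in which a cycle of length $m-1$ feeds into a uniformly spread block of dimension $n-m+1$ that returns to the start, keeping the enclosure minimal while producing period $m<n$. Your citation of the cyclic permutation alone does not cover that case.
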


\begin{proof}
The existence is shown with a modified example on $\Hh$ with basis $\{|k\rangle, k=1 \ldots n\}$:
\beqa
\TT^1:\quad \rho &\mapsto &  \left(\sum_{k=m}^n \langle k|\rho|k\rangle \right) |1\rangle\langle 1| \nonumber\\
&+& \sum_{k=2}^{m-1} |k\rangle\langle k-1|\rho|k-1\rangle\langle k| \nonumber \\
&+& \frac{\langle m-1|\rho|m-1\rangle}{n-m+1}\sum_{k=m}^n |k\rangle\langle k|.
\eeqa
For $m=2$ the second term is absent.
Oscillating eigenmatrices (at $t=0$) are
$$
\rho= \frac  {1}{m} \sum_{k=1}^{m-1}|k\rangle e^{2\pi i k\cdot q/m}\langle k|
  +\frac1{m(n-m+1)} \sum_{k=m}^n |k\rangle\langle k|.
$$

To show that other frequencies are impossible, we analyze $\S$, the cut off evolution of observables,
acting in a minimal enclosure $\V$ with basis $\{|k\rangle\}$.
The eigenvalues of $\TT$ with eigenmatrices supported by $\V$ are the same as the
eigenvalues of $\S$ with eigenmatrices $\gamma$ supported by $\V$.
Suppose $\S^1(\gamma)=z\gamma$, $|z|=1$, $z\neq 1$, $\|\gamma\|=1$.
By complete positivity the Kadison inequality \cite{K52} holds, implying $\S^1(\gamma^\ast \gamma)\geq \S^1(\gamma^\ast)\S^1(\gamma)=\gamma^\ast \gamma$.
Since $\V$ supports an invariant state $\rho$ with $\range(\rho)=\V$,
the equality $\Tr(\rho \S^1(\gamma^\ast \gamma))=\Tr(\TT^1(\rho) \gamma^\ast \gamma)=\Tr(\rho\gamma^\ast\gamma)$
implies $\S^1(\gamma^\ast\gamma)=\gamma^\ast\gamma$. By uniqueness, as stated in Proposition \ref{sinv}, $\gamma^\ast\gamma=P_\V$,
so $\gamma$ acts as a unitary operator on $\V$,
representable as $\gamma =\sum u_k |k\rangle\langle k|$, $|u_k|=1$.
If $\gamma$ is an eigenmatrix, then also $e^{i\alpha}\gamma$
is an eigenmatrix to the same eigenvalue, so we may assume w.l.o.g. $u_1 z^t=1$,
with $t={\rm dim}(\V)$.
So the positive observable $\S^t(A)$, with $A=2P_\V+\gamma+\gamma^\ast$, has norm $\|\S^t(A)\|=4$.
Since  $\S$ preserves positivity and is unital,
it preserves the inequality $\|A\|\cdot\one\geq A$, which implies that the norm
of $A$ cannot increase.
Therefore, for each $\ell\in\{0,1\ldots t\}$ there must be be some $k$ such that $u_k \tdt z^\ell=1$.
These $t+1$ conditions for only $t$ numbers $u_k$ can not all be different,
so there must be some $m\in\{1,\ldots t\}$ such that $z^m=1$.
The case $m=1$ gives stationarity, no oscillation,
and there remain the possibilities of oscillation with $\S^m(\gamma)=\gamma$,
with dual oscillation of some states,  $\TT^m(\sigma)=\sigma$, with  $m\in\{2,\ldots t\}$.
\end{proof}
\begin{cor}\label{rotationcor}
In case of an evolution in continuous time there are no oscillations inside minimal enclosures;
and a coherence between two minimal enclosures, if it does not decay,
can either be stationary or ``rotate'' with one special frequency.
Two different cases for one coherence block can not appear in continuous time.
\end{cor}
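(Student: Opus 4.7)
The plan is to derive both assertions from Proposition \ref{discrete} by applying it to the discrete sub-semigroups $\{\TT^{nt_0} : n \in \N\}$ for varying $t_0 > 0$, and then invoking continuity in $t_0$.

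For the first claim, suppose $\gamma$ is an eigenmatrix of the generator with range in a minimal enclosure $\V$ and purely imaginary eigenvalue $i\omega$, so that $\TT^t(\gamma) = e^{i\omega t}\gamma$ for all $t\ge 0$. For every $t_0>0$, Proposition \ref{discrete} forces $e^{i\omega t_0}$ to lie in the finite set $\{\exp(2\pi i q/m) : 0 \le q < m \le \dim(\V)\}$. The map $t_0 \mapsto e^{i\omega t_0}$ is continuous but takes values in this finite (hence totally disconnected) set, so it must be constant; letting $t_0 \to 0^+$ gives $e^{i\omega t_0}=1$ identically, hence $\omega=0$ and $\gamma$ is actually stationary.

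For the second claim, assume for contradiction that two linearly independent non-decaying eigenmatrices $\gamma_1,\gamma_2$ lie in the same coherence block $\M_{C1}$ between minimal enclosures $\V,\W$ with distinct purely imaginary eigenvalues $i\omega_1 \neq i\omega_2$. Form the modified evolution $\TT_{\omega_1}^t(\rho):=\TT^t(U_t \rho U_t^\dag)$ with $U_t = e^{-i\omega_1 t}P_\V + P_{\V^\perp}$, as in the proof of Theorem \ref{coherence}.\ref{coherence3}; this is again a continuous completely positive semigroup, with $\gamma_1$ stationary and $\gamma_2$ still in $\M_{C1}$ but now carrying the nonzero eigenvalue $i(\omega_2-\omega_1)$. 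The stationary coherence $\gamma_1$ places $\V,\W$ in a sphere via Lemma \ref{nostatcoh} and Corollary \ref{sphere}, so by Theorem \ref{splittingxc} one has $\V\oplus\W \cong \C^2\otimes\V_0$. The unitary symmetries $U_\alpha$ of Proposition \ref{ringu} commuting with $\TT_{\omega_1}$, combined with the block-mapping property of Proposition \ref{cohsplit} and the uniqueness of stationary coherence in Lemma \ref{uniquecoh}, force $\TT_{\omega_1}$ to decompose as $\mathrm{id}_{\C^2}\otimes\Phi^t$ for a continuous completely positive semigroup $\Phi$ acting on matrices over the minimal enclosure $\V_0$. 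Under this identification $\gamma_2$ corresponds to an eigenmatrix of $\Phi^t$ with range in $\V_0$ and nonzero purely imaginary eigenvalue, contradicting the first claim applied to $\Phi$.

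The main obstacle is establishing the tensor-product decomposition $\TT_{\omega_1} = \mathrm{id}_{\C^2}\otimes\Phi$ on the whole of $\V\oplus\W$, not merely on its stationary states. Proposition \ref{ringu} supplies the commuting unitary symmetries $U_\alpha$, but pinning down their action on the two halves of the coherence block via Proposition \ref{cohsplit}, and identifying the action on the $\C^2$-factor as trivial using Lemma \ref{uniquecoh}, requires some care. Once this factorization is secured, the reduction of an oscillating coherence block for $\TT_{\omega_1}$ to an oscillation inside the minimal enclosure $\V_0$ is automatic, and Corollary \ref{rotationcor} follows from the continuity argument already used in the first claim.
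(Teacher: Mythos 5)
Your proposal is correct and follows essentially the same route as the paper: the first claim via the discrete sub-semigroups $\{\TT^{nt_0}\}$, Proposition \ref{discrete} and continuity in $t_0$, and the second claim by passing to $\TT_{\omega_1}$ so that one coherence becomes stationary, invoking the ring/sphere structure to write $\V\oplus\W\simeq\C^2\otimes\V_0$ with factorized dynamics, and transferring the remaining rotating coherence to an oscillation inside the minimal enclosure $\V_0$, contradicting the first claim. The tensor-factorization step you flag as the main obstacle is precisely what the paper asserts with the phrase ``can be represented as $\C^2\otimes\acute{\V}$ \ldots together with the dynamics,'' so your more explicit treatment of it is a faithful (and somewhat more careful) rendering of the same argument.
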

\begin{proof}
A semigroup in continuous time contains discrete semigroups, with $t\in \{0,t_0, 2t_0,\ldots\}$.
Oscillations in enclosures with dimension $m$ must therefore have frequencies  $2q\pi/(m\tdt t_0)$, with $q<m$,
for all $t_0$, which is impossible.
Now consider two minimal enclosures $\V$ and $\W$.
The four matrix blocks supported by $\V \oplus\W$
can be represented as $\C^2\otimes \acute{\V}$, with $\acute{\V}\simeq \V$, together with the dynamics.
Suppose there were two different rotating coherences. Then we could change the evolution $\TT$ to $\TT_\omega$
as in (\ref{modifyt}), so that one of the coherences is stationary.
Now, if there were both a stationary and a rotating coherence between $\V$ and $\W$,
there were both a stationary and an oscillating state supported by $\acute{\V}$.
This can not happen for Quantum Dynamical Semigroups in continuous time.
\end{proof}

\section{Conclusion and remarks}

For Markovian time evolutions on Hilbert spaces with finite dimension
we have established a decomposition of the Hilbert space which
concerns the asymptotics of the evolution of states.
Although the evolution does not, in general, preserve orthogonality relations, this decomposition is a splitting
of the Hilbert space into orthogonal subspaces.
One of them is emptied out as time goes to infinity; each one of the other subspaces
is the range of one and onely one stationary state.
To describe this procedure for non-specialists, one may cite the example of Schr\"{o}dinger's cat,
which is in a box, together with an atom with unstable nucleus and a devilish machine \cite{S35}.
At  starting time this system is in a state with range in the subspace of decay.
Now we respect the fact that nothing is perfect
and include in our model a possible failure of the Geiger counter.
One of the possible stationary states appearing in the course of time
is the state where the nucleus had a radioactive decay, but the devilish machine
did not react, the cat stays alive.
The other stationary state is, alas, the decayed atomic nucleus together with the dead cat.
Our investigations give the following facts:
\begin{enumerate}
  \item The decaying state and each one of the stationary states live on mutually orthogonal subspaces.
  \item All coherences are decaying, the cat will definitely be either alive (because of a failure of the machine!) or dead.
\end{enumerate}
Decoherence between different states has to take place, unless the states look alike in all details.

So far  we have identified two important processes of Markovian Quantum Dynamics: \textbf{Decay and Decoherence}.
(This is what we have to emphasize concerning Schr\"{o}dinger's cat paradox: A radioactive decay is the dynamics
of an open system, namely of the atom's nucleus, coupled to the large
external system of Quantum fields, in spite of the box.)
Both of the other standard Markovian Quantum Processes go on inside each single enclosure,
where a stationary state is located: \textbf{Dissipation and Dephasing}.
The cat, whether dead or alive, has positive temperature.
In Statistical Mechanics such a Gibbs state is a mixed state.
(The second law of Thermodynamics might be a consequence of dissipative, mixing processes.
But explanations of these fundamental principles are still under discussion.)
The present investigation gave the following result:
Each state, especially each pure state, with range in one of the enclosures
dissipates into the unique stationary state, if time is continuous.
Only for processes in discrete time a perpetual oscillation may take place.

At this point we remark on naming states ``stationary'':
One speaks of a stationary flow of a fluid or gas, if the overall appearance
does not change in time, in spite of the motion of its parts.
This is the analogy  to a mixed state which gets not changed by the Markovian evolution.
The whole state does not change in time, but, when decomposed into a sum of pure states,
none of these parts is invariant, each pure states dissipates.

For Quantum Dynamics inside an enclosure we can identify Dephasing:
The unique stationary state can be diagonalized.
In this basis all off-diagonal matrix elements of other states have
to disappear in the course of time. This is true also in the case
of perpetual oscillations in discrete time.

Studying the Markovian semigroups does not exclude unitary evolutions.
The decomposition of Hilbert space is a generalization of the spectral
decomposition, and unitary evolution may be part of a Markovian evolution in two ways:
\begin{enumerate}
  \item On the Hilbert space there may be a subspace on which the time evolution is unitary.
In such a subspace the enclosures are one-dimensional, supporting eigenstates of ``energy''.
Decoherence does not take place.
  \item A subspace may be factorizable into two spaces, the dynamics is factorized into
a product of a unitary evolution on one factor-space with a non-invertible Markovian evolution
on the other one.
As an example one can think of a system containing two atoms, where one of the atoms
is decaying, the other one not.
\end{enumerate}

The state space structure investigated here considers the asymptotics of evolution as time goes to infinity.
In \cite{BN08} we have established a structure inside the decaying subspace also,
considering cascades of decay, a causal ordering in time.
This is done for Markovian evolutions in continuous time, described with GKS-Lindblad generators.
For evolutions in discrete time one can think of an analogous structure,
but the methods for analyzing still have to be found.
Such a structure of the decaying space would be essential for a description of
the complete asymptotics of the evolution of observables. In this paper
we analyzed only part of it, the restriction to the subspace supporting non-decaying states.
Completion is still a task to be done.


\end{document}